\newcommand{\keywords}[1]{\par\addvspace\baselineskip
\noindent\keywordname\enspace\ignorespaces#1}
\begin{document}

\mainmatter  
\title{Social Influence as a Voting System: a Complexity Analysis of Parameters and Properties\thanks{This work is partially supported by grant 2009SGR1137 (ALBCOM) of ``Generalitat de Catalunya''. X.~Molinero is partially funded by grant MTM2012--34426 of the ``Spanish Economy and Competitiveness Ministry''.  F.~Riquelme is supported by grant BecasChile of the ``National Commission for Scientific and Technological Research of Chile'' (CONICYT). M.~Serna is supported by grant TIN2007--66523 (FORMALISM)  of  the ``Ministerio de Ciencia e Inovaci\'on y el Fondo Europeo de Desarrollo Regional''.}}
\titlerunning{Social Influence as a Voting System} 

\author{Xavier Molinero\inst{1} \and Fabi\'an Riquelme\inst{2}
\and Maria Serna\inst{2}}
\institute{Dept. of Applied Mathematics III. UPC, Manresa, Spain.\\
\and Dept. de Llenguatges i Sistemes Inform\`atics. UPC, Barcelona, Spain. \\
\email{xavier.molinero@upc.edu, $\{$farisori,mjserna$\}$@lsi.upc.edu}
}
\date{}

\maketitle

\newcommand{\fa}{\forall}
\newcommand{\ex}{\exists}
\newcommand{\bac}{\backslash}

\newcommand{\seb}{\subseteq}
\newcommand{\sepq}{\supseteq}
\newcommand{\sqb}{\sqsubseteq}
\newcommand{\sqp}{\sqsupseteq}
\newcommand{\snb}{\sqsubset}
\newcommand{\snp}{\sqsupset}

\newcommand{\matN}{\mathbb N}
\newcommand{\matR}{\mathbb R}
\newcommand{\matS}{\mathbb S}
\newcommand{\matT}{\mathbb T}
\newcommand{\matH}{\mathbb H}
\newcommand{\matK}{\mathbb K}
\newcommand{\matI}{\mathbb I}
\newcommand{\matO}{\mathbb O}
\newcommand{\matZ}{\mathbb Z}

\newcommand{\cG}{{\cal G}}
\newcommand{\cP}{{\cal P}}
\newcommand{\cH}{{\cal H}}
\newcommand{\cJ}{{\cal J}}
\newcommand{\cK}{{\cal K}}
\newcommand{\cL}{{\cal L}}
\newcommand{\cW}{{\cal W}}
\newcommand{\cZ}{{\cal Z}}
\newcommand{\bG}{\bar{G}}
\renewcommand{\algorithmicrequire}{{\bf Input:}}
\renewcommand{\algorithmicensure}{{\bf Output:}}

\newcommand{\COMMENT}[1]{}
\newcommand{\card}[1]{|{#1}|}
\newtheorem{observation}{Observation}

\begin{abstract}
We consider a simple and altruistic multiagent system in which the agents are eager to perform a collective task but where their real engagement depends on  the willingness to perform the task of other influential agents. We model this scenario by an {\em influence game}, a cooperative simple game in which a team (or coalition) of  players succeeds if it is able to convince enough agents  to participate in the task (to vote in favor of a decision). We take the linear threshold model as the influence model. We show first the expressiveness of influence games showing that they capture the class of simple games. Then we characterize the computational complexity of various problems on influence games, including measures (length and width), values (Shapley-Shubik and Banzhaf) and properties (of teams and players). Finally, we analyze those problems for some particular extremal cases, with respect to the propagation of influence, showing tighter complexity characterizations.
\keywords{Spread of influence, Simple games, Influence games, Computational complexity}
\end{abstract}

\section{Introduction}\label{sec:intro}

Cooperation towards task execution when tasks cannot be performed by a single agent is one of the fundamental problems in both social and multiagent systems. There has been a lot of research understanding collective tasks allocation under different models coming from cooperative game theory. Under such framework, in general, cooperation is achieved by splitting the agents into teams so that each team performs a particular task and the pay-off of the team is split among the team members. Thus, cooperative game theory provides the fundamental tools to analyze this context. Among the many references we refer the reader to \cite{WD04,WD06,CEW11,MF11,BPR13,DNPS14}.

The ways in which people influence each other through their interactions in a social network has received a lot of attention in the last decade.
Social networks have become a huge interdisciplinary research area with important links to sociology, economics, epidemiology, computer science, and mathematics~\cite{Jac08,EK10,AM11,HS14} (players face the choice of adopting a specific product or not; users choose among competing programs from providers of mobile telephones, having the option to adopt more than one product at an extra cost; etc.).
A social network can be represented by a graph where each node is an agent  and each edge represents the degree of influence of one agent over another one. Several ``germs'' (ideas, trends, fashions, ambitions, rules, etc.) can be initiated by one or more agents and eventually be adopted by the system.
The mechanism defining how these motivations are propagated within the network, from the influence of a small set of initially {\em motivated} nodes, is called a model for {\em influence spread}.

Motivated by viral marketing and other applications the problem that has been usually studied is the {\em influence maximization problem} initially introduced by Domingos and Richardson \cite{DR01,RD02} and further developed in \cite{KKT03,EA11}. This problem addresses the question of finding a set with at most $k$ players having maximum influence, and it is {\sc NP}-hard \cite{DR01}, unless additional restrictions are considered, in which case some generality of the problem is lost \cite{RD02}.
Two general models for spread of influence were defined in~\cite{KKT03}: the {\em linear threshold model}, based in the first ideas of \cite{Gra78,Sch78}, and the {\em independent cascade model}, created in the context of marketing by \cite{GLM01,GLM01b}.
Models for influence spread in the presence of multiple competing products has also been proposed and analyzed \cite{BKS07,BFO10,AM11}. In such a setting there is also work done towards analyzing the problem from the point of view of non-cooperative game theory. Non-cooperative {\em influence games} were defined in 2011 by Irfan and Ortiz~\cite{IO11}. Those games, however, analyze the strategic aspects of two firms competing on the social network and differ from our proposal.

We propose to analyze cooperation in multiagent systems based on a model for influence among the agents in their established network of trust and influence. Social influence is relevant to determine the global behavior of a social network and thus it can be used to enforce cooperation by targeting an adequate initial set of agents. From this point of view we consider a simple and altruistic multiagent system in which the agents are eager to perform a collective task but where their real engagement depends on the perception of the willingness to perform the task of other influential agents. We model the scenario by an {\em influence game}, a cooperative simple game in which a team of players (or coalition) succeeds if it is able to convince enough number of agents to participate in the task. We take the deterministic linear threshold model \cite{Che09,AM11} as the mechanism for influence spread in the subjacent social network.

In the considered scenario we adopt the natural point of view of decision or voting systems, mathematically modeled as {\em simple games} \cite{vNM44}.
Simple games were firstly introduced in 1944 by von Neumann and Morgenstern \cite{vNM44} as a fundamental model for social choice. This point of view brings into the analysis several parameters and properties that play a relevant role in the study of simple games and thus in the analysis of the proposed scenario. Among those we consider the {\em length} and the  {\em width}, two fundamental parameters that are indicators of efficiency for making a decision \cite{Ram90}, or the Shapley-Shubik value ({\sc SSval}) and the Banzhaf value ({\sc Bval}) that provide a measure of the notion of individual influence. The properties defining {\em proper}, {\em strong} and {\em decisive} games have been considered in the context of simple game theory from its origins~\cite{TZ99} and they are also studied. Besides those properties we also consider {\em equivalence} and {\em isomorphism}. Together with properties of the games there are several properties associated to players that are of interest. Among others we consider the critical players which were used at least since 1965 by Banzhaf \cite{Ban65}. 
We refer the reader to \cite{TZ99} for a more complete motivation in a viewpoint of simple games and to \cite{Azi09,CEW11} for computational aspects of simple games and in general of cooperative game theory.

To define an influence game we take the spread of influence,  in the linear threshold model, as the value that measures the power of a team. An {\em influence game} is described by an  influence graph, modeling a social network, and a quota, indicating the required minimum number of agents that have to cooperate to perform successfully the task. Therefore, a team will be successful or winning if it can influence at least as many individuals as the quota establishes. Such an approach reveals the importance of the influence between some players over others in order to form successful teams. In this first analysis, we draw upon the deterministic version of the linear threshold model, in which node thresholds are fixed, as our model for influence spread following \cite{Che09,AM11}. It will be of interest to analyze influence games under other spreading models, in particular in the linear threshold model with random thresholds.

Our first result concerns the expressiveness of the family of influence games. We show that unweighted influence games capture the complete family of simple games. Although the construction can be computed in polynomial time when the simple game is given in extensive winning or minimal winning form, the number of winning or minimal winning coalitions is, in general, exponential in the number of players. Interestingly enough the formalization as weighted influence games allows a polynomial time implementation of the operations of intersection and union of weighted simple games, thus showing that, in several cases, simple games that do not admit a succinct representation as weighted games can be represented succinctly as influence games, because its (co)dimension is small. In our characterization we make use of a parameter, the minimum size $k$ for which all coalitions with $k$ members are winning, that turns out to be useful to show that the width of a simple game given in extensive winning or minimal winning form can be computed in polynomial time. This settles an open problem from \cite{Azi09}.

Our second set of results settles the complexity of the problems related to parameters and properties. Hardness results are obtained for unweighted influence games in which the number of agents in the network is polynomial in the number of players, while polynomial time algorithms are devised for general influence games. The new results are summarized in Table \ref{tab:results} as well as the known ones.

\begin{table}[ht]
\begin{center}
\begin{tabular}{|@{\,}l@{\,}|@{\,}c@{\,}|@{\,}c@{\,}|@{\,}c@{\,}|@{\,}c@{\,}|@{\,}c@{\,}|}\hline

& Extensive & Minimal & Weighted  & Multiple & Influence\\
& winning & winning & Games & weighted & games \\
& $(N,\cW)$     & $(N,\cW^m)$       & $[q;w_1,\dots,w_n]$  & games                                 &    $(G,w,f,q,N)$    \\\hline
{\sc Length}       & P             & P                 & P\cite{Azi09}            & NPH\cite{Azi09}     & {\bf NPH}   \\
{\sc Width}        & {\bf P}       & {\bf P}           & P\cite{Azi09}           & P\cite{Azi09}       & {\bf NPH}   \\\hline
{\sc Bval}         & P\cite{Azi08} & \#PC\cite{Azi08}  & \#PC\cite{DP94}    & \#PC                & {\bf \#PC}  \\
{\sc SSval}        & P\cite{Azi08} & \#PC\cite{Azi08}  & \#PC\cite{MM00}  & \#PC                & {\bf \#PC}  \\\hline
{\sc IsDummy}      & P             & P                 & coNPC\cite{MM00}    & coNPC\cite{MM00}    & {\bf coNPC} \\
{\sc IsPasser}     & P\cite{Azi08} & P\cite{Azi08} & P\cite{Azi08}           & P\cite{Azi08}       & {\bf P}     \\
{\sc IsVetoer}     & P\cite{Azi08} & P\cite{Azi08}     & P\cite{Azi08}       & P\cite{Azi08}       & {\bf P}     \\
{\sc IsDictator}   & P\cite{Azi08} & P\cite{Azi08}     & P\cite{Azi08}       & P\cite{Azi08}       & {\bf P}     \\
{\sc AreSymmetric} & P             & P                 & coNPC\cite{MM00}    & coNPC\cite{MM00}    & {\bf P}     \\\hline
{\sc IsCritical}   & P             & P                 & P                   & P                   & {\bf P}     \\
{\sc IsBlocking}   & P             & P                 & P                   & P                   & {\bf P}     \\
{\sc IsProper}     & P             & P                 & coNPC\cite{FMOS12}  & coNPC               & {\bf coNPC} \\
{\sc IsStrong}     & P             & coNPC\cite{PCPO02}& coNPC\cite{FMOS12}  & coNPC               & {\bf coNPC} \\
{\sc IsDecisive}   & P             & QP\cite{FK96}     & coNPC\cite{Azi09}   & coNPC               & {\bf coNPC} \\\hline
{\sc Equiv}        & P             & P                 & coNPC\cite{EGGW08}& coNPH\cite{EGGW08}& {\bf coNPH} \\
{\sc Iso}          & {\sc gIso}    & {\sc gIso}        & ?                   & ?                   & {\bf coNPH} \\\hline
\end{tabular}
\caption{Summary of new results (in bold face), known results  and trivial results (without reference).\label{tab:results}}
\end{center}
\end{table}

We refer the reader to Sections~\ref{sec:defs} and \ref{sec:par-pro} for a formal definition of all the representations mentioned in the first row and the problems in the first column of  Table \ref{tab:results}. There {\sc P} (polynomial time solvable), {\sc \#PC} (\#P-complete), {\sc NPH} (NP-hard), {\sc coNPH} (coNP-hard), {\sc coNPC} (coNP-complete),  {\sc QP} (quasi-polynomial time solvable) and  {\sc gIso} (the class of problems reducible to  graph isomorphism) are known computational complexity classes \cite{GJ79,Pap94}. The isomorphism problems for simple games, given either by $(N,\cW)$ or  $(N,\cW^m)$, is easily shown to be polynomially reducible to the graph isomorphism problem.  For  games given by $(N,\cW^m)$,  the {\sc Iso} problem and the graph isomorphism problem are equivalent using arguments from~\cite{Luk99}. 

Finally, we consider two extreme cases of influence spread in social networks for undirected and unweighted influence games. In a {\em maximum influence requirement}, agents adopt a behavior only when all its peers have already adopted it. This is opposed to a {\em minimum influence requirement} in which an agent gets convinced when at least one of its peers does. We show that, in both cases, the problems {\sc IsProper}, {\sc IsStrong} and {\sc IsDecisive}, as well as computing {\sc Width}, have polynomial time algorithms. Computing {\sc Length} is {\sc NP}-hard for maximum influence and polynomial time solvable for minimum influence.

\section{Definitions and Preliminaries}\label{sec:defs}

Before introducing formally the family of influence games  we need to define a family of labeled graphs and a process of spread of influence based on the {\em linear threshold model}~\cite{Gra78,Sch78}. We use standard graph notation following~\cite{Bol98}.  As usual, given a finite set $U$, $\cP(U)$ denotes its power set, and $|U|$ its cardinality. For any  $0\leq k\leq |U|$,  $\cP_k(U)$ denotes the subsets of $U$ with exactly $k$-elements.  For a given graph $G=(V,E)$ we assume that $n=|V|$ and $m=|E|$. Also $G[S]$ denotes the subgraph induced by $S\subseteq V$.

\begin{definition}\label{def:influence_graph}
An {\em influence graph} is a tuple $(G,w,f)$, where $G=(V,E)$ is a weighted, labeled and directed graph (without loops).
As usual $V$ is the set of vertices or agents, $E$ is the set of edges and $w:E\to\matN$ is a {\em weight function}. Finally,  $f:V\to\matN$ is a labeling function that quantifies how influenceable each agent is. An agent $i\in V$ has {\em influence} over another $j\in V$ if and only if $(i,j)\in E$. 
We also consider the family of {\em unweighted influence graphs} $(G, f)$ in which every edge has weight 1.
\end{definition}

Given an influence graph $(G,w,f)$ and an initial activation set $X\seb V$, the {\em spread of influence} of $X$ is the set  $F(X)\seb V$ which is
formed by the agents activated through an iterative process.  We use $F_k(X)$ to denote the set of nodes activated at step $k$.  Initially, at step 0, only the vertices in $X$ are activated, that is $F_0(X)=X$.
At  step $i>0$,  those vertices for which the sum of weights of the edges connecting  nodes in $F_{i-1}(X)$ to them meets or exceeds their label functions are activated, i.e.,
$$F_i(X) = F_{i-1}(X) \cup \{v\in V\mid \textstyle\sum_{\{u\in F_{i-1}(X)\mid(u,v)\in E\}} w((u,v))\geq f(v)\}.$$
The process stops when no additional activation occurs and the final set of activated nodes becomes $F(X)$.

\begin{example}\label{ex1}
Figure \ref{fig1} shows the spread of influence $F(X)$ in an unweighted influence graph $G=(V,f)$, with $V=\{a,b,c,d\}$, for the initial activation $X=\{a\}$.
In the first step we obtain $F_1(X)=\{a,c\}$, and in the second step (the last one) we obtain $F(X)=F_2(x)=\{a,c,d\}$.

\begin{figure}[t]
\centering
\begin{minipage}[b]{0.25\linewidth}
\begin{center}
\begin{tikzpicture}[every node/.style={circle,scale=0.8}, node distance=7mm, >=latex]
\node[scale=1] at (1,2.5){$F_0(X)=X=\{a\}$.};
\node[fill=blue!20](a) at (0,2)[label=left:$a$] {1};
\node[draw](b) at (2,2)[label=right:$b$] {1};
\node[draw](c) at (0,0)[label=left:$c$] {1};
\node[draw](d) at (2,0)[label=right:$d$] {2};
\draw[->] (a) to node {}(c);
\draw[->] (a) to node {}(d);
\draw[->] (b) to node {}(a);
\draw[->] (b) to node {}(d);
\draw[->] (c) to node {}(d);
\end{tikzpicture}
\end{center}
\end{minipage}
\qquad
\begin{minipage}[b]{0.25\linewidth}
\begin{center}
\begin{tikzpicture}[every node/.style={circle,scale=0.8}, node distance=7mm, >=latex]
\node[scale=1] at (1,2.5){$F_1(X)=\{a,c\}$.};
\node[fill=blue!20](a) at (0,2)[label=left:$a$] {1};
\node[draw](b) at (2,2)[label=right:$b$] {1};
\node[fill=blue!20](c)[label=left:$c$] at (0,0) {1};
\node[draw](d)[label=right:$d$] at (2,0) {2};
\draw[->] (a) to node {}(c);
\draw[->] (a) to node {}(d);
\draw[->] (b) to node {}(a);
\draw[->] (b) to node {}(d);
\draw[->] (c) to node {}(d);
\end{tikzpicture}
\end{center}
\end{minipage}
\qquad
\begin{minipage}[b]{0.25\linewidth}
\begin{center}
\begin{tikzpicture}[every node/.style={circle,scale=0.8}, node distance=7mm, >=latex]
\node[scale=1] at (1,2.5){$F(X)=F_2(X)=\{a,c,d\}$.};
\node[fill=blue!20](a) at (0,2)[label=left:$a$] {1};
\node[draw](b) at (2,2)[label=right:$b$] {1};
\node[fill=blue!20](c) at (0,0)[label=left:$c$] {1};
\node[fill=blue!20](d) at (2,0)[label=right:$d$] {2};
\draw[->] (a) to node {}(c);
\draw[->] (a) to node {}(d);
\draw[->] (b) to node {}(a);
\draw[->] (b) to node {}(d);
\draw[->] (c) to node {}(d);
\end{tikzpicture}
\end{center}
\end{minipage}
\caption{The spread of influence starting from the initial activation of $X=\{a\}$ on an unweighted graph.\label{fig1}}
\end{figure}
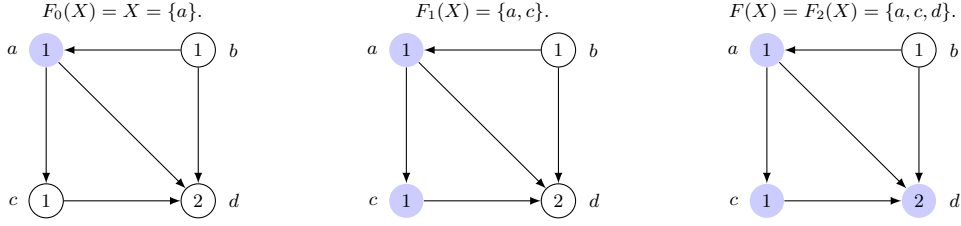
\end{example}

 As the number of vertices is finite,  for any $i>n$, $F_i(X)=F_{i-1}(X)$. Thus, $F(X)=F_n(X)$ and we have the following well known basic result.

\begin{lemma}\label{lem:Fpoly}
Given an influence graph $(G,w,f)$ and a set of vertices $X$, the set $F(X)$ can be computed in polynomial time.
\end{lemma}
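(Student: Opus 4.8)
The plan is to turn the iterative definition of $F(X)$ into an explicit fixed-point computation and to bound separately the number of iterations and the cost of each one. First I would record the key structural fact underlying everything: the sequence $F_0(X), F_1(X), F_2(X),\ldots$ is monotone non-decreasing. This is immediate from the defining recurrence, since each $F_i(X)$ is obtained as a union with $F_{i-1}(X)$, whence $F_{i-1}(X)\seb F_i(X)$ for every $i$. Monotonicity is what makes the whole process well behaved and terminating.

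Next I would show that the sequence stabilizes quickly. Observe that the activation rule at step $i$ depends only on $F_{i-1}(X)$; hence if $F_i(X)=F_{i-1}(X)$ for some $i$, then $F_{i+1}(X)=F_i(X)$, and by induction the sequence is constant from that point on. Combining this with monotonicity, as long as the process has not yet stabilized it must add at least one new vertex at each step. Since every $F_i(X)$ is a subset of $V$ and $|V|=n$, at most $n$ proper increases are possible, so the sequence stabilizes after at most $n$ steps. This is exactly the claim $F(X)=F_n(X)$ already noted in the text, and it is the only point of the argument that needs a line of care.

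Finally I would bound the per-step cost and conclude. Computing $F_i(X)$ from $F_{i-1}(X)$ amounts to deciding, for each vertex $v\in V$, whether $\textstyle\sum_{\{u\in F_{i-1}(X)\mid (u,v)\in E\}} w((u,v))\geq f(v)$; scanning the incoming edges of each vertex once, this takes $O(n+m)$ time per step under an adjacency-list representation, and since the weights $w$ and labels $f$ are part of the input the arithmetic is polynomial in the input size. Over the at most $n$ steps this yields an overall $O(n(n+m))$ bound, establishing the lemma. There is no genuine obstacle here — it is the standard monotone fixed-point computation — so the write-up effort goes entirely into stating the termination bound cleanly, which rests only on monotonicity together with the finiteness of $V$.
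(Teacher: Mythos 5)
Your proof is correct and follows essentially the same argument the paper relies on: the monotone sequence $F_0(X)\subseteq F_1(X)\subseteq\cdots$ must stabilize within $n$ steps since each non-final step adds a vertex of the finite set $V$, and each step is a single polynomial-time scan of the edges. The paper only sketches this (noting $F(X)=F_n(X)$ and calling the lemma well known), so your write-up is simply a fuller version of the same reasoning.
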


In what follows, unless otherwise stated, results and definitions will be stated for directed graphs. All of them can be restated for undirected graphs. Now we define {influence games}.

\begin{definition}\label{def:influence_game}
An {\em influence game} is given by a tuple $(G,w,f,q,N)$ where $(G,w,f)$ is an influence graph, $q$ is an integer {\em quota}, $0\leq q\leq |V|+1$, and $N\seb V$ is the {\em set of players}. $X\seb N$ is a {\em successful team} if and only if $|F(X)|\geq q$, otherwise $X$ is an {\em unsuccessful team}.
\end{definition}

As it was done for influence graphs, we also consider the family of {\em unweighted influence games} for the cases in which the graph $G$ is unweighted. In such a case we use the notation $(G,f,q,N)$.

An influence game adopts a correspondence with a mathematical model called simple game, where teams are coalitions, and some agents  act as  players.
Simple games were firstly introduced in 1944 by Neumann and Morgenstern~\cite{vNM44}, but using the definition that corresponds to the so called {\em strong games}, which will be defined later. The first definition as we understand simple games was given in 1953 by Gillies~\cite{Gil53}.
In this scenario of simple games, we follow definitions and notation from \cite{TZ99}.
A family of subsets $\cW\seb\cP(N)$ is said {\em monotonic} when, for any  $X\in\cW$ and $Z\in\cP(N)$,  if $X\seb Z$ , then $Z\in\cW$.

\begin{definition}\label{def:SG}
A {\em simple game} $\Gamma$ is given by a tuple $(N,\cW)$ where $N$ is a finite set of players and $\cW$ is a monotonic family of subsets of $N$ formed by the {\em winning coalitions} ({\em successful teams}).
\end{definition}

In the context of simple games, the subsets of $N$ are called {\em coalitions}, $N$ is the {\em grand coalition} and $X\in\cW$ is a {\em winning coalition}.
Any subset of $N$ which is not a winning coalition is called  a {\em losing coalition} (an {\em unsuccessful team}).
A {\em minimal winning coalition} is a winning coalition $X$ that does not properly contain  any winning coalition. That is, removing any player from $X$ results in a losing coalition. A {\em maximal losing coalition} is a losing coalition $X$ that is not properly contained in  any other losing coalition. That is, adding any  player to $X$ results in a winning coalition. We use $\cW$, $\cL$, $\cW^m$ and $\cL^M$ to denote the sets of winning, losing, minimal winning and maximal losing coalitions, respectively. Any of those set families determine uniquely the game and constitute the usual forms of representation for simple games~\cite{TZ99}, although the size of those representations are not, in general, polynomial in the number of players.

\begin{example}
Let $(G,f)$ be an influence graph  and $N$ any subset of agents.  Two particular ranges of the quota lead to some trivial simple games. 
By setting $q=0$, thus considering influence games of the form $(G,f,0,N)$,  we have that every team of agents is  successful, therefore $(G,f,0,N)$ is a representation of the simple game $(N,\cP(N))$. 
When $q>|V(G)|$, the influence game $(G,f,q,N)$ is a representation of the simple game $(N,\emptyset)$  as there are no successful teams in the game.
\end{example}

Let us provide an example of influence game based on the influence graph considered in Example \ref{ex1}.
\begin{example}
Consider the influence game $\Gamma=(G,f,3,V(G))$, where $(G,f)$ is the influence graph considered in Example \ref{ex1}. In this case, we have that, $F(\{a\})=\{a,c,d\}$, and thus $\{a\}\in\cW$. 
The fundamental set families for $\Gamma$ are:
\[
\cW^m   =  \{\{a\},\{b\}\} \quad
\cL^M     =  \{\{c,d\}\} \quad
\cL         = \{ \{c,d\}, \{c\}, \{d\}, \{\}\}\quad
\cW      = \cP(V(G))\setminus \cL.
\]
\end{example}

The {\em intersection} of two simple games is the simple game where a coalition wins if and only if it wins in both games.
In a similar way, the {\em union} of two simple games is the simple game where a coalition wins if and only if it wins in at least one of the two games~\cite{TZ99}.  

Finally,  we introduce a subfamily of simple games, the \emph{weighted games}. 
\begin{definition}\label{def:WG}
A simple game $(N,\cW)$ is a {\em weighted game} (also called {\em weighted voting game}) if there exists a {\em weight function} $w:N\to\matR^+$ and a {\em quota} $q\in\matR$, such that, for any $ X\seb N$, $X\in\cW$ if and only if $w(X)\geq q$, where $w(X)=\sum_{i\in X}w(i)$.
\end{definition}

A weighted game can be represented by a {\em weighted representation}, i.e., a vector $[q;w_1,\ldots,w_n]$ where $w_i=w(i)$, for any $i\in N$, and $q\in\matR$ is a quota, defining the simple game in which $S\in\cW$ if and only if $w(S)\geq q$.
According to Hu~\cite{Hu65} (see also~\cite{FM09}) the weighted representations can be restricted to integer non-negative weights with  $0\leq q\leq w(N)$.
{\em Multiple weighted games} are simple games defined by the intersection of a finite collection of weighted games.

Despite the fact that weighted games are a strict subclass of simple games, it is known that every simple game can be expressed as an intersection or an union of a finite number of weighted games. The result for intersection ({\em dimension} concept) was firstly shown in~\cite{Jer75} for hypergraphs, and then expressed for simple games in~\cite{TZ93}. The result for union ({\em codimension} concept) was introduced for simple games in~\cite{FM09a}.
A simple game is said to be of {\em dimension} ({\em codimension}) $k$ if and only if it
can be represented as the intersection (union) of exactly $k$ weighted games, but not
as the intersection (union) of $(k - 1)$ weighted games. It is known that given $k$ weighted games, to decide whether the dimension of their intersection exactly equals $k$ is {\sc NP}-hard~\cite{DW06}.
A generalization of games constructed through binary operators is the family of \emph{boolean weighted}  games  introduced in \cite{FEW09}.  A boolean weighted game is defined by a propositional logic formula  and a finite collection of weighted games. The boolean formula determines the requirements for a coalition to be winning in the described game. When considering only monotone formulas, boolean weighted games provide another representation of simple games.  

\section{Expressiveness}\label{sec:IG}

Influence games are monotonic as, for any $X\seb N$ and  $i\in N$, if $|F(X)|\geq q$ then $|F(X\cup\{i\})|\geq q$, and if $|F(X)|<q$ then $|F(X\bac\{i\})|<q$. Thus, every influence game is a simple game. Moreover, we will show that the opposite is also true. Before stating the main theorem we need the definition of a new measure over simple games.

\begin{definition}
The {\em strict length} of a simple game $\Gamma$ is $\mbox{\sc sLength}(\Gamma)=\min\{k\in\mathbb{N}\mid  \cP_k(N) \subseteq \cW\}$.
\end{definition}

The measure {\sc sLength} will be considered later in Section~\ref{sec:par-pro}  together with other  measures for simple games.

\begin{theorem}\label{the:SG-Inf}\label{lem:slength-min-win}
Every simple game can be represented by an unweighted influence game.
Furthermore, when the simple game $\Gamma$  is given by either $(N,\cW)$ or $(N,\cW^m)$, an  unweighted influence game representing $\Gamma$  can be obtained in polynomial time.
\end{theorem}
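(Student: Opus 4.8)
The plan is to realize an arbitrary simple game by a layered ``AND/OR'' gadget whose spread of influence certifies membership in $\cW$. Fix a simple game $\Gamma=(N,\cW)$ and work with its minimal winning coalitions $\cW^m=\{M_1,\dots,M_r\}$. I would take as vertex set of $G$ the players $N$ together with three auxiliary layers: one \emph{coalition vertex} $c_j$ for each $M_j\in\cW^m$; a single \emph{trigger} vertex $t$; and a \emph{pool} $P$ of $|N|+1$ fresh vertices $p_1,\dots,p_{|N|+1}$. The edges run only forward between layers: every player $i\in M_j$ points to $c_j$, every $c_j$ points to $t$, and $t$ points to every $p_\ell$. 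For the labels I set $f(c_j)=|M_j|$, $f(t)=1$ and $f(p_\ell)=1$, giving the players any label (they have in-degree $0$, so their value is irrelevant). The player set of the resulting influence game is $N$ and the quota is $q=|N|+1$, which is legal since $N\seb V$ gives $q\le|V|+1$.

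The heart of the argument is the equivalence, for every $X\seb N$, that $X$ is successful if and only if $X\in\cW$. Because all edges run forward through the four layers, the spread is a one-pass cascade on a DAG and I would analyze it layer by layer. Since no edge enters a player, $F(X)\cap N=X$, so the activated in-neighbours of $c_j$ are exactly $M_j\cap X$; as the graph is unweighted, $c_j$ is activated precisely when $|M_j\cap X|\ge|M_j|$, i.e.\ when $M_j\seb X$. Hence some $c_j$ activates iff $X$ contains a minimal winning coalition, which by monotonicity holds iff $X\in\cW$. When it does, $t$ and then all of $P$ activate, so $|F(X)|\ge|P|=|N|+1=q$ and $X$ is successful; when it does not, $F(X)=X$, so $|F(X)|=|X|\le|N|<q$ and $X$ is unsuccessful. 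This yields exactly $\cW$ as the family of successful teams, establishing the first assertion.

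For the complexity claim I would observe that the construction is purely syntactic once $\cW^m$ is available. If $\Gamma$ is given as $(N,\cW^m)$ the gadget has $|N|+r+1+(|N|+1)$ vertices and $O(r\,|N|)$ edges, all polynomial in the input. If $\Gamma$ is given as $(N,\cW)$, I would first extract $\cW^m$ in polynomial time by testing each $W\in\cW$ for minimality (checking $W\bac\{i\}\notin\cW$ for every $i\in W$ via membership queries against the listed $\cW$) and then apply the same construction. The only degenerate point to treat explicitly is $\cW=\cP(N)$, where $\emptyset\in\cW^m$ forces a coalition vertex of label $0$: such a vertex meets its threshold vacuously and activates at the first step for every $X$, correctly making all teams (including $\emptyset$) successful; symmetrically $\cW=\emptyset$ produces no coalition vertex, so $t$ and $P$ never activate and no team reaches the quota.

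I expect the main obstacle to lie in the bookkeeping of the correctness proof rather than in any deep idea: one must verify that the pool size and the quota are matched so that the bound $|F(X)|\le|N|$ for losing teams and the jump $|F(X)|\ge|N|+1$ for winning teams separate cleanly, and that the layered activation order is exactly as claimed, with no spurious feedback into $N$ and with the threshold-$0$ and empty-family boundary cases handled. It is worth stressing that for a general simple game $\cW^m$ may be exponentially large, so this construction certifies only the \emph{existence} of a representing unweighted influence game; the polynomial bound is guaranteed solely when the input already lists $\cW$ or $\cW^m$.
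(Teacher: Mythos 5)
Your proof is correct, and it takes a genuinely different route from the paper's---one that, as it happens, is on firmer ground. The paper also attaches one gadget per minimal winning coalition $X\in\cW^m$ with threshold $|X|$, but instead of your OR-gate-plus-pool stage it gives the gadget for $X$ exactly $\mbox{\sc sLength}(\Gamma)-|X|$ fresh vertices and sets the quota to $\mbox{\sc sLength}(\Gamma)$: a winning team of any size then activates at least $\mbox{\sc sLength}(\Gamma)$ vertices, while a losing team, having at most $\mbox{\sc sLength}(\Gamma)-1$ members, activates nothing beyond itself. The price is that the construction---and hence the polynomial-time claim---hinges on computing $\mbox{\sc sLength}(\Gamma)$ from $\cW^m$, which the paper justifies by asserting that $\mbox{\sc sLength}(\Gamma)$ equals the maximum size of a minimal winning coalition. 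That assertion is false in general: for $N=\{1,2,3\}$ and $\cW^m=\{\{1\}\}$ the maximum minimal-winning size is $1$, yet $\{2,3\}$ is losing, so $\mbox{\sc sLength}(\Gamma)=3$. Worse, taking $\cW^m$ to be the edge set of a graph makes $\mbox{\sc sLength}(\Gamma)-1$ the independence number, so this parameter cannot be extracted from $(N,\cW^m)$ in polynomial time unless ${\rm P}={\rm NP}$. Your cascade needs no game-dependent parameter beyond $\cW^m$ itself (the quota $|N|+1$ and the pool size are fixed in advance), so it establishes both assertions of the theorem cleanly and with comparable vertex count; what it does not deliver is the by-product the paper wanted from its tuned quota, namely the polynomial-time computation of $\mbox{\sc sLength}$ and \textsc{Width} from $(N,\cW^m)$ invoked in Theorem~\ref{the:width-hard}---but by the above, that by-product is unobtainable anyway.

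One small repair to your write-up: the players' labels are \emph{not} irrelevant. Under the activation rule, a vertex whose label is $0$ fires vacuously at the first step (this is precisely the effect you invoke for the coalition vertex when $\emptyset\in\cW^m$), so a player labeled $0$ would activate itself even when it lies outside $X$, breaking the equality $F(X)\cap N=X$. You need $f(i)\geq 1$ for every player $i$; taking $f(i)=1$, as the paper does, costs nothing. Relatedly, if one insists that labels be strictly positive (so the label-$0$ gadget is unavailable), the degenerate case $\cW=\cP(N)$ can instead be handled by setting the quota to $0$, which Definition~\ref{def:influence_game} explicitly permits.
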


\begin{proof}
Assume that a simple game $\Gamma$ is given by $(N,\cW)$ or $(N,\cW^m)$. It is already well known that given  $(N,\cW)$, the family $\cW^m$ can be obtained in polynomial time. Thus we assume in the following that the set of players and the set $\cW^m$ are given.

In order to represent $\Gamma$ as an influence game  we first define  an unweighted influence graph $(G,f)$.  The graph $G=(V,E)$  is the following.
The set $V$ of nodes is formed by a set with $n$ nodes, $V_N=\{v_1,\dots,v_n\}$, one for each player, and a set of nodes for each minimal winning coalition.  For any  $ X\in\cW^m$, we add  a new set  $V_X$ with  $\mbox{\sc sLength}(\Gamma)-|X|$ nodes. We connect vertex $v_i$ with all the vertices in $V_X$ whenever $i\in X$.  Finally, the label function is  defined as follows, for any $1\leq i\leq n$, $f(v_i)=1$ and, for any $X\in\cW^m$ and any $v\in V_X$, $f(v)=|X|$.
 Observe that in the influence game $(G,f,\mbox{\sc sLength}(\Gamma),V_N)$ a team is successful if and only if its players form a  winning coalition in $\Gamma$. Therefore $(G,f,\mbox{\sc sLength}(\Gamma),V_N)$ is a representation of $\Gamma$ as  unweighted influence game. 

It remains to show that given $(N,\cW^m)$ a description of  $(G,f,\mbox{\sc sLength}(\Gamma),N)$ can be computed in polynomial time. For doing so it is enough to show that $\mbox{\sc sLength}(\Gamma)$ can be computed in polynomial time. Let $k=\mbox{\sc sLength}(\Gamma)$.

Observe that, by definition,  all the coalitions  with  $k$ players are winning in $\Gamma$ but  at least one coalition with size  $k-1$ is losing. Therefore there is a minimal winning coalition with size $k$ and there are no minimal winning coalitions with size $k+1$. Thus, computing $k$ is equivalent to compute the maximum size of a minimal winning coalition. The last quantity can be obtained in polynomial time from a description of $\cW^m$. 
\end{proof}

The following example provides an illustration of the construction.

\begin{example}
Let $\Gamma=(\{1,2,3,4\},\{\{1,2,4\},\{2,3\},\{3,4\}\})$ be a simple game in minimal winning form.
We have that $\mbox{\sc sLength}(\Gamma)=3$ because all subsets of $N$ with cardinality $3$ are winning, i.e., we have that the family $\{1,2,3\},\{1,2,4\},\{1,3,4\},\{2,3,4\}\in\cW$.
For coalition $\{1,2,4\}$ we do not need to add nodes to the graph. For each of the teams $\{2,3\}$ and $\{3,4\}$, we need to add one node with label $3-2=1$. A drawing of the resulting unweighted influence graph is given in Figure~\ref{fig:Ex2}.

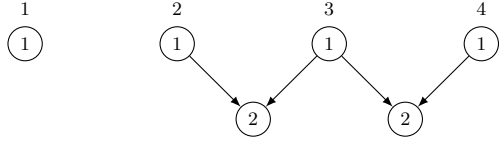
\begin{figure}[t]
\centering
\begin{tikzpicture}[every node/.style={circle,scale=0.8}, >=latex]
\node[draw](a) at (1,1)[label=above:$1$] {1};
\node[draw](b) at (3,1)[label=above:$2$] {1};
\node[draw](c) at (5,1)[label=above:$3$] {1};
\node[draw](d) at (7,1)[label=above:$4$] {1};

\node[draw](g) at (4,0) {2};
\node[draw](h) at (6,0) {2};

\draw[->] (b) to node {}(g);
\draw[->] (c) to node {}(g);
\draw[->] (c) to node {}(h);
\draw[->] (d) to node {}(h);
\end{tikzpicture}
\caption{An unweighted influence graph associated to  the simple game $(\{1,2,3,4\},\{\{1,2,4\},\{2,3\},\{3,4\}\})$.\label{fig:Ex2}}
\end{figure}
\end{example}

The proof of Theorem \ref{the:SG-Inf} shows the expressiveness of the family of influence games with respect to the class of simple games.
However, the construction cannot be implemented in polynomial time when the simple game is given in succinct way, as for instance by a weighted representation or as a monotonic Boolean function. Observe also that the number of agents in the corresponding influence game is in general exponential in the number of players. 
For the particular case of weighted  games,  with given weighted representation, we can show that there exist representations by influence games having a polynomial number of agents.

\begin{theorem}\label{the:WG-Inf}
Every weighted game can be represented as an influence game. Furthermore, given a weighted representation of the game, a representation as an influence game can be obtained in polynomial time.
\end{theorem}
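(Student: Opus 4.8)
The plan is to exhibit a single weighted influence graph in which the spread of influence of a coalition encodes its total weight, and then choose the influence quota so that exactly the winning coalitions activate enough vertices. First I would invoke Hu's normalization~\cite{Hu65} to assume the given representation $[q;w_1,\dots,w_n]$ has integer weights $w_i\in\matN$ and integer quota with $0\le q\le w(N)$; in particular the grand coalition is then winning, and the only fully trivial case, the empty game, is disposed of as in the examples of Section~\ref{sec:defs} by using a quota larger than the number of vertices. I would then build $(G,w,f)$ with three kinds of vertices: one \emph{player node} $v_i$ with $f(v_i)=1$ for each $i\in N$ (these form the player set); a single \emph{collector node} $t$ with threshold $f(t)=q$, joined to $t$ by an edge $(v_i,t)$ of weight $w_i$ for every $i$; and $n$ \emph{amplifier nodes} $u_1,\dots,u_n$, each with $f(u_j)=1$ and an incoming edge $(t,u_j)$ of weight $1$. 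The resulting influence game is $(G,w,f,q',N)$ with $N=\{v_1,\dots,v_n\}$ and $q'=n+1$.

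The hard part will be to ensure that the number of activated vertices actually separates winning from losing coalitions, rather than merely reflecting coalition size: a priori a large losing coalition could activate more player nodes than a small winning one. The amplifier gadget is introduced precisely to defeat this. Since player nodes have no incoming edges, for any $X\seb N$ no player node outside $X$ is ever activated, and the collector $t$ is activated at the following step exactly when $\sum_{i\in X}w_i=w(X)\ge q=f(t)$, that is, exactly when $X$ is winning. Once $t$ fires, all $n$ amplifiers fire; if $t$ never fires, no amplifier does. I would record this as $|F(X)|=|X|$ when $X$ is losing and $|F(X)|=|X|+n+1$ when $X$ is winning.

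Finally I would verify the separation and the running time. For a losing coalition $|F(X)|=|X|\le n<q'$, while for a winning coalition $|F(X)|=|X|+n+1\ge n+1=q'$, so $|F(X)|\ge q'$ holds if and only if $X$ is winning; hence $(G,w,f,q',N)$ represents the weighted game (the degenerate quota $q=0$ stays consistent, as then even the empty coalition reaches $q'=n+1$). The graph has $2n+1$ vertices and $2n$ edges whose weights and thresholds are copied directly from the input, so it is constructed in polynomial time, and by Lemma~\ref{lem:Fpoly} the associated game is well defined. I expect no essential difficulty beyond getting the amplifier count and the quota $q'$ right.
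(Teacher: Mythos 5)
Your construction is exactly the paper's: player nodes with label $1$, a central node with label $q$ receiving edges of weight $w_i$, a layer of $n$ label-$1$ nodes fed from it, and quota $n+1$, so that $|F(X)|\ge n+1$ if and only if $w(X)\ge q$. The proof is correct and essentially identical to the paper's, with your explicit case analysis ($|F(X)|=|X|$ versus $|X|+n+1$) and the degenerate-quota check being minor elaborations the paper leaves implicit.
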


\begin{figure}[t]
\centering
\begin{tikzpicture}[every node/.style={circle,scale=0.6}, >=latex]
\node[draw](a)   at (3,2)  {1};
\node[scale=2]   at (4,2){$\ldots$};
\node[draw](c)   at (5,2)  {1};

\node[draw,scale=1.5](m)   at (4,1) {$q$};

\draw[->] (a) to node[label=left:{\Large $w_1$\ \ }] {}(m);
\draw[->] (c) to node[label=right:{\ \ \ \Large $w_n$}] {}(m);

\node[scale=2,left] at (7,2){\small{$n$ nodes}};
\node[scale=2,left] at (7,0){\small{$n$ nodes}};
\node[scale=2,left] at (7,-1){$\ $};

\node[draw](x)   at (3,0)  {1};
\node[scale=2]   at (4,0){$\ldots$};
\node[draw](z)   at (5,0)  {1};

\draw[->] (m) to node[label=left:{\Large $1$\ \ \ }] {}(x);
\draw[->] (m) to node[label=right:{\ \ \ \Large $1$}] {}(z);
\end{tikzpicture}
\vspace{-5ex}\caption{An influence graph $(G,w,f)$ associated to the weighted game $[q;w_1,\ldots,w_n]$.\label{fig:Wei-Inf}}
\end{figure}
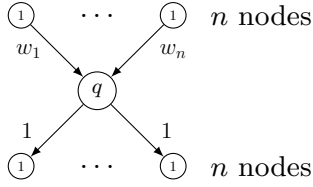

\begin{proof}
Let $[q;w_1,\ldots,w_n]$ be a weighted game, consider the influence game $(G,w,f,n+1,N)$, whose influence graph is shown in Figure~\ref{fig:Wei-Inf}.
The $n$ nodes in the first level of $G$ correspond to the set $N$, each of them has as associated label the value $1$.
Each node $i\in N$ is connected to a central  node with label $q$ and  the corresponding edge has weight $w_i$.
The $n$ nodes in the last level are another set of $n$ nodes with label $1$.
Observe that, $X\seb N$ is a winning coalition in  $[q;w_1,\ldots,w_n]$ if and only if $\sum_{i\in X}w_i\geq q$. The last condition is equivalent to  $|F(X)|\ge n+1$.Thus we have that  $X\seb N$ is a winning coalition in  $[q;w_1,\ldots,w_n]$ if and only if  $X\seb N$ is a winning coalition in $(G,w,f,n+1,N)$. 

Finally, observe that the construction of $(G,w,f,n+1,N)$ can be done in polynomial time with respect to the size of  $[q;w_1,\ldots,w_n]$.
\end{proof}

Observe that in the previous construction the size of the influence graph is polynomial in the number of agents but the overall construction is done in polynomial time in the size of the  weighted representation.  We can change slightly the construction and get a representation as unweighted influence game by increasing again the proportion of players. 

\begin{theorem}\label{the:WG-UIG}
Every weighted game can be represented as an unweighted influence game.
Furthermore, given a weighted representation of the game, a representation as  unweighted influence game can be obtained in pseudo-polynomial time.
\end{theorem}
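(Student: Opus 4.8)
The plan is to adapt the weighted construction of Theorem~\ref{the:WG-Inf} by simulating each weighted edge with a bundle of unit-weight parallel edges routed through fresh auxiliary nodes, paying for this with an increase in the number of agents. First I would keep the $n$ player nodes $v_1,\dots,v_n$ with label $1$, which form $N$, together with a single central node $c$ with label $q$. To transmit the weight $w_i$ of player $i$ using only unit weights, I would attach to $v_i$ a set $A_i$ of $w_i$ auxiliary nodes, each with label $1$, adding an edge $v_i\to a$ and an edge $a\to c$ for every $a\in A_i$. Then activating $v_i$ activates all of $A_i$ (each auxiliary node has a single in-neighbour and label $1$), and each active auxiliary node contributes exactly one unit towards the label $q$ of $c$. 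Hence $c$ becomes active if and only if the number of active auxiliary nodes reaches $q$, i.e.\ exactly when $\sum_{i\in X}w_i\geq q$, which is exactly the condition that $X$ be winning in $[q;w_1,\dots,w_n]$.

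The second ingredient is a ``blow-up'' stage that converts the activation of $c$ into a large jump in the number of active vertices, playing the role of the sink nodes in Figure~\ref{fig:Wei-Inf}. Let $W=\sum_{i=1}^n w_i$ and attach to $c$ a set $B$ of $M=n+W$ sink nodes, each with label $1$, adding an edge $c\to b$ for every $b\in B$. The resulting influence graph $(G,f)$ is unweighted, and I would take the influence game $(G,f,\,n+W+1,\,N)$.

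The key step, and the one where care is needed, is the choice of quota, because the auxiliary nodes activate whenever their player does, independently of whether $c$ fires, and so they contaminate the count $|F(X)|$. The verification splits into two cases. If $X$ is losing, then $\sum_{i\in X}w_i<q\leq W$ (using the normalization $q\le w(N)$ of weighted representations), so $c$ and all of $B$ remain inactive, and $F(X)$ contains only the players of $X$ and their auxiliary nodes; thus $|F(X)|\leq |X|+\sum_{i\in X}w_i\leq n+W-1<n+W+1$, and $X$ is unsuccessful. If $X$ is winning, then $c$ fires and in turn activates every node of $B$, giving $|F(X)|\geq 1+M=n+W+1$, so $X$ is successful. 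Therefore a team is successful in $(G,f,n+W+1,N)$ if and only if the corresponding coalition is winning in $[q;w_1,\dots,w_n]$, which is the desired representation.

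For the complexity claim I would observe that the graph has $n+W+1+M=2n+2W+1$ nodes and $O(W)$ edges, and the whole labeling is computed directly from the weights. Since the weights are encoded in binary, $W=\sum_i w_i$ may be exponential in the bit-length of the representation, yet it is polynomial in the magnitude of $w_1,\dots,w_n$; hence the construction runs in pseudo-polynomial time. I expect the main obstacle to be exactly the contamination of $|F(X)|$ by the auxiliary nodes: unlike the purely weighted construction, the quota $n+1$ no longer separates winning from losing teams, and the remedy is to enlarge the blow-up stage to $M=n+W$ sinks so that firing $c$ always produces a count strictly larger than anything achievable without firing $c$, which restores the clean threshold behaviour.
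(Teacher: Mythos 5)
Your construction is exactly the paper's (Figure~\ref{fig:WG-Inf}): players with label $1$, a layer of $w_i$ unit-label nodes per player feeding a central node with label $q$, and a large sink layer, differing only in the harmless choice of quota ($n+W+1$ with $n+W$ sinks versus the paper's quota $n+W$ with $n+W$ sinks). Your case analysis and the pseudo-polynomial time claim are correct, so this is essentially the same proof.
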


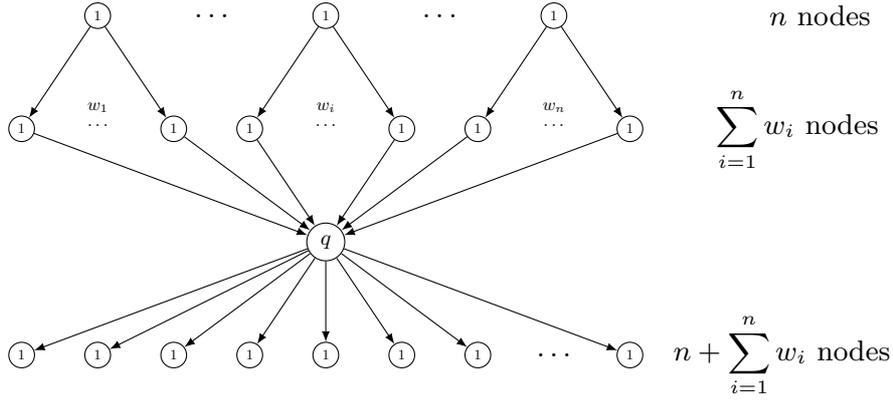
\begin{figure}[t]
\centering
\begin{tikzpicture}[every node/.style={circle,scale=0.6}, >=latex]
\node[draw](a)   at (1,4.5)  {1};
\node[scale=2]   at (2.5,4.5){$\ldots$};
\node[draw](b)   at (4,4.5)  {1};
\node[scale=2]   at (5.5,4.5){$\ldots$};
\node[draw](c)   at (7,4.5)  {1};

\node[draw](d)   at (0,3)  {1};
\node[scale=1.2] at (1,3.2){{$\begin{array}{c}w_1\\[-.1cm] \ldots\end{array}$}};
\node[draw](g)   at (2,3)  {1};
\node[draw](h)   at (3,3)  {1};
\node[scale=1.2] at (4,3.2){{$\begin{array}{c}w_i\\[-.1cm] \ldots\end{array}$}};
\node[draw](j)   at (5,3)  {1};
\node[draw](k)   at (6,3)  {1};
\node[scale=1.2] at (7,3.2){{$\begin{array}{c}w_n\\[-.1cm] \ldots\end{array}$}};
\node[draw](l)   at (8,3)  {1};

\node[draw,scale=1.5](m) at (4,1.5) {$q$};

\node[draw](n)   at (0,0)  {1};
\node[draw](o)   at (1,0)  {1};
\node[draw](p)   at (2,0)  {1};
\node[draw](q)   at (3,0)  {1};
\node[draw](r)   at (4,0)  {1};
\node[draw](s)   at (5,0)  {1};
\node[draw](t)   at (6,0)  {1};
\node[scale=2]   at (7,0)  {$\ldots$};
\node[draw](u)   at (8,0)  {1};

\draw[->] (a) to node {}(d);
\draw[->] (a) to node {}(g);
\draw[->] (b) to node {}(h);
\draw[->] (b) to node {}(j);
\draw[->] (c) to node {}(k);
\draw[->] (c) to node {}(l);
\draw[->] (d) to node {}(m);
\draw[->] (g) to node {}(m);
\draw[->] (h) to node {}(m);
\draw[->] (j) to node {}(m);
\draw[->] (k) to node {}(m);
\draw[->] (l) to node {}(m);
\draw[->] (m) to node {}(n);
\draw[->] (m) to node {}(o);
\draw[->] (m) to node {}(p);
\draw[->] (m) to node {}(q);
\draw[->] (m) to node {}(r);
\draw[->] (m) to node {}(s);
\draw[->] (m) to node {}(t);
\draw[->] (m) to node {}(u);

\node[scale=2] at (10.5,4.5){\small{$n$ nodes}};
\node[scale=2] at (10.2,3){\small{$\displaystyle\sum^n_{i=1}w_i$ nodes}};
\node[scale=2] at (10,0){\small{$n+\displaystyle\sum^n_{i=1}w_i$ nodes}};
\end{tikzpicture}
\vspace{-5ex}\caption{An unweighted influence graph $(G,f)$ associated to  the weighted game $[q;w_1,\ldots,w_n]$.\label{fig:WG-Inf}}
\end{figure}

\begin{proof}
Let $[q;w_1,\ldots,w_n]$ be a weighted game, consider the unweighted influence graph $(G,f)$ sketched in Figure~\ref{fig:WG-Inf}.
The $n$ nodes in the first level correspond to the set $N$.
For any $i\in N$, node $i$ is connected to a set of $w_i$ different nodes in the second level representing its weight.
Thus,  $X\seb N$ is a winning coalition if and only if $\sum_{i\in X}w_i\geq q$, which is equivalent to $|F(X)|\ge{}n+\sum_{i=1}^n w_i$.
Therefore, the influence game $(G,f,n+\sum_{i=1}^n w_i,N)$ is a representation of   the given weighted game.

Observe that given $[q;w_1,\ldots,w_n]$, constructing the graph $G$ requires time $O(n+ w_1+\cdots+w_n)$ and thus the construction can be done in pseudo-polynomial time.  
\end{proof}

In the previous results we have assumed that a weighted representation of the game is given. It is  known that there are weighted games whose  weighted representation  requires that $\max_{i\in{}N}\{w_i\}$ to be $(n+1)^{(n+1)/2}/2$~\cite{Par94}. Therefore the construction of the previous lemma will require exponential space and time with respect to the number of players.

Our next result establishes the closure of influence games under intersection and union. Furthermore, we show that an influence game representing the resulting simple game can be obtained in polynomial time.

\begin{theorem}\label{the:SG-UIG}
Given two influence games, their intersection and union  can be represented as an influence game.
Furthermore, both constructions can be obtained in polynomial time.
\end{theorem}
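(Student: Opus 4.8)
The plan is to place the two influence games side by side over a common layer of player nodes, and then append a small Boolean gadget that reads off whether each game is won and combines the two outcomes with an AND (for intersection) or an OR (for union), finally amplifying the resulting single bit so that it alone decides whether the global quota is met. Let $\Gamma_1=(G_1,w_1,f_1,q_1,N)$ and $\Gamma_2=(G_2,w_2,f_2,q_2,N)$ be the inputs, with $G_j=(V_j,E_j)$ and $N\seb V_j$; since union and intersection are only defined for games on the same player set, the players are identified across the two games.

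First I would build the combined influence graph. I take disjoint copies of $G_1$ and $G_2$, keeping all internal edges, weights and labels, and write $v_i^1,v_i^2$ for the two copies of player $i$. I add a fresh layer of $n$ master player nodes $u_1,\dots,u_n$, which will be the players $N$ of the combined game, and for each $i$ I put edges $u_i\to v_i^1$ and $u_i\to v_i^2$ of weights $f_1(v_i^1)$ and $f_2(v_i^2)$. The master nodes have no incoming edges, so $u_i$ is active exactly when $i$ lies in the initial team $X$; the edge into $v_i^1$ carries enough weight to activate $v_i^1$ the moment $u_i$ fires, while leaving the internal dynamics of the $G_1$-copy untouched. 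Hence the $G_1$-copy simulates $\Gamma_1$ with initial set $X$ (up to a one-step delay that is irrelevant at the fixed point), and symmetrically for $G_2$: formally $F(X)\cap V_j$ equals the $\Gamma_j$-spread of $X$, so $|F(X)\cap V_j|\ge q_j$ iff $X$ wins in $\Gamma_j$.

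Next I would add the decision layer. I introduce two collector nodes $z_1,z_2$ with labels $f(z_1)=q_1$ and $f(z_2)=q_2$, and join every vertex of $V_j$ to $z_j$ by a unit-weight edge; then $z_j$ becomes active precisely when at least $q_j$ vertices of the $G_j$-copy are active, i.e. exactly when $X$ is winning in $\Gamma_j$. I then add a combiner node $t$ with edges $z_1\to t$ and $z_2\to t$ of weight $1$: setting $f(t)=2$ makes $t$ fire iff both $z_1$ and $z_2$ fire (AND, for intersection), while $f(t)=1$ makes $t$ fire iff at least one fires (OR, for union). Thus the single bit ``$t$ is active'' is exactly the desired Boolean combination of the two win conditions.

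The remaining point, and the only one needing a genuine idea, is that the quota thresholds the \emph{total} number of activated nodes, whereas we must decide on one bit; I expect this conversion from additive ``count'' semantics to Boolean semantics to be the main obstacle. I resolve it by amplification: let $B=n+|V_1|+|V_2|+3$ be the number of nodes built so far, attach to $t$ a set of $M=B+1$ sink nodes $s_1,\dots,s_M$, each of label $1$ and each reached from $t$ by a unit-weight edge, and set the combined quota to $q=B+1$. If $t$ is active, all $M$ sinks activate and $|F(X)|\ge M=q$; if $t$ is inactive, no sink activates and $|F(X)|\le B<q$. Hence $X$ wins in the combined game iff $t$ is active, iff $X$ satisfies the chosen combination, yielding intersection for $f(t)=2$ and union for $f(t)=1$. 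The whole graph has $B+M=O(n+|V_1|+|V_2|)$ nodes and polynomially many edges, with all weights and labels taken from the inputs or bounded by $B$, so both constructions run in polynomial time; the routine correctness check is verifying that the copies faithfully simulate $\Gamma_1$ and $\Gamma_2$, including that non-initial players activated by influence are handled correctly.
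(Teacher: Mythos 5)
Your construction is correct, and it shares the paper's overall decision architecture while replacing its main technical gadget with something simpler. Both proofs decide the combined game the same way: a collector node with label $q_j$ reads off whether game $j$ is won, an AND/OR node with label $2$ or $1$ (the node $x$ in the paper's Figure~\ref{fig:UI-IGs}, your $t$) combines the two bits, and a large bank of sink nodes together with a calibrated quota ensures that this single bit alone determines success --- your amplification step plays exactly the role of the paper's $4n^2+3n+2$ sinks and equal quota, and your calibration (quota strictly larger than the number of non-sink nodes) is in fact slacker and easier to verify than the paper's. The difference lies in how the input games are embedded. The paper first unrolls each game into a layered graph with $2n+1$ columns (Figure~\ref{fig:SG-Fi(X)}) whose column $F^i$ has active set exactly $F_i(X)$; this serves two purposes: the input column consists of label-$1$ nodes, so the new player set can be interfaced with weight-$1$ edges, and the final spread $F(X)$ appears on a dedicated output column that feeds the collector. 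You instead keep the original graphs intact and solve both interface problems directly: the edge $u_i\to v_i^j$ of weight $f_j(v_i^j)$ forces the copy to fire when the master player does without perturbing the internal dynamics (note that a weight-$1$ interface, as in the paper, would fail here for players with label greater than $1$ --- this is precisely what the unrolling circumvents), and the collector reads the spread ``in place'' from all nodes of the copy, which is sound because the collector, combiner and sinks have no edges back into the copies and activation is monotone, so the active set of each copy at the fixed point is exactly the copy of $F_j(X)$. What you gain is size: $O(n+|V_1|+|V_2|)$ nodes instead of the $\Theta(n^2)$ forced by unrolling. What the paper's route buys is a reusable gadget exposing every intermediate set $F_i(X)$ explicitly, which is convenient but not needed for this theorem.
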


\begin{proof}
Let $\Gamma=(G,w,f,q,N)$ be an influence game with  $G=(V,E)$, recall that, for any $X\seb N$,  $F_i(X)\seb V$ denotes the spread of influence of $X$ in the $i$-th step of the activation process and that  we can assume that  $0\leq i\leq n$.  All the sets considered in our constructions are replications of either the set  $N$ or the set  $V$. For sake of simplicity, we use the term \emph{corresponding node} to refer to the same node in a different copy of $N$ or $V$.

We start constructing  an influence graph $(G',w',f')$ as shown in Figure~\ref{fig:SG-Fi(X)}. $G'$ has $2n+1$ columns of nodes. The first column $F^0$ represents $V$, and the remaining nodes  are divided in pairs of  sets $(f^i,F^i)$, for any $1\leq i \leq n$. 
For any $1\leq i\leq n$, the sets  $f^i$ and $F^i$ have $n$ nodes each, as a replication of  the nodes in $V$. 
The edges are defined as follows, for any $1\leq i\leq n$, a  node $y\in F^{i-1}$  is connected to a node $z\in f^{i}$ if and only if $(y,z)\in E$. These edges have associated weight $w((y,z))$.  Furthermore, every node in $F^{i-1}$ is connected by an edge with weight 1 to its corresponding node in $F^i$.  Every node in $f^{i}$ is connected by an edge with weight 1 to its corresponding  node in $F^i$. The labeling function assigns label 1 to all the nodes in sets $F^i$ and maintains the original labeling for nodes in the sets $f^i$.

Note that after the  activation of a team $X\subseteq F^0$ in $(G',w',f')$, for any $0\leq i\leq n$, the set of nodes in $F^i$ that are activated coincides with the set $F_i(X)$. Thus the subset of activated nodes in $F^n$ coincides with $F(X)$. Observe also that $(G',w',f')$ has $2n^2+n$ nodes and that it can  be constructed in polynomial time in the size of a given influence game $(G,w,f,q,N)$.
 
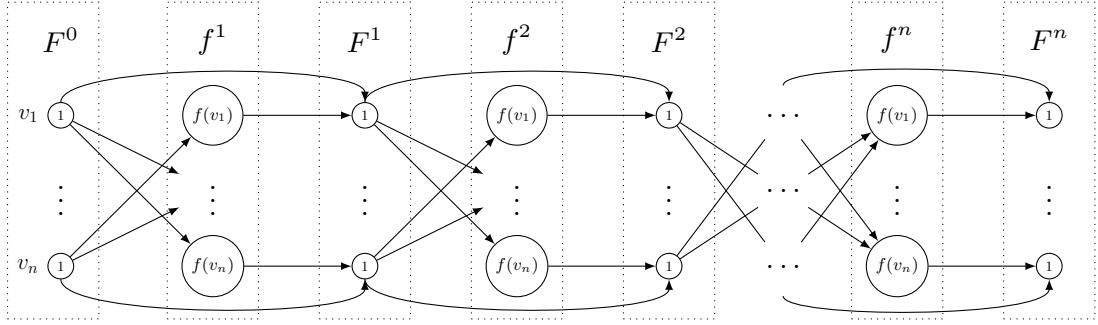
\begin{figure}[t]
\centering
\begin{tikzpicture}[every node/.style={circle,scale=0.6}, >=latex]
\node[scale=2]             at (1,3){$F^0$};
\node[scale=2]             at (3,3){$f^1$};
\node[scale=2]             at (5,3){$F^1$};
\node[scale=2]             at (7,3){$f^2$};
\node[scale=2]             at (9,3){$F^2$};
\node[scale=2]             at (12,3){$f^n$};
\node[scale=2]             at (14,3){$F^n$};

\draw[dotted] (0.3,-.7) rectangle (1.5,3.5);
\node[draw](v1a)           at (1,2)[label=left:\Large{$v_1$}]{1};
\node[scale=2]             at (1,1){$\vdots$};
\node[draw](vna)           at (1,0)[label=left:\Large{$v_n$}]{1};

\draw[dotted] (2.4,-.7) rectangle (3.6,3.5);
\node[scale=1.2,draw](fv1a)at (3,2){$f(v_1)$};
\node[scale=2](fvxa)       at (3,1){$\vdots$};
\node[scale=1.2,draw](fvna)at (3,0){$f(v_n)$};

\draw[dotted] (4.4,-.7) rectangle (5.6,3.5);
\node[draw](v1b)           at (5,2){1};
\node[scale=2]             at (5,1){$\vdots$};
\node[draw](vnb)           at (5,0){1};

\draw[dotted] (6.4,-.7) rectangle (7.6,3.5);
\node[scale=1.2,draw](fv1b)at (7,2){$f(v_1)$};
\node[scale=2](fvxb)       at (7,1){$\vdots$};
\node[scale=1.2,draw](fvnb)at (7,0){$f(v_n)$};

\draw[dotted] (8.4,-.7) rectangle (9.6,3.5);
\node[draw](v1c)           at (9,2){1};
\node[scale=2]             at (9,1){$\vdots$};
\node[draw](vnc)           at (9,0){1};

\node[scale=2](d1)         at (10.5,2){$\ldots$};
\node[scale=2](d2)         at (10.5,1){$\ldots$};
\node[scale=2](d3)         at (10.5,0){$\ldots$};

\draw[dotted] (11.4,-.7) rectangle (12.6,3.5);
\node[scale=1.2,draw](fv1z)at (12,2){$f(v_1)$};
\node[scale=2](fvxz)       at (12,1){$\vdots$};
\node[scale=1.2,draw](fvnz)at (12,0){$f(v_n)$};

\draw[dotted] (13.4,-.7) rectangle (14.6,3.5);
\node[draw](v1z)           at (14,2){1};
\node[scale=2]             at (14,1){$\vdots$};
\node[draw](vnz)           at (14,0){1};

\draw[->](v1a)  to node{}(fvxa);
\draw[->](v1a)  to node{}(fvna);
\draw[->](vna)  to node{}(fv1a);
\draw[->](vna)  to node{}(fvxa);

\draw[->](fv1a) to node{}(v1b);
\draw[->](fvna) to node{}(vnb);
\draw[->](v1a) .. controls +(up:.7cm)   and +(up:.7cm)   .. node[above,sloped]{}(v1b);
\draw[->](vna) .. controls +(down:.7cm) and +(down:.7cm) .. node[above,sloped]{}(vnb);

\draw[->](v1b)  to node{}(fvxb);
\draw[->](v1b)  to node{}(fvnb);
\draw[->](vnb)  to node{}(fv1b);
\draw[->](vnb)  to node{}(fvxb);

\draw[->](fv1b) to node{}(v1c);
\draw[->](fvnb) to node{}(vnc);
\draw[->](v1b) .. controls +(up:.7cm)   and +(up:.7cm)   .. node[above,sloped]{}(v1c);
\draw[->](vnb) .. controls +(down:.7cm) and +(down:.7cm) .. node[above,sloped]{}(vnc);

\draw[-](v1c)   to node{}(d2);
\draw[-](v1c)   to node{}(d3);
\draw[-](vnc)   to node{}(d1);
\draw[-](vnc)   to node{}(d2);

\draw[->](d2)   to node{}(fv1z);
\draw[->](d1)   to node{}(fvnz);
\draw[->](d3)   to node{}(fv1z);
\draw[->](d2)   to node{}(fvnz);

\draw[->](fv1z) to node{}(v1z);
\draw[->](fvnz) to node{}(vnz);
\draw[->](d1) .. controls +(up:.7cm)   and +(up:.7cm)   .. node[above,sloped]{}(v1z);
\draw[->](d3) .. controls +(down:.7cm) and +(down:.7cm) .. node[above,sloped]{}(vnz);
\end{tikzpicture}
\caption{The influence graph $(G',w',f')$ associated to the influence game $(G,w,f,q,N)$.\label{fig:SG-Fi(X)}}
\end{figure}

\begin{figure}[t]
\centering
\begin{tikzpicture}[every node/.style={circle,scale=0.6}, >=latex]

\node[scale=2]             at (1,3){$F$};
\draw[dotted] (2,3.5) rectangle (0,-1);
\node[draw](v1)           at (1,2){1};
\node[scale=2]            at (1,1){$\vdots$};
\node[scale=2]            at (0.5,1){$N$};
\node[draw](vn)           at (1,0){1};

\node[scale=1]            at (2.5,2){$F_1^0$};
\node[scale=1]            at (4.8,2){$F_1^n$};

\node[scale=1]            at (2.5,.1){$F_2^0$};
\node[scale=1]            at (4.8,.1){$F_2^n$};

\draw[dashed] (2.3,1.3) rectangle (5,2.5);
\draw[dotted](2.3,1.3)  rectangle (2.7,2.5);
\draw[dotted](4.6,1.3)  rectangle (5,2.5);
\draw[dashed] (2.3,-.5) rectangle (5,.7);
\draw[dotted](2.3,-.5)  rectangle (2.7,.7);
\draw[dotted](4.6,-.5)  rectangle (5,.7);
\node[scale=2]            at (3.5,3){$(G'_1,w'_1,f'_1)$};
\node[scale=2]            at (3.5,-1){$(G'_2,w'_2,f'_2)$};
\node(x1)                 at (2.5,2.5){};
\node(x2)                 at (2.5,1.5){};
\node(x3)                 at (2.5,0.5){};
\node(x4)                 at (2.5,-.5){};
\node(y1)                 at (4.9,2.5){};
\node(y2)                 at (4.9,1.3){};
\node(y3)                 at (4.9,0.7){};
\node(y4)                 at (4.9,-.5){};

\node[scale=1.2,draw](q1) at (6.5,2){$q_1$};
\node[scale=1.2,draw](q2) at (6.5,0){$q_2$};
\node[scale=1.2,draw](x)  at (8,1)  {$x$};
\node[draw](s1)           at (9.5,2)[label=right:\huge{$s_1$}]{1};
\node[scale=2]            at (9.5,1){$\vdots$};
\node[draw](sn)           at (9.5,0)[label=right:\huge{$s_{4n^2+3n+2}$}]{1};

\draw[->](v1a) to node{}(x1);
\draw[->](v1a) to node{}(x3);
\draw[->](vna) to node{}(x2);
\draw[->](vna) to node{}(x4);

\draw[->](y1)  to node{}(q1);
\draw[->](y2)  to node{}(q1);
\draw[->](y3)  to node{}(q2);
\draw[->](y4)  to node{}(q2);

\draw[->](q1)  to node{}(x);
\draw[->](q2)  to node{}(x);
\draw[->](x)   to node{}(s1);
\draw[->](x)   to node{}(sn);
\end{tikzpicture}
\caption{The influence graph associated to the intersection $(x=2)$ or the union $(x=1)$ of two influence games with influence graphs $(G_1,w_1,f_1)$ and $(G_2,w_2,f_2)$ and quotas $q_1$ and $q_2$ respectively.\label{fig:UI-IGs}}
\end{figure}
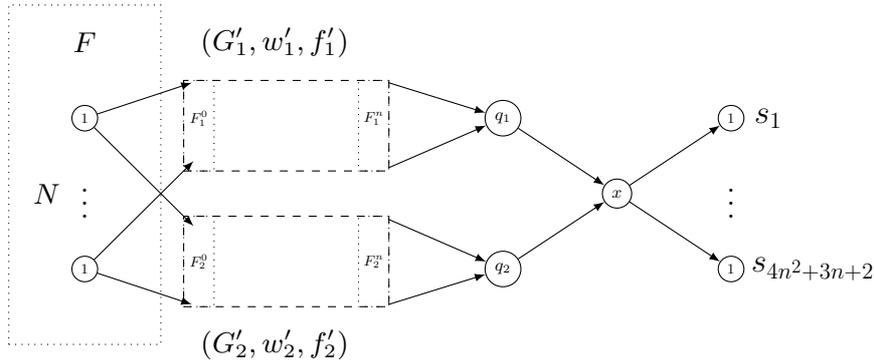

Now, given two influence games $\Gamma_1=(G_1,w_1,f_1,q_1,N)$ and  $\Gamma_2=(G_2,w_2,f_2,q_2,N)$ we construct the  two influence graphs $(G_1',w_1',f_1')$ and $(G_2',w_2',f_2')$ as described before (see Figure~\ref{fig:SG-Fi(X)}).
We use the construction depicted in   Figure~\ref{fig:UI-IGs} to construct  another  influence graph. In this last construction we add a set $F$ which is a copy of $N$. All the  nodes  in $F$ have  label 1. The nodes in  $F$ are connected to their corresponding nodes in  $F_1^0$ and  in $F^0_2$ through edges with weight 1. Furthermore, we add  a node with label $q_1$, a node with label $q_2$, a node with label $x$,  and a set with  $4n^2+3n+2$ nodes. Those new nodes are connected according to the pattern given in  Figure~\ref{fig:UI-IGs}. 
The nodes in the last column, $F^n_i$,  of $(G_i',w_i',f_i')$ are all connected to the node with label $q_i$, for $i\in\{1,2\}$.  
The nodes with labels $q_1$  and $q_2$ are connected to the node with label $x$ which is connected to the last set of nodes.
All those new connections have assigned weight 1.
Observe that in total we have at most 
$2(2n^2+n)+n+3+4n^2+3n+2$ nodes. Thus the overall  construction can be computed in polynomial time.

Let $(G_\cup,w_\cup,f_\cup)$ be the influence graph obtained by setting $x=1$ and  $(G_\cap,w_\cap,f_\cap)$ be the influence graph  obtained by setting $x=2$. Consider the games $\Gamma_\cup=(G_\cup,w_\cup,f_\cup, 4n^2+3n+2,F)$ and  $\Gamma_\cap=(G_\cap,w_\cap,f_\cap, 4n^2+3n+2,F)$.
By construction a team $X$ is successful in $\Gamma_\cup$  if and only if $X$ is succesful in either $\Gamma_1$ or $\Gamma_2$. Further, a team is successful in $\Gamma_\cap$ if and only if $X$ is succesful in both $\Gamma_1$ and $\Gamma_2$.
\end{proof}

It is interesting to note that it is possible to devise a construction representing the  intersection or the union of weighted games as the influence games $(G'_\cup,w',f'_\cup,n+2,N)$ and $(G'_\cap,w',f'_\cap,n+3,N)$. The corresponding influence graphs $(G'_\cup,w',f'_\cup)$  and $(G'_\cap,w',f'_\cap)$ are shown in Figure~\ref{fig:Int-Uni} (setting as before label $x$ to be 1 or 2 depending on the considered operation). This new  construction requires only a linear number of additional nodes, however the graph is weighted.

\begin{figure}[t]
\centering
\begin{tikzpicture}[every node/.style={circle,scale=0.6}, >=latex]
\node[draw](a)  at (0,4)  {1};
\node[scale=2]  at (1.5,4){$\ldots$};
\node[draw](b)  at (3,4)  {1};
\node[scale=2]  at (4.5,4){$\ldots$};
\node[draw](c)  at (6,4)  {1};
\node[draw,scale=1.2](l) at (1.5,2) {$q_1$};
\node[draw,scale=1.2](m) at (4.5,2) {$q_2$};
\node           at (0.4,3.1){\Large{$w^1_1$}};
\node           at (1.2,3.2){\Large{$w^2_1$}};
\node           at (2.4,3.5){\Large{$w^1_i$}};
\node           at (3.6,3.5){\Large{$w^2_i$}};
\node           at (4.6,3.2){\Large{$w^1_n$}};
\node           at (5.6,3.1){\Large{$w^2_n$}};
\node[draw,scale=1.5](r) at (3,1) {$x$};
\node[draw](aa) at (1.5,0){1};
\node[scale=2]  at (3.0,0){$\ldots$};
\node[draw](cc) at (4.5,0){1};
\draw[->] (a) to node {}(l);
\draw[->] (b) to node {}(l);
\draw[->] (c) to node {}(l);
\draw[->] (a) to node {}(m);
\draw[->] (b) to node {}(m);
\draw[->] (c) to node {}(m);
\draw[->] (l) to node[label=right:{\Large $1$}] {}(r);
\draw[->] (m) to node[label=left: {\Large $1$}] {}(r);
\draw[->] (r) to node[label=right:{\Large $1$}] {}(aa);
\draw[->] (r) to node[label=left: {\Large $1$}] {}(cc);
\node at (8.5,4){\Large{$n$ nodes}};
\node at (8.5,0){\Large{$n$ nodes}};
\end{tikzpicture}
\caption{Influence graphs associated  to
$[q_1;w^1_1,\ldots,w^1_n]$ $\cap$ $[q_2;w^2_1,\ldots,w^2_n]$ ($x=2$), and $[q_1;w^1_1,\ldots,w^1_n]$ $\cup$ $[q_2;w^2_1,\ldots,w^2_n]$ ($x=1$).\label{fig:Int-Uni}}
\end{figure}
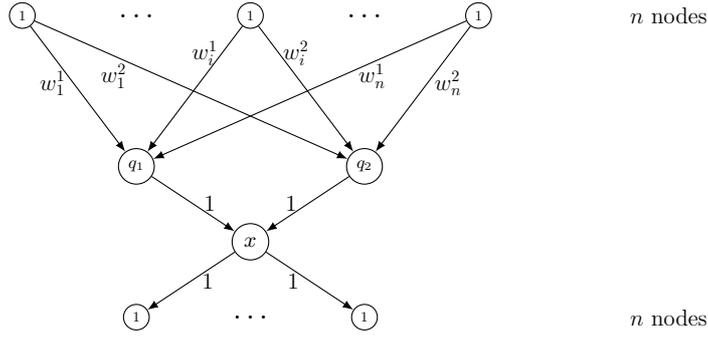

Thus, as any simple game can be represented as the intersection or union of a finite number of weighted games, we have an alternative way to show the completeness of the family of weighted influence games with respect to the class of simple games (Theorem~\ref{the:SG-Inf}).
However, as the dimension, the codimension, and the representation as boolean weighted game of a simple game might be exponential in the number of players (but bounded by the number of maximal losing, minimal winning coalitions, or both,  respectively)~\cite{FP01,FM09a,FEW09}, we cannot conclude that any simple game can be represented by a weighted  influence game whose number of agents is polynomial in the number of players. For the particular case of unweighted influence game we know the following.

\begin{theorem}\label{teo:polyIG}
The family of unweighted influence games in which the number of agents in the corresponding influence graph is polynomial in the number of players is a proper subset of simple games.
\end{theorem}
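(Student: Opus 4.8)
The containment itself is immediate: as observed at the beginning of Section~\ref{sec:IG}, every influence game — in particular every unweighted one — is a simple game, and restricting attention to those whose influence graph has few agents only shrinks the family. Hence the entire content lies in the word \emph{proper}, and the plan is an information-theoretic (counting) argument. I will upper bound the number of simple games on $n$ players that admit an unweighted influence representation with at most $a$ agents, separately lower bound the total number of simple games on $n$ players, and show that the former is far smaller than the latter whenever $a$ is polynomial in $n$; the games not accounted for are then the desired witnesses. Since for a single fixed game the phrase ``polynomial in the number of players'' is vacuous, I will phrase the conclusion as an asymptotic, family-level separation: for every polynomial $p$ there are, for all large $n$, simple games on $n$ players requiring more than $p(n)$ agents in any unweighted influence representation.

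For the upper bound, fix the player set $N=\{1,\dots,n\}$ and, up to relabelling the auxiliary agents (which does not change the represented game), take the agent set to be $V=\{1,\dots,a\}$ with $N\seb V$. An unweighted influence game on these agents is determined by the triple $(E,f,q)$: a loopless directed edge set $E\seb V\times V$, a labelling $f\colon V\to\matN$, and a quota $q$ with $0\le q\le a+1$. There are at most $2^{a(a-1)}$ choices for $E$ and $a+2$ for $q$. The delicate point is $f$, which a priori ranges over the infinite set $\matN^V$; here I will use that only \emph{behaviourally distinct} labellings matter. Since the graph is unweighted, a node $v$ is activated exactly when its number of already-active in-neighbours reaches $f(v)$, so any value of $f(v)$ exceeding the in-degree of $v$ (at most $a-1$) yields identical behaviour, and I may restrict to $f(v)\in\{0,1,\dots,a\}$, giving at most $(a+1)^a$ labellings. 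Consequently the number of distinct simple games on $N$ representable with at most $a$ agents is at most $2^{a(a-1)}(a+1)^a(a+2)=2^{O(a^2)}$.

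For the lower bound I will count all simple games on $n$ players, i.e.\ the Dedekind number $M(n)$. Taking the middle layer $\cP_{\lfloor n/2\rfloor}(N)$, which is an antichain, each of its subsets is the family of minimal winning coalitions of a distinct simple game, yielding the standard estimate $\log_2 M(n)\ge {n\choose\lfloor n/2\rfloor}=\Theta(2^n/\sqrt{n})$. Comparing the two counts: if every simple game on $n$ players were representable with at most $a$ agents, then $\log_2 M(n)=O(a^2)$, whence ${n\choose\lfloor n/2\rfloor}=O(a^2)$ and therefore $a=\Omega(2^{n/2}/n^{1/4})$. This bound is super-polynomial, so for every polynomial $p$ and all sufficiently large $n$ there exists a simple game on $n$ players that no unweighted influence game with at most $p(n)$ agents can represent. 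The sequence of these games witnesses that the stated family is a proper subset of the simple games.

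The two cardinality estimates are routine. The one genuinely delicate step is the labelling count: without the observation that labels above a node's in-degree are redundant, $f$ ranges over $\matN^V$ and the naive count is infinite, so establishing that the number of behaviourally distinct labellings is singly exponential in $a$ is the crux that makes the whole counting go through. I expect this to be the main obstacle, together with the care needed to formulate the result as an asymptotic separation rather than a statement about an individual game.
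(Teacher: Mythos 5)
Your proposal is correct and follows essentially the same route as the paper: a counting argument showing that at most $2^{O(a^2)}$ distinct simple games on $n$ players admit unweighted influence representations with $a$ agents (bounding edge sets, quotas, and behaviourally distinct labellings), against a doubly-exponential lower bound on the number of simple games, which forces a super-polynomial number of agents for some games. The only differences are cosmetic: you derive the lower bound yourself via the middle-layer antichain $2^{\binom{n}{\lfloor n/2\rfloor}}$ where the paper cites a known $2^{2^n/n}$ bound, and you quantify over polynomials explicitly where the paper takes $f(n)=n^{\log n}$ to dominate all polynomials at once.
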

\begin{proof}
We use a  simple counting argument to show the result. Observe that, for any $n\geq0$, there are  more than $2^{(2^n/n)}$ simple games with $n$ players~\cite{KM75}.  Taking into account that a simple game has at most $n!$ isomorphic simple games we know that there are more that  $2^{(2^n/n)}/n!$ different simple games on $n$ players.

Consider an unweighted  influence game with $n$ players and $f(n)$ agents.  The possibilities for the edge sets are less than $2^{(f(n)+1)^2}$.
 It  suffices to consider label functions assigning values between 0 and $f(n)+1$.   Thus, there are at most $(f(n)+2)^{f(n)+2}$
possibilities for the labeling functions. Finally,  for the quota, only $f(n)+2$ possibilities have to be considered.
Thus, the number of unweighted  influence games with $n$ players and $f(n)$ agents is at most $2^{O(f(n)^2)}$.

Taking $f(n)= n^{\log n}$, the family includes all unweighted  influence games with $n$ players and polynomial number of agents.  Taking the logarithm on both sides, one easily sees that  $2^{O(f(n)^2)}$ is asymptotically smaller than  $2^{(2^n/n)}/n!$. 
\end{proof}

\section{Parameters and Properties}\label{sec:par-pro}

From Theorems~\ref{the:SG-Inf} and \ref{the:WG-Inf} we know that all the computational problems related to properties and parameters that are computationally hard for simple games in winning or minimal winning form, as well as for weighted games, are also computationally hard for influence games. Nevertheless, the hardness results do not apply to unweighted influence games with polynomial, in the number of players,  number of agents. In this section we address the computational complexity of problems for games with a polynomial number of agents. 
All the hardness proofs are given for the  subclass formed by unweighted influence games on undirected influence graphs, which is a subset of all the other variations. The polynomial time algorithms are devised for the biggest class of general influence games, i.e., weighted influence games on directed graphs which includes all others.

Before starting to analyze problems we state here some basic results. From Lemma~\ref{lem:Fpoly} we know that, for a given team $X$, we can compute in polynomial time the set $F(X)$. Therefore we have the following. 

\begin{lemma}
For a given influence game $(G,w,f,q,N)$, deciding whether a team $X\subseteq N$ is successful can be done in polynomial time. 
\end{lemma}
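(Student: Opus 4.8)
The plan is to reduce the decision directly to the computation guaranteed by Lemma~\ref{lem:Fpoly} together with the definition of a successful team from Definition~\ref{def:influence_game}. Recall that a team $X\seb N$ is successful precisely when $|F(X)|\geq q$, so the entire task amounts to producing the set $F(X)$, measuring its cardinality, and performing a single comparison against the quota.

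Concretely, I would proceed in three steps. First, given the influence game $(G,w,f,q,N)$ and the team $X\subseteq N$, invoke Lemma~\ref{lem:Fpoly} to compute the spread of influence $F(X)\seb V$ in polynomial time; this is legitimate since $X\seb N\seb V$ is a valid initial activation set. Second, compute the cardinality $|F(X)|$, which takes time linear in $n=|V|$. Third, compare $|F(X)|$ with the quota $q$: output that $X$ is successful if $|F(X)|\geq q$ and unsuccessful otherwise, which is correct by the very definition of a successful team.

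Since the dominant cost is the single application of Lemma~\ref{lem:Fpoly} (the remaining cardinality count and comparison are trivially polynomial), the whole procedure runs in polynomial time. There is no genuine obstacle here: the statement is essentially an immediate corollary of the earlier lemma, and the only thing to verify is that the notion of ``successful'' is a direct threshold test on $|F(X)|$ rather than something requiring additional combinatorial search. Accordingly, I expect the proof to be a short paragraph that cites Lemma~\ref{lem:Fpoly} and unwinds Definition~\ref{def:influence_game}.
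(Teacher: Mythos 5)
Your proof is correct and matches the paper's reasoning exactly: the paper derives this lemma directly from Lemma~\ref{lem:Fpoly}, noting that $F(X)$ can be computed in polynomial time and then the successfulness test is just the threshold comparison $|F(X)|\geq q$ from Definition~\ref{def:influence_game}. There is nothing further to add.
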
  

Our next result concerns a particular type of influence games that we will use first as a basic construction,   which associates an unweighted influence game  to an undirected graph, and later as a representative of a particular subclass of influence games.
\begin{definition}
Given an undirected graph $G=(V,E)$, the unweighted influence game $\Gamma(G)$ is the game $(G,f,|V|,V)$ where, for any $v\in V$, the label $f(v)$ is the degree of $v$ in $G$, i.e., $f(v)=d_G(v)$.
\end{definition}
Recall that a set $S\subseteq V$ is a \emph{vertex cover} of a graph $G$ if and only if,  for any  edge  $(u,v)\in E$, $u$ or $v$  (or both) belong to $S$. 
From the definitions we get the following result.
\begin{lemma}\label{lem:gammaG}
Let $G$ be an undirected graph. A team $X$ is successful in $\Gamma(G)$ if and only if $X$ is a vertex cover of $G$, Furthermore,  the influence game $\Gamma(G)$ can be obtained in polynomial time, given a description of $G$.
\end{lemma}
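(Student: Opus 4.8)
The plan is to read off the dynamics of $\Gamma(G)$ directly from its definition and then match the final activated set against the vertex-cover condition. Since $G$ is undirected and unweighted and $f(v)=d_G(v)$, at any step a vertex $v\notin X$ becomes active exactly when all of its $d_G(v)$ neighbours are already active; in other words $\Gamma(G)$ realises the \emph{maximum influence requirement}. Moreover the quota is $|V|$, so a team $X$ is successful if and only if $|F(X)|\ge |V|$, that is, $F(X)=V$. Thus the whole characterisation reduces to proving that $F(X)=V$ if and only if $X$ is a vertex cover of $G$.

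First I would handle the easy direction. If $X$ is a vertex cover then $V\setminus X$ is an independent set, so every $v\in V\setminus X$ has all of its neighbours inside $X=F_0(X)$. Hence every such $v$ meets its threshold already at the first step, giving $F_1(X)=V$ and thus success.

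The substantive direction is the converse, and this is where the care is needed. Suppose $X$ is not a vertex cover; then some edge has both endpoints in $V\setminus X$, so the induced subgraph $G[V\setminus X]$ has at least one edge. Let $S$ be the set of non-isolated vertices of $G[V\setminus X]$; then $S\neq\emptyset$, and every vertex of $S$ has a neighbour that also lies in $S$. I claim that no vertex of $S$ is ever activated. The main obstacle is to rule out every possible activation order that could eventually switch these vertices on, which I would do by a minimality argument: if some vertex of $S$ were activated, pick one, say $v$, that is activated at the earliest step $k$ (necessarily $k\ge 1$, since $v\notin X$). Activation of $v$ at step $k$ forces all neighbours of $v$, in particular its neighbour $w\in S$, to be active by step $k-1$; but then $w\in S$ is activated strictly before $v$, contradicting the choice of $v$. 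Hence $S\cap F(X)=\emptyset$, so $F(X)\neq V$ and $X$ is unsuccessful. Combining the two directions yields the characterisation.

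Finally, the polynomial-time claim is routine: given $G=(V,E)$ one computes each degree $f(v)=d_G(v)$, sets the quota to $|V|$ and the player set to $V$, all in time polynomial in the size of $G$. I expect the circular-dependency minimality argument in the converse direction to be the only genuinely delicate point; everything else is bookkeeping.
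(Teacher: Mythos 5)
Your proof is correct and follows exactly the route the paper intends: the paper states this lemma with no written proof ("From the definitions we get the following result"), relying on the observation—made explicit later in Section~\ref{sec:MaxI}—that with thresholds $f(v)=d_G(v)$ a vertex outside $X$ activates only when all its neighbours are already active, so that $F(X)=V$ iff $V\setminus X$ is independent. Your write-up simply fills in that routine verification (including the minimality argument ruling out delayed activation), so there is nothing genuinely different to compare.
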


Table~\ref{tab:results} in Section \ref{sec:intro} summarizes the known computational complexity results (notation from~\cite{GJ79})
over parameters and properties~\cite{DP94,FK96,MM00,Pol08,Azi09,FMOS12,RP11}.
For the hardness results we provide polynomial time reductions from the following problems (decision or optimization versions) which are known to be  NP-hard~\cite{GJ79}.

\begin{tabbing}
 xxxxxxxxxxxxxxxx    \= xxxxxxxxxxxxxxxxxxxxxxxxxxxxxxxxxxxxxxxxxxxxxxxx \kill
 {\sc Vertex Cover}: \> Given an undirected graph $G=(V,E)$ and an integer $k$. \\
                     \> Does $G$ have a vertex cover with size $k$ or less?
\end{tabbing}

\begin{tabbing}
 xxxxxxxxxxxxxxxx    \= xxxxxxxxxxxxxxxxxxxxxxxxxxxxxxxxxxxxxxxxxxxxxxxx \kill
 {\sc Set Cover}:    \> Given a finite set $S$, a collection of subsets $C\seb S$, and an integer $k$. \\
                     \> Is there a subset $C'\seb C$ with $|C'|\leq k$ such that\\
                     \> every element in $S$ belongs to at least one member of $C'$?
\end{tabbing}

\begin{tabbing}
 xxxxxxxxxxxxxxxx    \= xxxxxxxxxxxxxxxxxxxxxxxxxxxxxxxxxxxxxxxxxxxxxxxx \kill
 {\sc Set Packing}:  \> Given a collection $C$ of finite sets, and an integer $k$. \\
                     \> Is there a collection of disjoint sets $C'\seb C$ with $|C'|\geq k$?
\end{tabbing}

We first analyze the complexity of two relevant simple game measures that provide information about the size of the teams to succeed or not.
Further, we consider  two new measures which are relevant for influence games: the {\em strict length}, defined in Section~\ref{sec:IG}, and its dual, the {\em strict width}.
Assume that a simple game $\Gamma$ is given.

\begin{tabbing}
 xxxxxxxxxxxxxxxx    \= xxxxxxxxxxxxxxxxxxxxxxxxxxxxxxxxxxxxxxxxxxxxxxxx \kill
 {\sc Length}:       \> The minimum size of a successful team, i.e.,  $\min\,\{|X| \mid X\in \cW \}$.
\end{tabbing}

\begin{tabbing}
 xxxxxxxxxxxxxxxx    \= xxxxxxxxxxxxxxxxxxxxxxxxxxxxxxxxxxxxxxxxxxxxxxxx \kill
 {\sc Width}:        \> The maximum size of an unsuccessful team, i.e.,  $\max\,\{|X| \mid X\in \cL \}$.
\end{tabbing}

\begin{tabbing}
 xxxxxxxxxxxxxxxx    \= xxxxxxxxxxxxxxxxxxxxxxxxxxxxxxxxxxxxxxxxxxxxxxxx \kill
 {\sc sLength}:      \> The minimum cardinality from which all teams are successful,\\
		     \> i.e., $\min\,\{k\in \mathbb{N} \mid \cP_k(N) \subseteq \cW \}$.
\end{tabbing}

\begin{tabbing}
 xxxxxxxxxxxxxxxx    \= xxxxxxxxxxxxxxxxxxxxxxxxxxxxxxxxxxxxxxxxxxxxxxxx \kill
 {\sc sWidth}:       \> The maximum cardinality to which all teams are unseccessful,\\
		     \> i.e., $\max\,\{k\in\mathbb{N} \mid \cP_k(N) \subseteq \cL \}$.
\end{tabbing}

Observe that for every simple game $\Gamma$, we have that $\textsc{Width}(\Gamma)=\mbox{\sc sLength}(\Gamma)-1$ and $\mbox{\sc Length}(\Gamma)=\mbox{\sc sWidth}(\Gamma)+1$. As we have shown in Theorem~\ref{lem:slength-min-win},  $\textsc{ sLength}(\Gamma)$ can be computed in polynomial time when the game is given by $(N,\cW)$ or $(N,\cW^m)$.  Our next result shows that computing all the above measures is hard for influence games.

\begin{theorem}\label{the:length-width-hard}
Computing {\sc Length}, {\sc Width}, {\sc sLength} and {\sc sWidth} of an unweighted influence game is {\sc NP}-hard.
\end{theorem}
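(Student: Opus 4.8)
The plan is to establish {\sc NP}-hardness for {\sc Length} and {\sc Width} directly, and then obtain the other two measures for free from the identities $\textsc{Length}(\Gamma)=\textsc{sWidth}(\Gamma)+1$ and $\textsc{Width}(\Gamma)=\textsc{sLength}(\Gamma)-1$ recorded just before the statement. Indeed, a polynomial-time procedure computing {\sc sWidth} (resp. {\sc sLength}) would immediately yield one for {\sc Length} (resp. {\sc Width}) by adding or subtracting $1$, so it suffices to treat the two latter measures. Both reductions are carried out within the subclass of unweighted influence games on undirected graphs, as required.

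For {\sc Length} I would reduce from {\sc Vertex Cover} using the game $\Gamma(G)$ of Lemma~\ref{lem:gammaG}. Since a team is successful in $\Gamma(G)$ exactly when it is a vertex cover of $G$, we have $\textsc{Length}(\Gamma(G))=\min\{|X|\mid X \text{ is a vertex cover of } G\}$, the size of a minimum vertex cover. As $\Gamma(G)$ is built in polynomial time, an algorithm computing {\sc Length} would decide whether $G$ has a vertex cover of size at most $k$, so {\sc Length} is {\sc NP}-hard. Note that $\Gamma(G)$ is useless for {\sc Width}: the largest team that is \emph{not} a vertex cover is the complement of a single edge, giving $\textsc{Width}(\Gamma(G))=|V|-2$, which is trivially computable. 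A genuinely different gadget is therefore needed.

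For {\sc Width} the idea is to build an unweighted undirected influence game whose losing coalitions are \emph{exactly} the independent sets of a graph $H=(V_H,E_H)$, so that $\textsc{Width}$ equals the independence number $\alpha(H)$; reducing from {\sc Set Packing} through its conflict graph $H$ (one vertex per set, an edge between two intersecting sets, whose maximum independent set equals the maximum packing) then gives the hardness. Concretely I would take one player node $p_v$ of label $1$ per vertex $v\in V_H$, an edge detector $d_e$ of label $2$ joined to $p_u$ and $p_v$ for each edge $e=\{u,v\}$, a master node $M$ of label $1$ joined to all detectors, and a block of $|V_H|+1$ sink nodes of label $1$ joined to $M$; the player set is $\{p_v\}$ and the quota is $|V_H|+1$. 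A team $X$ of players fires some detector at the first step iff it contains both endpoints of an edge, i.e. iff $X$ is not independent; this then fires $M$ and all sinks, so $|F(X)|\ge|V_H|+1$ and $X$ wins. If instead $X$ is independent, each $d_e$ sees at most one active player-neighbour, so it never reaches its label $2$, $M$ never fires, nothing propagates, and $|F(X)|=|X|\le|V_H|<|V_H|+1$, so $X$ loses. Hence the losing coalitions are precisely the independent sets of $H$ and $\textsc{Width}=\alpha(H)$.

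The delicate point, and the step I expect to be the real obstacle, is verifying the influence \emph{dynamics} of this gadget rather than its combinatorial skeleton: one must rule out a spurious cascade from an independent team despite the feedback created by the undirected edges between $M$ and the detectors. The resolution is a monotone triggering order: $M$ can re-activate a detector $d_e$ from a single endpoint only after $M$ itself has fired, which requires some detector to have fired first, which in turn can happen only when the team already contains an edge; thus the undesired feedback is confined to teams that are already winning and never touches the losing side. Once this ordering is argued carefully, together with the routine facts that both gadgets are unweighted, undirected, loopless, of polynomial size, and that the stated correspondences are exact, the {\sc NP}-hardness of all four measures follows.
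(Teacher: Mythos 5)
Your proof is correct, and it follows the paper's overall skeleton exactly: establish \textsc{NP}-hardness for \textsc{Length} and \textsc{Width}, then transfer to \textsc{sLength} and \textsc{sWidth} via the identities $\textsc{Width}(\Gamma)=\textsc{sLength}(\Gamma)-1$ and $\textsc{Length}(\Gamma)=\textsc{sWidth}(\Gamma)+1$, which is precisely how the paper handles the last two measures. The differences are in the reductions. For \textsc{Length}, you reduce from \textsc{Vertex Cover} via $\Gamma(G)$ of Lemma~\ref{lem:gammaG}; the paper instead builds a dedicated gadget reducing from minimum \textsc{Set Cover}, and in fact explicitly remarks after the theorem that your route suffices for hardness --- the reason for its heavier choice is that the set-cover reduction additionally yields non-approximability bounds (\textsc{Length} not approximable within $(1-\epsilon)\log m$, \textsc{Width} not within $m^{1/2-\epsilon}$), a byproduct your shorter argument does not give. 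For \textsc{Width}, both arguments ultimately rest on maximum \textsc{Set Packing}: the paper's gadget uses threshold-$2$ element nodes that fire exactly when an element is covered twice, feeding an amplifier block of $m+1$ nodes, while you pass through the conflict graph and use threshold-$2$ edge detectors feeding a master node and a sink block, so that losing teams are exactly the independent sets. These are structurally the same trick, and your verification of the dynamics is sound: in the losing case every detector sees at most one active neighbour and $M$ is inert, so $F(X)=X$ is a fixed point and no spurious cascade can occur; in the winning case feedback only helps. In short, a correct proof, marginally more elementary on the \textsc{Length} side, at the cost of the approximation-hardness corollaries the paper extracts from its version.
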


\begin{proof}
For the measure {\sc Length}, we provide a reduction from the minimum set cover problem. Let $C=\{C_1,\ldots,C_m\}$ be a collection of subsets of a universe with $n$ elements.
We associate to $C$ the unweighted influence game $(G,f,q,N)$ where $G=(V,E)$. The graph $G$ has three disjoint sets of vertices:  $Y=\{y_1,\dots, y_m\}$,  $T=\{t_1,\dots, t_n\}$,   and $Z=\{z_1,\dots, z_{m+1}\}$, together with an additional vertex $x$. The components of the game are the following.
\begin{itemize}
\item $V=Y \cup T\cup \{x\}  \cup Z $,
\item 
$E=\{(y_j,t_i)\mid i\in C_j\} \cup\{(t_i,x)\mid 1\leq i\leq n\} \cup\{(x,z_k)\mid 1\leq k\leq m+1\}$,
\item $f(y_j)=n+1$, for any $1\leq j\leq m$,
\item $f(t_i)=1$, for any $1\leq i\leq n$,
\item $f(z_k)=1$, for any $1\leq k\leq m+1$,
\item $f(x)=n$,
\item $q=m+n+1$ and
\item $N=Y$.
\end{itemize}
Therefore, it is easy to see that a team $X\seb N$ succeeds if and only if it corresponds to a set cover, so the \mbox{\sc Length} of $(G,f,q,N)$ coincides with the size of a minimum set cover.

For the measure {\sc Width} we provide a reduction from the maximum set packing problem. Consider an influence game $(G',f',q',N)$  where $G'$ is constructed from $G$. We remove node $\{x\}$, add the connections $\{(t_i,z_k)\mid 1\leq i\leq n, 1 \leq k\leq  m+1\}$, and  set $f'(t_i)=2$ for any $1\leq i\leq n$. We keep $N=Y$ and set $q'=m+1$.
It is easy to see that a team $X\seb N$ is unsuccessful in $(G',f',q',N)$ if and only if  $X$  corresponds to a set packing in $C$. Therefore, the {\sc Width} of  $(G',f',q',N)$ is  the size of a maximum set packing of $C$.

The remaining results for {\sc sLength} and {\sc sWidth} follow from $\mbox{\sc Width}(\Gamma)=\mbox{\sc sLength}(\Gamma)-1$ and $\mbox{\sc Length}(\Gamma)=\mbox{\sc sWidth}(\Gamma)+1$.
\end{proof}

The hardness result for {\sc Length} can be obtained directly from Lemma~\ref{lem:gammaG}. There we provide a reduction from the {minimum vertex cover problem}. However, the reduction from  the {minimum set cover}  problem given in the previous theorem allows us to extract additional results about the complexity of approximation.   In particular, the reductions in  Theorem~\ref{the:length-width-hard} imply that {\sc Length} is neither approximable within $(1-\epsilon)\cdot\log m$ nor within $c\cdot\log n$, for some $c>0$, and that {\sc Width} is not approximable within $m^{1/2-\epsilon}$, for any $\epsilon>0$,  using the non-approximability  results from \cite{ACGKMP99} for the problems minimum set cover and minimum set packing.

When the simple game is given  by $(N,\cW)$ or $(N,\cW^m)$, {\sc Length} can trivially be computed in polynomial time. As $\mbox{\sc Length}(\Gamma)=\mbox{\sc sWidth}(\Gamma)+1$, {\sc sWidth} can also be computed in polynomial time. 
Computing {\sc Width} for simple games given by $(N,\cW)$ or $(N,\cW^m)$ was posted as open problem in \cite{Azi09}.
However,  from  Theorem~\ref{lem:slength-min-win} and taking into account that  $\mbox{\sc Width}(\Gamma)=\mbox{\sc sLength}(\Gamma)-1$ we have that {\sc Width} can be computed in polynomial time.  

\begin{theorem}\label{the:width-hard}
Given a simple game by $(N,\cW)$ or $(N,\cW^m)$,  {\sc Length}, {\sc Width}, {\sc sLength} and {\sc sWidth} can be computed in polynomial time.
\end{theorem}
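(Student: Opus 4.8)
The plan is to reduce the computation of all four measures to two elementary scans over the given coalition families, exploiting the two duality identities $\textsc{Width}(\Gamma)=\mbox{\sc sLength}(\Gamma)-1$ and $\mbox{\sc Length}(\Gamma)=\mbox{\sc sWidth}(\Gamma)+1$ stated just above the theorem, together with Theorem~\ref{lem:slength-min-win}. Since $\cW^m$ can be obtained from $\cW$ in polynomial time, I would assume throughout that the list $\cW^m$ of minimal winning coalitions is available; this reduces both input forms to a common setting.

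First I would handle the two ``minimum-type'' measures. For {\sc Length}, observe that any winning coalition of minimum cardinality is automatically a minimal winning coalition, since deleting a player from it would otherwise yield a strictly smaller winning coalition. Hence $\mbox{\sc Length}(\Gamma)=\min\{|X|\mid X\in\cW^m\}$, which is found by a single pass over $\cW^m$ (equivalently over $\cW$), clearly in time polynomial in the size of the representation. The measure {\sc sWidth} then comes for free from $\mbox{\sc Length}(\Gamma)=\mbox{\sc sWidth}(\Gamma)+1$, giving $\mbox{\sc sWidth}(\Gamma)=\mbox{\sc Length}(\Gamma)-1$.

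Next I would treat the two ``maximum-type'' measures. By Theorem~\ref{lem:slength-min-win}, $\mbox{\sc sLength}(\Gamma)$ equals the maximum cardinality of a minimal winning coalition, so it is obtained by a second pass over $\cW^m$ returning $\max\{|X|\mid X\in\cW^m\}$, again in polynomial time. Finally {\sc Width} is recovered through the identity $\textsc{Width}(\Gamma)=\mbox{\sc sLength}(\Gamma)-1$.

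The substance of the result --- and the point that settles the open problem of~\cite{Azi09} --- is not any one of these computations individually but the observation that {\sc Width}, whose naive evaluation would call for locating a maximum-size losing coalition (a priori an intractable search), is linked by duality to {\sc sLength}, for which Theorem~\ref{lem:slength-min-win} already supplies a polynomial characterization via the maximum minimal winning coalition size. Once these identities are recognized, no genuine obstacle remains; the only care needed is the standard observation that passing from $\cW$ to $\cW^m$ stays polynomial.
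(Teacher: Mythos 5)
Your proposal reproduces the paper's own argument step for step: \textsc{Length} by a scan over $\cW^m$, $\mbox{\sc sWidth}=\mbox{\sc Length}-1$, $\mbox{\sc sLength}$ via Theorem~\ref{lem:slength-min-win}, and $\mbox{\sc Width}=\mbox{\sc sLength}-1$. However, the step that both you and the paper rely on for the maximum-type measures is false: $\mbox{\sc sLength}(\Gamma)$ is \emph{not} in general the maximum cardinality of a minimal winning coalition. Take $N=\{1,2,3\}$ and $\cW^m=\{\{1\}\}$. Every minimal winning coalition has size $1$, yet $\{2,3\}$ is losing, so $\mbox{\sc sLength}(\Gamma)=3$ and $\mbox{\sc Width}(\Gamma)=2$, not $1$ and $0$. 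The faulty inference (it sits inside the paper's proof of Theorem~\ref{lem:slength-min-win}, which you quote) is that adjoining one player to a losing coalition of size $k-1$ yields a \emph{minimal} winning coalition of size $k$; the resulting coalition is winning, but it may properly contain a much smaller winning coalition, as the example shows. Only the inequality $\max\{|X| \mid X\in\cW^m\}\leq\mbox{\sc sLength}(\Gamma)$ is valid. Your treatment of \textsc{Length} and \textsc{sWidth} (minimum over $\cW^m$, then subtract one) is correct; the gap concerns exactly the two measures that the paper claims settle the open problem of~\cite{Azi09}.

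Moreover, the gap cannot be repaired for the $(N,\cW^m)$ representation. A coalition is losing if and only if it contains no member of $\cW^m$, so for the game whose minimal winning coalitions are the edges of a graph $G$ (a valid antichain), the losing coalitions are precisely the independent sets of $G$, and $\mbox{\sc Width}$ equals the maximum size of an independent set of $G$ --- an {\sc NP}-hard quantity, even though this instance of $(N,\cW^m)$ has size polynomial in $n$. Hence computing \textsc{Width} (equivalently \textsc{sLength}) from $(N,\cW^m)$ is {\sc NP}-hard, and the statement can only be salvaged for the $(N,\cW)$ form, where a different, counting-based argument works: $\cP_k(N)\subseteq\cW$ holds if and only if the number of size-$k$ coalitions appearing in the explicit list $\cW$ equals $|\cP_k(N)|$, a test that is polynomial in the length of that list; then $\mbox{\sc Width}=\mbox{\sc sLength}-1$ as before. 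So while your write-up is a faithful reconstruction of the paper's reasoning, that reasoning contains a step that fails, and with it the theorem itself fails for the minimal winning form unless {\sc P} $=$ {\sc NP}.
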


Our next result settles the complexity of the computation of the Banzhaf and Shapley-Shubik values of a given player.

\begin{definition}[\cite{Fre11}]
Let $\Gamma=(N,\cW)$ be a simple (influence)  game. For any  $i\in N$,  let $C_i=\{S\in\cW; S\bac\{i\}\notin\cW\}$ be the set of blocking teams (coalitions) for player $i$.
The {\em Banzhaf value} of $i$ on $\Gamma$ is $\eta_i(\Gamma)=|C_i|$ and the
{\em Shapley-Shubik value} of $i$ on $\Gamma$ is $$\kappa_i(\Gamma)=\sum_{S\in C_i}(|S|-1)!\,(n-|S|)!$$
The {\em Banzhaf index} of $i$ on $\Gamma$ is $\eta_i(\Gamma)/2^{n-1}$ and the
{\em Shapley-Shubik index} of $i$ on $\Gamma$ is $\kappa_i(\Gamma)/{n!}$.
\end{definition}

The computational complexity of the two indices is the same as that of the corresponding value.  We analyze here only the problems of computing the values. The two problems are denoted as {\sc Bval} and {\sc SSval}, respectively.

\begin{theorem}\label{the:Ban-SS}\label{the4}
Computing {\sc Bval} and {\sc SSval} for a given influence game and a given player is \#{\sc P}-complete.
\end{theorem}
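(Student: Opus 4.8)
The plan is to establish both membership in \#P and \#P-hardness. Membership is the routine direction. For a given influence game $(G,w,f,q,N)$ and player $i$, the Banzhaf value $\eta_i(\Gamma)=|C_i|$ counts the swing coalitions $S\in\cW$ with $S\setminus\{i\}\notin\cW$. A natural \#P verifier guesses a coalition $S\seb N$ and accepts iff $S$ is a swing for $i$; by Lemma~\ref{lem:Fpoly} and the subsequent lemma, testing whether $S$ is successful and whether $S\setminus\{i\}$ is unsuccessful takes polynomial time, so $\eta_i$ is the number of accepting witnesses of a polynomial-time predicate, placing {\sc Bval} in \#P. For {\sc SSval} the quantity $\kappa_i(\Gamma)=\sum_{S\in C_i}(|S|-1)!\,(n-|S|)!$ is a weighted count; the standard trick is to stratify by coalition size and observe that each term is a polynomially-computable weight, so $\kappa_i$ is expressible as a \#P function (equivalently, one counts accepting computations of a machine that additionally guesses a permutation consistent with $S$ being pivotal), giving membership.

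For hardness, I would reduce from a known \#P-complete counting problem. The cleanest route exploits the construction already in hand: by Lemma~\ref{lem:gammaG}, the game $\Gamma(G)$ attached to an undirected graph $G$ has exactly the vertex covers of $G$ as its successful teams, and this game is built in polynomial time. Counting vertex covers is \#P-complete, so I would leverage this. Concretely, I would relate the Banzhaf value of a suitably added player to the number of vertex covers (or independent sets) of $G$. The idea is to augment $G$ with a distinguished vertex $i$ so that a coalition is a swing for $i$ precisely when the remaining players form a structure we wish to count — then $\eta_i(\Gamma)$ equals (up to an easily-invertible polynomial correction) the number of such structures, yielding a parsimonious or weakly-parsimonious reduction from \#{\sc Vertex Cover}.

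The main obstacle will be controlling the swing condition so that the count is clean. The swing set $C_i$ requires both $S\in\cW$ and $S\setminus\{i\}\notin\cW$, so I must engineer the gadget so that player $i$ is critical in exactly the coalitions whose complement (within the non-$i$ players) encodes a valid combinatorial object, and that no spurious swings arise from the extra agents introduced by the graph construction. I expect to arrange that $i$ attaches to the influence structure in a way making $F(S)$ meet the quota only with $i$'s help, so that $\eta_i$ counts exactly the covers (or an easily-adjusted variant). For {\sc SSval} the same reduction graph can be reused, since a parsimonious reduction preserving the number of swings of each size transfers \#P-hardness from {\sc Bval} to {\sc SSval} via the size-stratified formula; alternatively one reduces directly from a \#P-complete problem whose \#P-hardness is already known to survive the Shapley-Shubik weighting, mirroring the weighted-game hardness of \cite{DP94,MM00} now transported to influence games through Theorem~\ref{the:WG-Inf}. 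Once the swing count equals the target count, \#P-hardness of both values follows immediately, completing the proof.
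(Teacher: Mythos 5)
Your \#P-membership argument and your {\sc Bval} hardness plan are essentially the paper's: the paper also reduces from counting vertex covers, taking $G'$ to be $G$ plus one new vertex $x$ adjacent to every vertex of $G$, and asking for the Banzhaf value of $x$ in $\Gamma(G')$ (Lemma~\ref{lem:gammaG}); the swings for $x$ are then exactly the sets $S\cup\{x\}$ with $S$ a vertex cover of $G$ and $S\neq V(G)$, so $\eta_x=\#\mathrm{VC}(G)-1$. You state this gadget only at the level of intent rather than exhibiting it and verifying the swing count, but the intended construction is the right one and that half is fine.

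The {\sc SSval} half has a genuine gap. Your primary route --- ``the same reduction graph can be reused, since a parsimonious reduction preserving the number of swings of each size transfers \#P-hardness from {\sc Bval} to {\sc SSval} via the size-stratified formula'' --- does not work as stated. An {\sc SSval} oracle returns the single number $\kappa_x=\sum_{s} c_s\,(s-1)!\,(n-s)!$, where $c_s$ is the number of swings for $x$ of cardinality $s$; in the vertex-cover reduction swings occur at many different sizes, and one aggregated value does not determine the vector $(c_s)$, so you cannot recover $\#\mathrm{VC}(G)=1+\sum_s c_s$ from $\kappa_x$ alone. Preserving swing counts by size is a property of the reduction, not an extraction procedure. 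The paper closes exactly this gap with a different tool: it invokes the criterion of \cite{Azi09} (Theorem 3.29) that \#P-hardness of {\sc Bval} transfers to {\sc SSval} whenever the representation is \emph{reasonable}, i.e., closed under adjoining a new player $x$ that must belong to every winning coalition, with only polynomial blow-up, and it then exhibits an explicit influence-graph gadget proving influence games are reasonable (add $x$, a collector node $y$ with label $q+1$ receiving edges from $x$ and from all of $V$, and $2n$ sink nodes fed by $y$, with quota $2n$). Your fallback route --- transporting the hardness of \cite{DP94,MM00} through Theorem~\ref{the:WG-Inf} --- is sound for the theorem as literally stated, but it only yields hardness for \emph{weighted} influence games, since Theorem~\ref{the:WG-Inf} produces weighted influence graphs and the unweighted conversion of Theorem~\ref{the:WG-UIG} is only pseudo-polynomial. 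The whole point of Section~\ref{sec:par-pro}, stated in its preamble, is to prove hardness for unweighted influence games on undirected graphs with polynomially many agents, precisely the subclass to which such transport arguments do not apply; the paper's proof achieves this and your proposal, as written, does not.
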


\begin{proof} Both problems belong trivially to \#{\sc P}. To show hardness we construct a reduction for the problem of computing the number of vertex covers of a given graph which is known to be \#P-complete~\cite{GJ79}. 
Let $G$ be a graph, we first construct the graph $G'$ which is obtained from $G$ adding a  new vertex $x$ and connecting $x$ to all the vertices in $G$. The associated input to {\sc Bval} is formed by the influence game $\Gamma(G')$ and the player $x$. Observe that the reduction can be computed in polynomial time. 

Let $X$ be a successful team in  $\Gamma(G')$ such that $x\in X$. When $X\neq V(G')$ we know that $X\setminus \{x\}$  must be a vertex cover of $G$. Furthermore $x\in C_x$  as $X\setminus \{x\}$ is not winning in $\Gamma(G')$. When $X= V(G')$,  $X\setminus \{x\}$  is winning in $\Gamma(G')$ and thus  $x\notin C_x$.
As a consequence, we have that $\eta_x(\Gamma)$ coincides with the number of vertex covers of $G$ minus one. As computing the number of vertex covers of a graph is \#P-hard, we have that {\sc Bval} is \#P-hard.

According to \cite{Azi09} (Theorem 3.29, page 50), to prove that {\sc SSval} is \#P-hard, it is enough to show that {\sc Bval} is \#P-hard and that influence games verify the property of being a {\em reasonable representation}. This last condition is stated as follows.  For a simple game $\Gamma= (N,\cW)$, consider the game $\Gamma'=(N\cup\{x\},\cW')$, where $x$ is a new player and   $\cW'=\{S\cup\{x\}\mid S\in \cW\}$.  A representation of simple games is \emph{reasonable} if a representation of the  game $\Gamma'$ can be computed with only polynomial blow-up with respect to a given representation of the game $\Gamma$. In the  remaining of this proof we show that influence games are a reasonable representation.

Let $\Gamma=(G,w,f,q,N)$ be an influence game, and assume that $G=(V,E)$ has $n$ vertices and $m$ edges.
Consider the influence graph $(G',w',f')$ where 
\begin{itemize}
\item $G'=(V',E')$ and $V'= V \cup \{x,y\} \cup \{a_1,\dots , a_{2n}\}$,
\item $E'= E\cup \{(x,y)\} \cup \{(v,y)\mid v\in V\}\cup \{(y,a_i)\mid 1\leq i\leq 2n\}$,
\item $w'(e)= w(e)$, for any $e\in E$, and $w'(e)= 1$, for any $e\in E'\setminus E$, and
\item $f'(v)=f(v)$, for any $v\in V$, $f'(x)=1$, $f'(y)=q+1$, and $f'(a_i)=1$, for any $1\leq i\leq 2n$.
\end{itemize}
Finally, we consider the influence  game $\Gamma^+=(G',w',f',q',N')$ where $q'=2n$ and $N'=N\cup\{x\}$.

From the previous construction, it follows  that all the winning coalitions in  $\Gamma^+$ must include $x$. Furthermore, $X\cup \{x\}$ is a winning coalition in $\Gamma^+$ if and only if  $X$ is a winning coalition in $\Gamma$. Therefore, $\Gamma^+$ is a representation of $\Gamma'$  and has polynomial size with respect to the size of $\Gamma$. So,  we conclude that  influence games are a  reasonable representation.
\end{proof}

Now we consider another set of problems that reflect fundamental properties of simple games. The input to the following problems is a game $\Gamma$.

\begin{tabbing}
 xxxxxxxxxxxxxxxx    \= xxxxxxxxxxxxxxxxxxxxxxxxxxxxxxxxxxxxxxxxxxxxxxxx \kill
 {\sc IsProper}:     \> Determine whether $\Gamma$ is proper, i.e., whether, for any $X\in \cW$, $N\setminus X\in \cL$. 
\end{tabbing}

\begin{tabbing}
 xxxxxxxxxxxxxxxx    \= xxxxxxxxxxxxxxxxxxxxxxxxxxxxxxxxxxxxxxxxxxxxxxxx \kill
 {\sc IsStrong}:     \> Determine whether $\Gamma$ is strong, i.e., whether, for any $X\in \cL$, $N\setminus X\in \cW$. 
\end{tabbing}

\begin{tabbing}
 xxxxxxxxxxxxxxxx    \= xxxxxxxxxxxxxxxxxxxxxxxxxxxxxxxxxxxxxxxxxxxxxxxx \kill
 {\sc IsDecisive}:   \> Determine whether $\Gamma$ is decisive, i.e., whether $X\in \cW$ if and only if $N\setminus X\in \cL$. 
\end{tabbing}

When a simple game $\Gamma$ is given by $(N,\cW^m)$, it is known that the \textsc{IsProper} problem can be decided in polynomial time. Here we can check,   for any $X\in\cW^m$, $N\bac X\notin\cW$ in polynomial time. Further, the \textsc {IsStrong} problem is  coNP-complete~\cite{Pol08} and the \textsc{IsDecisive} problem  can be solved in  quasi-polynomial time~\cite{FK96} but is not known to be polynomial time solvable. All these results were obtained in the context of simple game theory~\cite{RP11,FMOS12}.

The following problems consider properties of a player with respect to an influence game. Their input is composed of  a simple game $\Gamma$ and player $i$.

\begin{tabbing}
 xxxxxxxxxxxxxxxx    \= xxxxxxxxxxxxxxxxxxxxxxxxxxxxxxxxxxxxxxxxxxxxxxxx \kill
 {\sc IsDummy}:      \>  Determine whether player $i$ is a dummy player,\\
\>  i.e., whether, for any team $X\in \cW$ with $i\in X$,  $X\setminus\{i\}\in \cW$.
\end{tabbing}

\begin{tabbing}
 xxxxxxxxxxxxxxxx    \= xxxxxxxxxxxxxxxxxxxxxxxxxxxxxxxxxxxxxxxxxxxxxxxx \kill
 {\sc IsPasser}:     \>  Determine whether player $i$ is a passer,\\
\>  i.e., whether, for any team $X$ with $i\in X$,  $X\in \cW$.
\end{tabbing}

\begin{tabbing}
 xxxxxxxxxxxxxxxx    \= xxxxxxxxxxxxxxxxxxxxxxxxxxxxxxxxxxxxxxxxxxxxxxxx \kill
 {\sc IsVetoer}:     \>   Determine whether player $i$ is  a veto player, \\
\> i.e., whether, for any team $X$ with $i\notin X$,  $X\in \cL$.
\end{tabbing}

\begin{tabbing}
 xxxxxxxxxxxxxxxx    \= xxxxxxxxxxxxxxxxxxxxxxxxxxxxxxxxxxxxxxxxxxxxxxxx \kill
 {\sc IsDictator}:   \>   Determine whether player $i$ is a dictator,\\
\> i.e., whether, $X\in \cW$ if and only if $i\in X$. 
\end{tabbing}

\begin{tabbing}
 xxxxxxxxxxxxxxxx    \= xxxxxxxxxxxxxxxxxxxxxxxxxxxxxxxxxxxxxxxxxxxxxxxx \kill
 {\sc IsCritical}:   \> Given, in addition,  a team $X$ including $i$,  determine whether $i$ is  critical for $X$,\\
\>  i.e., $X\in \cW$  and $X\setminus\{i\}\in \cL$.
\end{tabbing}

\begin{tabbing}
 xxxxxxxxxxxxxxxx    \= xxxxxxxxxxxxxxxxxxxxxxxxxxxxxxxxxxxxxxxxxxxxxxxx \kill
 {\sc AreSymmetric}: \> Given, in addition,  an agent $j$, determine whether players $i$ and $j$ are symmetric, \\
\>i.e., whether,  for any team $X$, $X\cup\{i\}\in \cW$  if and only if $X\cup\{j\}\in \cW$.
\end{tabbing}

The following problems consider properties of a team of agents  $X$ with respect to an influence game. Now the input is composed of a game $\Gamma$ and a team $X$.  

\begin{tabbing}
 xxxxxxxxxxxxxxxx    \= xxxxxxxxxxxxxxxxxxxxxxxxxxxxxxxxxxxxxxxxxxxxxxxx \kill
 {\sc IsBlocking}:   \> Determine whether $X$ is a blocking team, i.e., whether $N\setminus X\in \cL$.
\end{tabbing}

\begin{tabbing}
 xxxxxxxxxxxxxxxx    \= xxxxxxxxxxxxxxxxxxxxxxxxxxxxxxxxxxxxxxxxxxxxxxxx \kill
 {\sc IsSwing}:      \> Determine whether $X$ is a swing, i.e., whether $X\in \cW$ but there is $i\in X$ such that\\
		     \> $X\setminus \{i\}\in\cL$. 
\end{tabbing}

To analyze the complexity of the above problems we consider a new construction.
Let $G=(V,E)$ be a graph where $V=\{v_1,\dots,v_n\}$ and $E=\{e_1,\dots, e_m\}$, and let $k$ be an integer (which will be useful to consider a set cover of size $k$ or less).
Then the unweighted influence game $\Delta_1(G,k)=(G_1,f_1,q_1,N_1)$ is defined as follows, where Figure~\ref{fig:Delta} shows the corresponding influence graph.

\begin{figure}[t]
\centering
\begin{tikzpicture}[every node/.style={circle,scale=0.7}, >=latex]

\node[draw](v1)  at (1,3.5)[label=left:\Large{$v_1$}] {$m+2$};
\node[draw](vn)  at (1,2.5)[label=left:\Large{$v_{n}$}] {$m+2$};

\node[draw](x)   at (3.5,5.5)[label=above:\Large{$x$}] {$k+1$};
\node[draw](e1)  at (3,4.3)[label=right:\Large{$e_1$}] {$1$};
\node[draw](em)  at (3,1.7)[label=right:\Large{$e_{m}$}] {$1$};
\node[draw](z)   at (4,1)[label=right:\Large{$z$}] {$2$};

\node[draw](y)   at (5,3)[label=left:\Large{$y$}] {$m+1$};

\node[draw](s1)  at (8,4)[label=right:\Large{$s_1$}] {$1$};
\node[scale=1.5] at (8,3.5){$\ \ \,\vdots\ \ $};
\node[scale=1.5] at (8,3.0){$\ \ \,\vdots\ \ $};
\node[scale=1.5] at (8,2.6){$\ \ \,\vdots\ \ $};
\node[draw](smn4)at (8,2)[label=right:\Large{$s_{n+m+4}$}] {$1$};

\draw[-](v1) to node{}(x);
\draw[-](vn) to node{}(x);

\node[scale=0.8](ig)   at (2,3){ $G$'s incidence graph};
\draw[-,dotted](v1) to node{}(e1);
\draw[-,dotted](v1) to node{}(vn);

\draw[-,dotted](e1) to node{}(em);
\draw[-,dotted](vn) to node{}(em);

\draw[-](x) to node{}(s1);
\draw[-](x) to node{}(smn4);

\draw[-](e1) to node{}(y);
\draw[-](em) to node{}(y);
\draw[-](z) to node{}(y);

\draw[-](y) to node{}(s1);
\draw[-](y) to node{}(smn4);
\end{tikzpicture}
\caption{Influence graph $(G_1,f_1)$ of the game $\Delta_1(G,k)$.\label{fig:Delta}}
\end{figure}
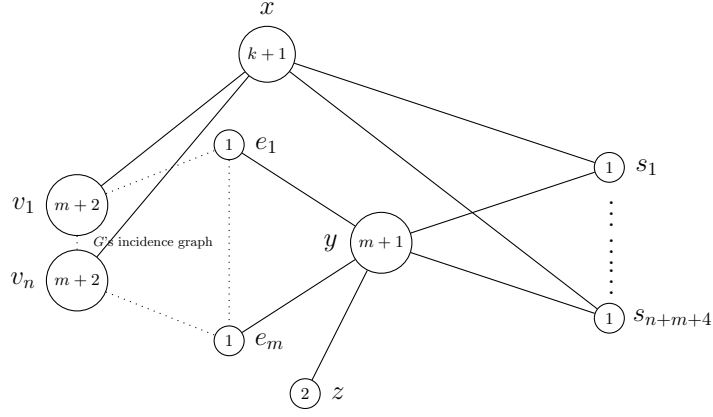

The graph $G_1=(V_1,E_1)$ has 
$V_1=\{v_1,\dots,v_n, e_1,\dots, e_m, x,y,z, s_1,\dots,s_\alpha\}$
where $\alpha=m+n+4$. The edges in $E_1$ are constructed as follows.
We include the incidence graph of $G$: for any $e=(v_i,v_j)\in E$, we add to $E_1$ the edges $(e,v_i)$, $(e,v_j)$ and $(e,y)$.
For any $1\leq i\leq n$, we add the edge $(v_i,x)$.
For any $1\leq j\leq \alpha$, we add  the edges $(x,s_j)$ and $(y,s_j)$.
Finally, we add the edge $(z,y)$.
The labeling function $f_1$ is  defined as: $f_1(v_i)=m+2$, $1\leq i\leq n$;
$f_1(e_j)=1$, $1\leq j \leq m$;
$f_1(s_\ell)=1$, $1\leq \ell\leq \alpha$; and 
$f_1(z)=2$, $f_1(x)=k+1$, $f_1(y)=m+1$.
The quota  is  $q_1=\alpha$ and the set of players is $N_1=\{v_1,\dots,v_n, z\}$.

Observe that by construction the games $\Gamma(G)$ and $\Delta_1(G,k)$ can be obtained in polynomial time.
As an immediate consequence of the definition, we have that
$X$ is a successful team in $\Delta_1(G,k)$ if and only if either $(|X\cap V|\ge k+1)$ or $z\in X$ and $X\setminus z$ is a vertex cover  in $G$.

Our next result settles the complexity of the problems which are {\sc coNP}-complete.

\begin{theorem}\label{lem:dummy}
For unweighted influence games with polynomial number of vertices, the problems {\sc AreSymmetric}, {\sc IsDummy}, {\sc IsProper}, {\sc IsStrong} and {\sc IsDecisive} are  {\sc coNP}-complete.
\end{theorem}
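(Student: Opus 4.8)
The plan is to prove both directions uniformly where possible. First I would settle membership in {\sc coNP} for all five problems at once. Each of them is a universally quantified statement over teams, so its complement is an existential statement whose witness is a single team (or, for {\sc AreSymmetric}, a team plus the information of which of the two players is added). Since deciding whether a team is successful is polynomial (Lemma~\ref{lem:Fpoly}), every such witness is verifiable in polynomial time: the complement of {\sc IsDummy} guesses $X$ with $i\in X$, $X\in\cW$, $X\setminus\{i\}\in\cL$; the complement of {\sc IsProper} guesses $X$ with $X\in\cW$ and $N\setminus X\in\cW$; the complement of {\sc IsStrong} guesses $X$ with $X\in\cL$ and $N\setminus X\in\cL$; the complement of {\sc IsDecisive} guesses a witness of either kind; and the complement of {\sc AreSymmetric} guesses $X$ with $X\cup\{i\}$ and $X\cup\{j\}$ differing in status. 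Hence all five lie in {\sc coNP}, and it remains to prove {\sc coNP}-hardness, i.e.\ to reduce {\sc Vertex Cover} to each complement.

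The cleanest hardness arguments reuse the game $\Delta_1(G,k)$, whose winning teams are exactly those with $|X\cap V|\ge k+1$ or ($z\in X$ and $X\setminus\{z\}$ a vertex cover). For {\sc IsDummy} I would take the player $z$: analysing criticality, $z$ is critical for $X$ iff $X$ wins through the vertex-cover branch while $X\setminus\{z\}$ loses, which forces $X\cap V$ to be a vertex cover of size at most $k$. Thus $z$ is a dummy if and only if $G$ has no vertex cover of size $\le k$, giving the reduction. For {\sc AreSymmetric} I would augment $\Delta_1(G,k)$ with an extra isolated player $z'$; since a team failing to fire $x$ or $y$ activates at most $n+m+2<q_1$ nodes whereas firing either activates all $\alpha$ sinks, the lone node $z'$ can never bridge the quota gap, so $z'$ is always a dummy. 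A dummy is symmetric to another player exactly when that player is also a dummy, so $z$ and $z'$ are symmetric iff $z$ is a dummy iff $G$ has no vertex cover of size $\le k$.

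For {\sc IsProper} and {\sc IsDecisive} I would first pad the instance so that $n=2k+1$ (adding a disjoint $K_m$ with $m=n-2k+1$ when $k\le (n-1)/2$, and isolated vertices otherwise), which preserves the answer to {\sc Vertex Cover}. In $\Delta_1(G,k)$ with $n=2k+1$ the vertex $z$ lies in exactly one side of any complementary pair, and since the two sides partition the $2k+1$ nodes of $V$ they cannot both have $\le k$ or both have $\ge k+1$ of them. A short case analysis then shows the game is \emph{always strong} and is improper precisely when some $X=C\cup\{z\}$ with $C$ a vertex cover of size $\le k$ has winning complement $V\setminus C$ of size $\ge k+1$. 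Hence it is improper iff $G$ has a vertex cover of size $\le k$, settling {\sc IsProper}; and because it is always strong, it is decisive iff proper, settling {\sc IsDecisive} by the same reduction.

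The hard part will be {\sc IsStrong}, because $\Delta_1$ is either always strong (when $n=2k+1$) or has its non-strong witnesses collapse to the trivial condition ``some set of the right size is not a cover,'' which carries no hardness. My plan is to reduce instead to the dual of the padded $\Delta_1$ gadget: using $\Gamma$ strong $\iff$ $\Gamma^*$ proper, the target game is the one whose winning teams are $\{X\mid |X\cap V|\ge k+1 \text{ and } (z\in X \text{ or } X\cap V \text{ contains an edge})\}$, which is monotone and, being the dual of an always-strong/improper-iff-cover game, is always proper and not strong exactly when $G$ has a vertex cover of size $\le k$. To realise it as an unweighted influence game with polynomially many agents I would build its three ingredients separately—the unit-weight threshold game $[k+1;1,\dots,1]$ (via Theorem~\ref{the:WG-UIG}, polynomial here since the weights are $1$), a ``contains an edge'' gadget (a threshold-$2$ node per edge feeding an OR), and a passer gadget for $z$—and combine them by the polynomial-time intersection and union constructions of Theorem~\ref{the:SG-UIG}, all of which keep the game unweighted. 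Verifying that this dual game's strongness indeed matches the vertex-cover condition, rather than introducing spurious complementary pairs through the combining gadget's auxiliary nodes, is the step I expect to require the most care.
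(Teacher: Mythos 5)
Your proposal is correct, and on four of the five problems it essentially retraces the paper's own proof: the same membership argument, the same gadget $\Delta_1(G,k)$ with player $z$ for {\sc IsDummy}, and the same two-step reduction ({\sc Vertex Cover} $\to$ {\sc Half vertex cover} $\to$ $\Delta_1(G,(n-1)/2)$) for {\sc IsProper} and {\sc IsDecisive} — your disjoint-clique/isolated-vertices padding replaces the paper's apex-vertex gadget $\hat{G}$, and your explicit ``always strong when $n=2k+1$'' observation is a cleaner justification of the decisive case than the paper's. You depart genuinely in two places. For {\sc AreSymmetric} the paper builds a second gadget $\Delta_2(G,k)$ with extra vertices $t,s$ and shows $z$ and $t$ are symmetric iff $G$ has no small cover; you instead adjoin an isolated player $z'$, which is trivially a dummy, and invoke the lemma that a dummy is symmetric to $j$ iff $j$ is itself a dummy — this lemma is true and easy to verify (if $j$ is not a dummy, take a team witnessing $j$'s criticality, strip $i$ from it using $i$'s dummyness, and you get a team separating $i$ from $j$), but you should state and prove it rather than assert it. For {\sc IsStrong} the paper designs a third gadget $\Delta_3(G,n/2)$ and reduces from {\sc Half independent set}; you instead dualize the {\sc IsProper} gadget ($\Gamma$ strong iff $\Gamma^*$ proper) and realize the dual game $A\cap(B\cup C)$ by composing three elementary influence games via Theorems~\ref{the:WG-UIG} and~\ref{the:SG-UIG}. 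This is sound: your computation of the dual's winning family is right, and the closure constructions do keep all edge weights equal to $1$ when the inputs are unweighted (a fact you need but the paper never states; it holds by inspection of Figures~\ref{fig:SG-Fi(X)} and~\ref{fig:UI-IGs}). Moreover, your closing worry about ``spurious complementary pairs'' is unfounded: {\sc IsStrong} is a property of the induced simple game on the player set $N$ alone, and Theorem~\ref{the:SG-UIG} guarantees the composed influence game induces exactly $A\cap(B\cup C)$; the auxiliary agents are not players, so they cannot create new complementary pairs. In terms of trade-offs, the paper's route is self-contained with small explicit gadgets, while yours is more modular and conceptually unified (one gadget plus duality and closure), at the cost of larger — though still polynomial — constructions and reliance on the composition machinery.
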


\begin{proof}
Membership in {\sc coNP} follows from the definitions.
To get the hardness results, we provide reductions from the complement of the {\sc Vertex Cover} problem and some other problems derived from it. Let $(G,k)$ be an input to {\sc Vertex Cover}, as usual we assume that $G$ has $n$ vertices and $m$ edges. 

Let us start considering the  {\sc IsDummy} problem. Starting from $G=(V,E)$ and  $k$, we construct the unweighted influence game
$\Delta_1(G,k)$ and the  pair $(\Delta_1(G,k), z)$ which is an instance of the {\sc IsDummy} problem. 
If $G$ has a vertex cover $X$ with size $k$ or less, by construction, we have that $X\cup\{z\}$ is a successful team of $\Delta_1(G,k)$. Furthermore, if $X$ is a vertex cover of minimum size, we have that $X\cup\{z\}$ is a minimal successful team. Therefore, $z$ is not a dummy player in $\Delta_1(G,k)$. If $G$ does not have a vertex cover with size $k$ or less and $X$ is a successful team containing $z$, it must hold that $|X\setminus\{z\}| > k$, therefore $X\setminus\{z\}$ is a successful team. In consequence $z$ is a dummy player in $\Delta_1(G,k)$.
As  the pair $(\Delta_1(G,k), z)$ is computable in polynomial time, we have the desired result.

Let us consider now the  {\sc AreSymmetric} problem. Starting from $G=(V,E)$ and  $k$, we construct the unweighted influence game $\Delta_2(G,k)=(G_2,f_2,q_2,N_2)$  (see Figure~\ref{fig:Delta2}).
$G_2$ is obtained from the graph $G_1$ appearing in the construction of $\Delta_1(G,k)$
by adding two new vertices $t$ and $s$ and the edges $(x,s)$, $(y,s)$ and $(t,s)$. Recall that $V(G_1)=\{v_1,\dots,v_n, e_1,\dots, e_m, x,y,z, s_1,\dots,s_\alpha\}$.The label function is the following: $f_2(v)=f_1(v)$, for $v\in V(G_2)\cap V(G_1)$; $f_2(s)=4$; $f_2(t)=2$. Finally, $q_2=\alpha+1 = n+m+5$ and $N_2=\{v_1,\dots,v_n,z,t\}$. Note that  a description of $G_2$ can be obtained in polynomial time as well as a description of   $\Delta_2(G,k)$ given a description of $(G,k)$. Let us show that the construction is indeed a reduction.

When $G$ has a vertex cover $X$ of size $k$ or less, by construction the team $X\cup \{z\}$ is successful in $\Delta_2(G,k)$ while the team $X\cup \{t\}$ is unsuccessful. Therefore $z$ and $t$ are not symmetric.
When $G$ does not have a vertex cover $X$ of size $k$ or less, by construction, any successful team $Y$ must contain a subset with at least $k+1$ vertices from $\{v_1,\dots, v_n\}$. Therefore both $Y\cup \{z\}$ and $Y\cup \{t\}$ are successful teams in $\Delta_2(G,k)$, i.e., vertices $z$ and $t$ are  symmetric.

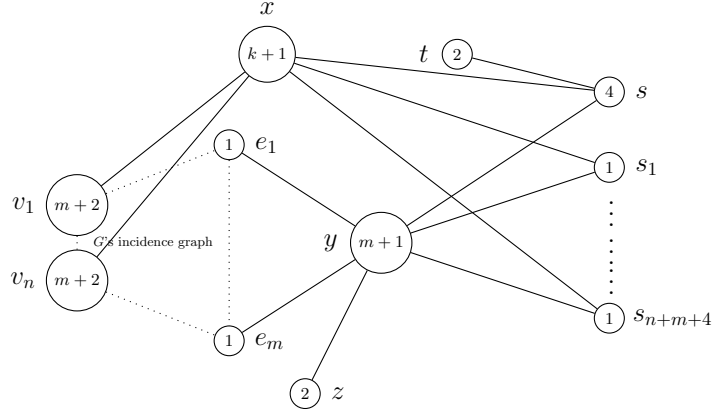
\begin{figure}[t]
\centering
\begin{tikzpicture}[every node/.style={circle,scale=0.7}, >=latex]

\node[draw](v1)  at (1,3.5)[label=left:\Large{$v_1$}] {$m+2$};
\node[draw](vn)  at (1,2.5)[label=left:\Large{$v_{n}$}] {$m+2$};

\node[draw](x)   at (3.5,5.5)[label=above:\Large{$x$}] {$k+1$};
\node[draw](e1)  at (3,4.3)[label=right:\Large{$e_1$}] {$1$};
\node[draw](em)  at (3,1.7)[label=right:\Large{$e_{m}$}] {$1$};
\node[draw](z)   at (4,1)[label=right:\Large{$z$}] {$2$};

\node[draw](y)   at (5,3)[label=left:\Large{$y$}] {$m+1$};
\node[draw](t)   at (6,5.5)[label=left:\Large{$t$}] {$2$};

\node[draw](s)   at (8,5)[label=right:\Large{$s$}] {$4$};
\node[draw](s1)  at (8,4)[label=right:\Large{$s_1$}] {$1$};
\node[scale=1.5] at (8,3.5){$\ \ \,\vdots\ \ $};
\node[scale=1.5] at (8,3.0){$\ \ \,\vdots\ \ $};
\node[scale=1.5] at (8,2.6){$\ \ \,\vdots\ \ $};
\node[draw](smn4)at (8,2)[label=right:\Large{$s_{n+m+4}$}] {$1$};

\draw[-](v1) to node{}(x);
\draw[-](vn) to node{}(x);

\node[scale=0.8](ig)   at (2,3){ $G$'s incidence graph};
\draw[-,dotted](v1) to node{}(e1);
\draw[-,dotted](v1) to node{}(vn);

\draw[-,dotted](e1) to node{}(em);
\draw[-,dotted](vn) to node{}(em);

\draw[-](x) to node{}(s1);
\draw[-](x) to node{}(smn4);

\draw[-](e1) to node{}(y);
\draw[-](em) to node{}(y);
\draw[-](z) to node{}(y);

\draw[-](y) to node{}(s1);
\draw[-](y) to node{}(smn4);

\draw[-](x) to node{}(s);
\draw[-](y) to node{}(s);
\draw[-](t) to node{}(s);
\end{tikzpicture}
\caption{Influence graph $(G_2,f_2)$ used in the definition of the game $\Delta_2(G,k)$.}\label{fig:Delta2}
\end{figure}

To prove hardness for the next two problems, \textsc{IsProper} and \textsc{IsDecisive}, we provide a reduction from the following variation of the {\sc Vertex Cover} problem:

\begin{tabbing}
 xxxxxxxxxxxxxxxxxxxxxx    \= xxxxxxxxxxxxxxxxxxxxxxxxxxxxxxxxxxxxxxxxxxxxxxxx \kill
 {\sc Half vertex cover}:  \> Given an undirected graph with an odd number of vertices $n$.\\
                           \> Is there a vertex cover with size $(n-1)/2$ or less?
\end{tabbing}
We first show that the {\sc Half vertex cover} problem  is NP-complete.
By definition the problem belongs to {NP}. To prove hardness we show a reduction from  the {\sc Vertex Cover} problem.
Given a graph $G$ with $n$ vertices and an integer $k$, $0\leq k\leq n$, we construct a graph $\hat{G}$ as follows (see Figure~\ref{fig:Ghat}).
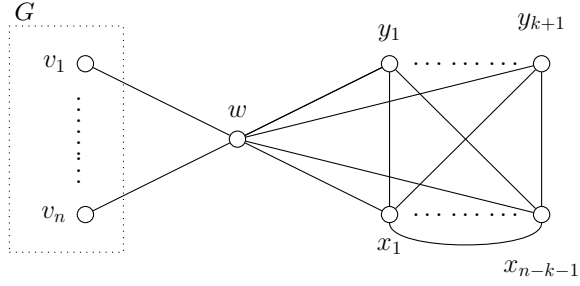
\begin{figure}[t]
\centering
\begin{tikzpicture}[every node/.style={circle,scale=0.7}, >=latex]
\node[scale=1.5] at (0.2,4.7){$G$};
\draw[dotted] (0,1.5) rectangle (1.5,4.5);

\node[draw](v1)  at (1,4   )[label=left:\Large{$v_1$}] {};
\node[scale=1.5] at (1,3.5 ){$\vdots\ \,$};
\node[scale=1.5] at (1,3.06){$\vdots\ \,$};
\node[scale=1.5] at (1,2.7 ){$\vdots\ \,$};
\node[draw](vn)  at (1,2   )[label=left:\Large{$v_{n}$}] {};

\node[draw](w)   at (3,3)[label=above:\Large{$w$}] {};

\node[draw](y1)  at (5  ,4)[label=above:\Large{$y_1$}] {};
\node[scale=1.5] at (5.5,4){$\ldots$};
\node[scale=1.5] at (6.0,4){$\ldots$};
\node[scale=1.5] at (6.5,4){$\ldots$};
\node[draw](yn)  at (7  ,4)[label=above:\Large{$y_{k+1}$}] {};

\node[draw](x1)  at (5  ,2)[label=below:\Large{$x_1$}] {};
\node[scale=1.5] at (5.5,2){$\ldots$};
\node[scale=1.5] at (6.0,2){$\ldots$};
\node[scale=1.5] at (6.5,2){$\ldots$};
\node[draw](xn)  at (7 ,2)[label=below:\Large{$x_{n-k-1}$}] {};

\draw[-](v1) to node{}(w);
\draw[-](vn) to node{}(w);
\draw[-](w) to node{}(x1);
\draw[-](w) to node{}(xn);
\draw[-](w) to node{}(y1);
\draw[-](w) to node{}(y1);
\draw[-](w) to node{}(yn);
\draw[-](x1) to node{}(y1);
\draw[-](x1) to node{}(yn);
\draw[-](xn) to node{}(y1);
\draw[-](xn) to node{}(yn);
\draw[-](x1) .. controls +(down:.5cm) and +(down:.5cm) .. node[above,sloped]{}(xn);

\end{tikzpicture}
\caption{Graph $\hat{G}$ used to prove that {\sc Half vertex cover} is {\sc NP}-hard.}\label{fig:Ghat}
\end{figure}
$\hat{G}$ has vertex set $\hat{V}=V(G) \cup X \cup Y \cup \{w\}$, where $X$ has $n-k-1$ vertices, $Y$ has $k+1$ vertices, and edge set

$$\hat{E}= E \cup \{(x,x') \mid x \neq x'\,\wedge\,x,x'\in X\} \cup \{(x,y)\mid x\in X, y\in Y\} \cup \{(w,z)\mid z\in V \cup X \cup Y \}$$

By construction, $\hat{G}$ has $2n+1$ vertices, so it can be constructed in polynomial time.
Note that any vertex cover $S$ of $\hat{G}$ with minimum size has to contain $w$, all the vertices in $X$ and no vertex from $Y$. The remaining of the cover,  $S\cap V$ must be a minimum vertex cover of $G$. Therefore, $G$ has a vertex cover of size $k$ or less if and only if $\hat{G}$ has a vertex cover of size $n$ or less.

Let us provide a reduction from the \textsc{Half vertex cover} to the \textsc{IsProper} and the \textsc{IsDecisive} problems. 
Let $G$ be an instance of {\sc Half vertex cover} with $2k+1$ vertices, for some value $k\geq 1$. Consider the unweighted influence game $\Delta_1(G,(n-1)/2)=(G_1,f_1,q_1, N_1)$.  Recall that $V(G')=\{v_1,\dots,v_n, e_1,\dots, e_m, x,y,z, s_1,\dots,s_\alpha\}$ where $\alpha=n+m+4$,  $q_1=n+m+5$, and $N_1=\{v_1,\dots,v_n, z\}$. Let $k=(n-1)/2$.

If $G$ has a vertex cover $X$ with $|X|\leq k$, the team $X\cup \{z\}$ is successful and, as  $n+1-|X\cup \{z\}|>k$, we have that $N\setminus (X\cup \{z\})$ is also successful. Hence $\Delta_1(G,k)$ is not proper.
When all the vertex covers of $G$ have more than $k$ vertices, any successful team $Y$ of $\Delta_1(G,k)$
verifies $|Y\cap \{v_1,\dots,v_n\}| > k$, i.e.,  $|Y\cap \{v_1,\dots,v_n\}| \ge k+1$. For a successful team $Y$, we have to consider two cases: $z\in Y$ and $z \notin Y$.
When  $z\in Y$, $N\setminus Y \seb \{v_1,\dots,v_n\}$ and $|N\setminus Y| \leq n-k-1 =k$.  Thus, $N\setminus Y$ is an unsuccessful team.
When $z \notin Y$, $|N\setminus (Y \cup \{z\})|\leq  k$ and $N\setminus Y$ is again an unsuccessful team.
So, we conclude that $\Delta_1(G,(n-1)/2)$ is proper.
As $\Delta_1(G,(n-1)/2)$ can be obtained in polynomial time, the {\sc IsProper} problem is {coNP}-hard.

Observe that when $G$ is an instance of the \textsc{Half vertex cover} and all the vertex covers of $G$ have more than $(n-1)/2$ vertices, the game $\Delta_1(G,(n-1)/2)$ is also decisive. When this condition is not met, the game $\Delta_1(G,(n-1)/2)$ is not proper and thus it is not decisive. Thus, we conclude that the  {\sc IsDecisive}  problem is also coNP-hard. 

\begin{figure}[t]
\centering
\begin{tikzpicture}[every node/.style={circle,scale=0.7}, >=latex]

\node[draw](v1)  at (1,3.5)[label=left:\Large{$v_1$}] {$m+2$};
\node[draw](vn)  at (1,2.5)[label=left:\Large{$v_{n}$}] {$m+2$};

\node[draw](x)   at (3.5,5.5)[label=above:\Large{$x$}] {$k+1$};
\node[draw](e1)  at (3,4.3)[label=right:\Large{$e_1$}] {$2$};
\node[draw](em)  at (3,1.7)[label=right:\Large{$e_{m}$}] {$2$};
\node[draw](z)   at (4,1)[label=right:\Large{$z$}] {$1$};

\node[draw](y)   at (5,3)[label=left:\Large{$y$}] {$1$};
\node[draw](t)   at (6,3)[label=right:\Large{$t$}] {$2$};

\node[draw](s1)  at (8,4)[label=right:\Large{$s_1$}] {$1$};
\node[scale=1.5] at (8,3.5){$\ \ \,\vdots\ \ $};
\node[scale=1.5] at (8,3.0){$\ \ \,\vdots\ \ $};
\node[scale=1.5] at (8,2.6){$\ \ \,\vdots\ \ $};
\node[draw](smn4)at (8,2)[label=right:\Large{$s_{n+m+4}$}] {$1$};

\draw[-](v1) to node{}(x);
\draw[-](vn) to node{}(x);

\node[scale=0.8](ig)   at (2,3){ $G$'s incidence graph};
\draw[-,dotted](v1) to node{}(e1);
\draw[-,dotted](v1) to node{}(vn);

\draw[-,dotted](e1) to node{}(em);
\draw[-,dotted](vn) to node{}(em);

\draw[-](x) to node{}(s1);
\draw[-](x) to node{}(smn4);

\draw[-](e1) to node{}(y);
\draw[-](em) to node{}(y);
\draw[-](z) to node{}(t);
\draw[-](y) to node{}(t);

\draw[-](t) to node{}(s1);
\draw[-](t) to node{}(smn4);
\end{tikzpicture}
\caption{Influence graph $(G_3,f_3)$ of the game $\Delta_3(G,k)$.\label{fig:Delta3}}
\end{figure}
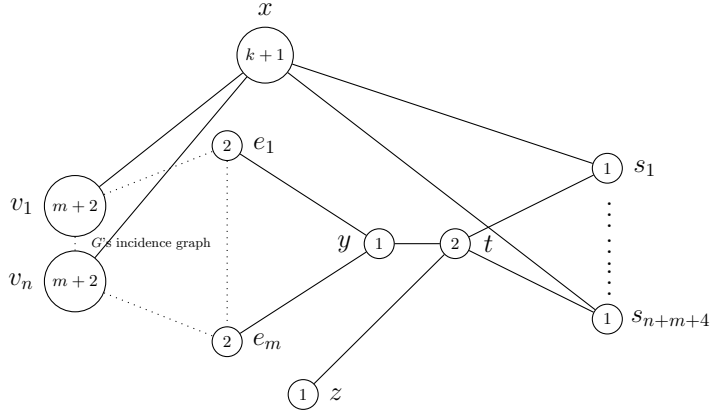

To finish the proof we show hardness for the \textsc{IsStrong} problem. We provide a  reduction  from the complement of the following problem.

\begin{tabbing}
 xxxxxxxxxxxxxxxxxxxxxxxxx    \= xxxxxxxxxxxxxxxxxxxxxxxxxxxxxxxxxxxxxxxxxxxxxxxx \kill
 {\sc Half independent set}:  \> Given an undirected graph with an even number of vertices $n$.\\
                              \> Is there an independent set  with size $n/2$ or higher?
\end{tabbing}

The \textsc{Half independent set} trivially belongs to NP. Hardness  follows from a simple reduction from the \textsc{Half independent set}. Starting from a graph $G$ with an odd number of vertices we construct a new graph $G'$ by adding one new vertex connected to all the vertices in $G$. This construction guarantees that $G$ has a vertex cover of size $(n-1)/2$ or less if and only if $G'$ has a vertex cover with size $n/2$ or less. As the complement of a  vertex cover is an independent set, we have that $G$ has a vertex cover of size $(n-1)/2$  if and only if $G'$ has an independent set with size $n/2$ or higher. 

Now we show that the complement of the \textsc{Half independent set} problem can be reduced to the \textsc{IsStrong} problem. We associate to an input to  \textsc{Half independent set} the game $\Delta_3(G,n/2)=(G_3,f_3,n+m+5,N_3)$ where $N_3=V\cup \{z\}$  and $(G_3,f_3)$ is the influence graph described in Figure~\ref{fig:Delta3}. Which is a variation of $\Delta_1(G,k)$ using  ideas similar to those in the reduction from the \textsc{Set Packing} problem in Theorem~\ref{the:length-width-hard}.

When $G$ has an independent set  with size at least $n/2$, $G$ also has an independent set   $X$  with $|X|= n/2$. It is easy to see that both the team $X\cup \{z\}$  and its complement are unsuccessful in $\Delta_3(G,n/2)$.   Therefore, $\Delta_3(G,n/2)$ is not strong.
Assume now  that all the independent sets in $G$ have less than $n/2$  vertices. 
Observe that, for a team $X$ in $\Delta_3(G,n/2)$ with $|X\cap V| <n/2$, its complement has at least $n/2+1$ elements in $V$ and thus it is successful. When   $|X\cap V| >n/2$ the team is  successful. Therefore we have to consider only those teams  with   $|X\cap V| =n/2$.  In such a case, we know that neither  $X\cap V$  nor $V\setminus(X\cap V)$ are independent sets. Then, by construction,  one of the sets $X$ or $N\setminus X$ must contain $z$ and is  successful while its complement  is unsuccessful. In consequence $\Delta_3(G,n/2)$ is strong.
\end{proof}

The complexity of the remaining problems is summarized in the following theorem.

\begin{theorem}\label{the:IG-par-prop}\label{the3}
For influence games, the problems {\sc IsPasser}, {\sc IsVetoer}, {\sc IsDictator}, {\sc IsCritical}, {\sc IsBlocking} and {\sc IsSwing} belong to {\sc P}.
\end{theorem}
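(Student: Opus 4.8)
The plan is to reduce each of the six problems to a number of queries of the form ``is team $Y$ successful?'' that is at most linear in the number of agents, and to observe that each such query is answerable in polynomial time by the lemma immediately following Lemma~\ref{lem:Fpoly} (which itself relies on Lemma~\ref{lem:Fpoly} to compute $F(Y)$ in polynomial time). The main leverage is the monotonicity of influence games established at the beginning of Section~\ref{sec:IG}: if $X\subseteq Z$ and $X\in\cW$, then $Z\in\cW$. This is precisely what lets us replace a universally quantified condition over exponentially many teams by a single extremal test.

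First I would dispatch the player-global properties. For {\sc IsPasser}, note that $i$ is a passer if and only if the singleton $\{i\}$ is successful: by monotonicity $\{i\}\in\cW$ forces every team containing $i$ to be winning, and the converse is immediate from the definition. Dually, for {\sc IsVetoer}, player $i$ is a vetoer if and only if $N\setminus\{i\}$ is unsuccessful, since $N\setminus\{i\}$ is the largest team avoiding $i$, so if it is losing then so is every team not containing $i$; the converse is again trivial. Then {\sc IsDictator} reduces to conjoining the two previous tests: $i$ is a dictator precisely when it is simultaneously a passer and a vetoer, i.e.\ when $\{i\}\in\cW$ and $N\setminus\{i\}\in\cL$. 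Each of these amounts to at most two success queries.

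The remaining three problems take a team (and possibly a distinguished player) as part of the input and follow directly from their definitions. For {\sc IsCritical}, with input $(X,i)$ where $i\in X$, we test whether $X\in\cW$ and $X\setminus\{i\}\in\cL$, i.e.\ two success queries. For {\sc IsBlocking}, with input $X$, we test the single condition $N\setminus X\in\cL$. For {\sc IsSwing}, we first test $X\in\cW$ and then, iterating over the at most $n$ players $i\in X$, test whether $X\setminus\{i\}$ is unsuccessful; $X$ is a swing if and only if it is winning and at least one such removal yields a losing team, for a total of $O(n)$ queries. Since in every case the number of success queries is at most linear in the number of agents and each query runs in polynomial time, all six problems lie in {\sc P}. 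I do not expect a genuine obstacle; the only point requiring care is the appeal to monotonicity in the passer and vetoer arguments, which is exactly what allows us to probe a single extremal coalition instead of inspecting exponentially many teams.
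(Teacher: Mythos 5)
Your proposal is correct and follows essentially the same route as the paper: each property is characterized by a constant number (or, for {\sc IsSwing}, linearly many) of tests of the form ``is team $Y$ successful?'', each decided in polynomial time via the computation of $F(Y)$, and your extremal-coalition tests ($\{i\}$ for passer, $N\setminus\{i\}$ for vetoer, their conjunction for dictator, etc.) coincide exactly with the paper's characterizations in terms of $|F(\cdot)|\geq q$. The only cosmetic difference is that you make the appeal to monotonicity explicit, which the paper leaves implicit.
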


\begin{proof}
We provide characterizations of the properties in terms of the sizes of $F(X)$ for adequate sets $X$.
Given an influence game $\Gamma=(G,w,f,q,N)$, $i\in N$ and $X\seb{}N$, we have.
\begin{itemize}
\item Player $i$ is a passer if and only if $|F(\{i\})| \geq q$.
\item Player $i$ is a vetoer if and only if $|F(N\setminus \{i\})| < q$.
\item Player $i$ is a dictator if and only if $|F(N\setminus \{i\})| < q$ and  $|F(\{i\})| \geq q$.
\item Player $i$  is critical for team $X$ if and only if $|F(X)|\geq q$ and $|F(X\setminus \{i\})|<q$.
\item Team $X$ is blocking if and only if $|F(N\setminus{}X)|<q$.
\item Team $X$ is a swing  if and only if $|F(X)|\geq q$ and there is $i\in X$ for which $|F(X\setminus\{i\})<q$.
\end{itemize}
Therefore, from Lemma~\ref{lem:Fpoly}, we get the claimed result.
\end{proof}

Now we consider the complexity of the problems related to game isomorphism and equivalence. We state here the definitions for influence games.

\begin{definition}
Let $\Gamma=(G,w,f,q,N)$ and $\Gamma'=(G',w',f',q',N')$ be two influence games with the same number of players.
$\Gamma$ and $\Gamma'$ are {\em isomorphic} if and only if there exists a {\em bijective function}, $\varphi:N\to{}N'$, such that
$|F(X)|\ge q \mbox{ if and only if }  |F(\varphi(X))|\ge q'$.
Moreover, when $N=N'$ and $\varphi$ is the {\em identity} function, then  we say that the two influence games are {\em equivalent}.
\end{definition}

The associated problems have as input two influence games and are stated as follows.
\begin{tabbing}
 xxxxxxxxxxxxxxxx    \= xxxxxxxxxxxxxxxxxxxxxxxxxxxxxxxxxxxxxxxxxxxxxxxx \kill
 {\sc Iso}:          \> Determine whether the two influence games are isomorphic.
\end{tabbing}

\begin{tabbing}
 xxxxxxxxxxxxxxxx    \= xxxxxxxxxxxxxxxxxxxxxxxxxxxxxxxxxxxxxxxxxxxxxxxx \kill
 {\sc Equiv}:        \> Determine whether the  two influence games are equivalent.
\end{tabbing}

\begin{theorem}\label{the:IG-iso-equiv}
For unweighted influence games with polynomial number of vertices, the problem {\sc Equiv} is {\sc coNP}-complete and the problem {\sc Iso} is {\sc coNP}-hard and belongs to $\Sigma^p_2$.
\end{theorem}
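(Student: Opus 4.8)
The plan is to treat the two containments via their quantifier structure and the two lower bounds by reductions rooted at {\sc Vertex Cover}.

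First the memberships. For {\sc Equiv} I would observe that a succinct certificate of \emph{non}-equivalence is a single coalition $X\seb N$ on which the two games disagree: by Lemma~\ref{lem:Fpoly} we evaluate $|F(X)|\ge q$ and $|F(X)|\ge q'$ in polynomial time, so the complement of {\sc Equiv} lies in {\sc NP} and {\sc Equiv}$\in${\sc coNP}. For {\sc Iso} the definition is syntactically $\exists\varphi\,\forall X$: guess a bijection $\varphi:N\to N'$ (of polynomial size, since it is a permutation of a fixed player set) and then universally verify, for every $X\seb N$, that $|F(X)|\ge q \Leftrightarrow |F(\varphi(X))|\ge q'$, a predicate checkable in polynomial time by Lemma~\ref{lem:Fpoly}. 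This is exactly a $\Sigma^p_2$ computation.

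For {\sc Equiv} hardness I would reduce the complement of {\sc Vertex Cover}. Given $(G,k)$, set $N=V(G)\cup\{z\}$ and build two unweighted influence games on $N$ with polynomially many vertices: $\Gamma_1:=\Delta_1(G,k)$, in which $X$ is successful iff $|X\cap V|\ge k+1$ or ($z\in X$ and $X\cap V$ is a vertex cover of $G$); and a pure counting game $\Gamma_2$ in which $X$ wins iff $|X\cap V|\ge k+1$, realized by routing every $v_j$ into a single node of label $k+1$ that, once fired, floods enough label-$1$ sinks to meet the quota, while $z$ is left with no out-edges and is therefore a dummy. Both are constructible in polynomial time. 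Their winning families satisfy $\cW_2\seb\cW_1$, and they coincide precisely when no coalition $X$ with $z\in X$, $X\cap V$ a vertex cover, and $|X\cap V|\le k$ exists, i.e.\ iff $G$ has no vertex cover of size $\le k$. Hence $\Gamma_1\equiv\Gamma_2$ iff $(G,k)$ is a no-instance, which yields {\sc coNP}-hardness and, with the membership above, {\sc coNP}-completeness.

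For {\sc Iso} hardness I would reduce the already {\sc coNP}-hard problem {\sc Equiv} to {\sc Iso} by \emph{rigidifying}. From an {\sc Equiv} instance $(\Gamma_1,\Gamma_2)$ on a common set $N=\{p_1,\dots,p_n\}$ I would form $(\Gamma_1^\ast,\Gamma_2^\ast)$ by taking, via the polynomial-time union construction of Theorem~\ref{the:SG-UIG}, the union of each $\Gamma_i$ with the \emph{same} collection of $n$ disjoint fingerprint gadgets, where gadget $j$ adds $j$ fresh low-power players whose only minimal winning coalition is $W_j=\{p_j\}\cup\{b_{j,1},\dots,b_{j,j}\}$. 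Since the gadgets are identical in both constructions, $\Gamma_1\equiv\Gamma_2$ forces $\Gamma_1^\ast\equiv\Gamma_2^\ast$ and hence $\Gamma_1^\ast\cong\Gamma_2^\ast$ through the identity. Conversely, any isomorphism preserves coalition size and the whole minimal-winning structure; the $W_j$ have pairwise distinct sizes $j+1$ and are the unique minimal winning coalitions composed of gadget players, so $\varphi$ must send each $W_j$ to $W_j$ and thus fix every $p_j$. Restricting the isomorphism condition to coalitions $X\seb N$ (which win in $\Gamma_i^\ast$ iff they win in $\Gamma_i$, because the gadget coalitions require their own private players) then gives $\Gamma_1\equiv\Gamma_2$. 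Hence $\Gamma_1^\ast\cong\Gamma_2^\ast$ iff $\Gamma_1\equiv\Gamma_2$, and {\sc Iso} is {\sc coNP}-hard.

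The delicate part is the rigidity argument for {\sc Iso}. I must guarantee that the fingerprints make \emph{every} original player and every gadget player mutually distinguishable under isomorphism, so that $\varphi$ cannot swap an original $p_j$ with a gadget player nor with a different $p_{j'}$ (a non-identity permutation of the $p_j$ would only certify $\Gamma_1\equiv\sigma(\Gamma_2)$, not $\Gamma_1\equiv\Gamma_2$), while keeping the gadgets monotone, unweighted, of polynomial size, and inert on the winning coalitions contained in $N$. The distinct sizes of the $W_j$ together with the near-dummy nature of the gadget players are what make the size and minimal-coalition invariants pin $\varphi$ down to the identity on $N$; checking that the union of Theorem~\ref{the:SG-UIG} preserves exactly these invariants is the step requiring the most care.
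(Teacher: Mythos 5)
Your membership arguments and your {\sc Equiv} reduction coincide with the paper's: the paper also reduces from the complement of {\sc Vertex Cover}, pairing $\Gamma_1=\Delta_1(G,k)$ with the counting game $\Gamma_2$ (realized there as the weighted game $[k+1;1,\dots,1,0]$ converted to an unweighted influence game via Theorem~\ref{the:WG-UIG}), and observes that the two games are equivalent iff $G$ has no vertex cover of size at most $k$. Where you diverge is {\sc Iso}, and there the paper needs no new machinery at all: since $\cW_2\seb\cW_1$ holds for every instance of the reduction, a vertex cover of size at most $k$ forces $|\cW_1|>|\cW_2|$, and the number of winning coalitions is an isomorphism invariant; hence the very same pair $(\Gamma_1,\Gamma_2)$ is non-isomorphic exactly when it is non-equivalent, and one reduction settles both problems. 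Your own observation that $\cW_2\seb\cW_1$ already gives you this for free.

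Your alternative route for {\sc Iso} --- a general reduction from {\sc Equiv} via fingerprint gadgets --- is a legitimate strategy, but as written it has a genuine gap: the rigidity claim is false as stated. An isomorphism of $\Gamma_1^\ast$ and $\Gamma_2^\ast$ need not fix every $p_j$: if $p_j$ is a dummy in both games it can be swapped with its private gadget players $b_{j,l}$, and if $\{p_j\}$ is winning then $W_j$ is not even a minimal winning coalition of $\Gamma_i^\ast$ (it is absorbed by the singleton), so its fingerprint vanishes and winning singletons can be permuted among themselves. Moreover, distinct gadget sizes alone do not prevent $\varphi$ from matching a gadget coalition $W_j$ with an \emph{original} minimal winning coalition of the same size $j+1$. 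The missing ingredient is a degree-type invariant, for instance: every member of an original minimal winning coalition $S$ with $|S|\ge 2$ lies in at least two minimal winning coalitions of $\Gamma_i^\ast$ (namely $S$ and its own gadget, which is alive because no singleton subset of $S$ wins), whereas $W_j$ contains $j$ players lying in exactly one. This forces gadget edges to map to gadget edges of equal size, hence $\varphi(W_j)=W_j$ and identical sets of winning singletons, and then $\varphi$ fixes every player that lies in some original minimal winning coalition of size at least two; that suffices to conclude $\cW^m(\Gamma_1)=\cW^m(\Gamma_2)$ even though $\varphi$ itself need not be the identity. There is also a construction flaw to repair: you cannot iterate the two-game union of Theorem~\ref{the:SG-UIG} $n$ times, since each application roughly squares the number of agents; instead build the single gadget game whose minimal winning coalitions are $W_1,\dots,W_n$ directly via Theorem~\ref{the:SG-Inf} and apply the union construction once.
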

\begin{proof}
Membership to the corresponding complexity classes follows directly from the definition of the problems. For the hardness part we provide a reduction from the complement of the {\sc Vertex Cover} problem.
Let $G$ be a graph and consider the  influence game $\Gamma_1=\Delta_1(G,k)$ as defined before (see  Figure~\ref{fig:Delta}). Recall that the set of players is $N_1 = \{v_1,\dots,v_n,z\}$. To define the second  influence game $\Gamma_2$ we consider the weighted game  with  set of players $N_1$ and quota $q=k+1$.  The weights of the players are the following:  $w(v_i)=1$, for any $1\leq i\leq n$, and $w(z)=0$. A representation of $\Gamma_2$ as an unweighted influence game can be obtained  in polynomial time using the construction of Theorem~\ref{the:WG-UIG}.  Our reduction associates to an input to vertex cover $(G,k)$ the pair of influence games $(\Gamma_1, \Gamma_2)$.  Observe that $\Gamma_1$ is equivalent (isomorphic) to $\Gamma_2$ if and only if $G$ does not have a vertex cover of size $k$ or less.
\end{proof}

We have been unable to provide a complete classification for the {\sc Iso} problem. It remains open to show whether the problem is $\Sigma^p_2$-hard or not.  

\section{Unweighted Influence Games on Undirected Graphs}\label{sec:unIG}

In this section we analyze the complexity of the proposed problems on some particular subfamilies of unweighted influence games defined on undirected graphs.

\subsection{Maximum Influence}\label{sec:MaxI}

Here we analyze first the case with maximum influence and maximum spread, that is games of the form $\Gamma=(G,f,|V|,V)$ where $f(v)=d_G(v)$, or, as we said before, the game $\Gamma= \Gamma(G)$, for some graph $G$. 
When the graph $G$ is disconnected with connected components $C_1,\dots, C_k$,   the  associated game $\Gamma(G)$  can be analyzed from the $\Gamma(C_1), \dots, \Gamma(C_k)$. Observe that, due to maximum spread, a successful team 
must influence all the vertices in the graph. Therefore, the members of a successful team in a connected component must influence all the vertices in their component. So, a  team $X$ is successful in $\Gamma(G)$ if and only if, for any  $1\leq i\leq k$, the team $X\cap V(C_i)$ is successful in $\Gamma(C_i)$. We analyze first the case in which $G$ is connected. 
\begin{theorem}\label{the7}
In an unweighted influence game $\Gamma$ with maximum influence and maximum spread on a connected graph $G$ the following properties hold.
\begin{itemize}
\item $\Gamma$ is proper if and only if $G$ is not bipartite.
\item $\Gamma$ is strong if and only if $G$ is either a star or a triangle.
\item $\Gamma$ is decisive if and only if $G$ is a triangle.
\end{itemize}
\end{theorem}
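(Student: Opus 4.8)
The plan is to reduce all three characterizations to statements about vertex covers by means of Lemma~\ref{lem:gammaG}, which states that a team $X$ is winning in $\Gamma(G)$ precisely when $X$ is a vertex cover of $G$. The single fact that drives everything is the standard duality $X$ \emph{is a vertex cover of $G$ if and only if $V\setminus X$ is an independent set}. I would record this observation first, so that each game-theoretic property translates into a purely combinatorial condition on $G$, and I would assume throughout that $G$ is connected with at least one edge (the degenerate cases of a single vertex or a single edge being handled by inspection).

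For properness, recall that $\Gamma$ is proper iff no winning coalition has a winning complement. Hence $\Gamma$ fails to be proper iff there is a set $X$ with both $X$ and $V\setminus X$ vertex covers; by the duality this says that both $V\setminus X$ and $X$ are independent, i.e.\ $V$ splits into two independent sets. For a connected graph this is exactly the condition that $G$ is bipartite (and once $G$ has an edge both color classes are nonempty). Thus $\Gamma$ is proper iff $G$ is not bipartite.

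For strongness, $\Gamma$ is strong iff no losing coalition has a losing complement. Using Lemma~\ref{lem:gammaG}, $X$ is losing iff $X$ is not a vertex cover, iff $V\setminus X$ is not independent, iff the part $V\setminus X$ contains an edge. Therefore $\Gamma$ fails to be strong iff $V$ can be partitioned into two parts $X$ and $V\setminus X$ each containing an edge; since the parts are disjoint, this happens iff $G$ contains two vertex-disjoint edges. The hard part is the structural lemma that a connected graph possesses two disjoint edges iff it is neither a star nor a triangle, equivalently that a connected graph in which every two edges share a vertex must be a star or a triangle. I would prove the nontrivial direction by taking two edges meeting at a vertex $v$: if some edge avoids $v$ it must meet both of them, which forces a triangle, and otherwise every edge passes through $v$, giving a star. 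Combining, $\Gamma$ is strong iff $G$ is a star or a triangle.

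Finally, decisiveness is the conjunction of properness and strongness, so I would simply intersect the two characterizations. A star is bipartite, hence not proper; a triangle is an odd cycle, hence not bipartite, and is itself a triangle, hence both proper and strong; every other connected graph fails strongness. Consequently $\Gamma$ is decisive iff $G$ is a triangle.
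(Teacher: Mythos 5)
Your proposal is correct and follows essentially the same route as the paper's proof: both translate winning teams into vertex covers via Lemma~\ref{lem:gammaG}, use the vertex-cover/independent-set duality to characterize properness by bipartiteness, characterize failure of strongness by the existence of two disjoint edges (equivalently, $G$ being neither a star nor a triangle), and obtain decisiveness as the conjunction of the two. The only difference is cosmetic: you explicitly prove the structural fact that a connected graph whose edges pairwise intersect is a star or a triangle, which the paper merely states as an observation.
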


\begin{proof}
From Lemma~\ref{lem:gammaG} we know that the successful teams of $\Gamma=\Gamma(G)$ coincide with the vertex covers of $G$. We also recall that the complement of a vertex cover is an independent set.

If $G=(V,E)$ is bipartite, let $(V_1,V_2)$ be a partition of $V$ so that $V_1$ and $V_2$ are independent sets. In such a case, both  $V_1$ and $V_2=N\setminus{}V_1$ are successful teams in $\Gamma$. Therefore, $\Gamma$ is not proper. For the opposite direction, 
 if $\Gamma$ is not proper, then the game admits 
two disjoint successful team, i.e, two disjoint vertex covers of $G$, and hence each of them must be an independent set. Thus the graph  $G$ is bipartite.

Now we prove that $\Gamma$ is not strong if and only if $G$ has at least two non-incident edges. Observe that  a graph where all edges are incident is either a triangle or a star.
If $G$ has at least two non-incident edges $e_1=(u_1,v_1)$ and $e_2=(u_2,v_2)$,  $\{u_1,v_1\}$ and $N\setminus\{u_1,v_1\}$ are both unsuccessful teams,
therefore $\Gamma$ is not strong.
When the game is not strong, there is a team $X$ such that both $X$ and $N\setminus X$ are unsuccessful.  For this to happen it must be that there is an edge uncovered by $X$ and another edge uncovered by $N\setminus X$. Thus $G$ must have two non-incident edges.

Finally, it is well known that non-bipartite graphs has at least one odd cycle, so the only non-bipartite graph with all pair of edges incidents (proper and strong) is a triangle (decisive).
\end{proof}

When the graph is disconnected, a successful team $X_i$ in the game $\Gamma(C_i)$ can be completed to a winning coalition in $\Gamma$.  Observe that, if $\overline{X}_i =V(C_i)\setminus X_i$, the set $V\setminus \overline{X}_i $ is successful in $\Gamma$ and contains $X_i$.  For an unsuccessful team $X_i$ in $\Gamma(C_i)$  both $X_i$ and  $V\setminus \overline{X}_i $ are unsuccessful in $\Gamma$. Therefore, the previous result can be extended to disconnected graphs by requesting the conditions to hold  in all the connected components of the given graph. 

\begin{corollary}
In an unweighted influence game $\Gamma$ with maximum influence and maximum spread on a graph $G$ the following properties hold.
\begin{itemize}
\item $\Gamma$ is proper if and only if  all the connected components of $G$ are not bipartite.
\item $\Gamma$ is strong if and only if all the connected components of $G$ are either a star or a triangle.
\item $\Gamma$ is decisive if and only if all the connected components of $G$ are triangles.
\end{itemize}
Furthermore, the problems \textsc{IsProper}, \textsc{IsStrong} and \textsc{IsDecisive} belong to P for unweighted influence games with maximum influence and maximum spread.
\end{corollary}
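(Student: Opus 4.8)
The plan is to build everything on Lemma~\ref{lem:gammaG} together with the fact, already recorded just before the statement, that vertex covers decompose across connected components: a team $X$ is successful in $\Gamma(G)$ if and only if $X\cap V(C_i)$ is a vertex cover of $C_i$ for every component $C_i$. Combined with the two lifting observations preceding the corollary --- a successful team of a single component $\Gamma(C_i)$ extends to a successful team of $\Gamma$ by adjoining all other components, whereas an unsuccessful team of $\Gamma(C_i)$ remains unsuccessful both as itself and inside $V\setminus\overline{X}_i$ --- this reduces every global property of $\Gamma(G)$ to a statement about the component games, which I can then resolve with Theorem~\ref{the7}.

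Concretely, for \textsc{IsProper} I would use that, since successful teams are exactly vertex covers, $\Gamma$ fails to be proper precisely when $G$ admits a winning team with a winning complement, i.e.\ a pair of complementary vertex covers; I would then relate the existence of such a pair to the bipartiteness of the components, feeding in the connected characterization of Theorem~\ref{the7}. For \textsc{IsStrong} I would dualize: $\Gamma$ is not strong exactly when some team $X$ and its complement are both losing, i.e.\ when the players can be split so that one edge lies entirely outside $X$ and another lies entirely inside $X$, and here the connected characterization (all edges pairwise incident, hence a star or a triangle) is the building block. Finally \textsc{IsDecisive} I would obtain by combining the two, since a game is decisive iff it is simultaneously proper and strong, which componentwise should collapse to the triangle condition.

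The algorithmic part is then immediate: computing the connected components, testing each for bipartiteness by a single search, and checking whether each is a star or a triangle from its degree sequence are all polynomial, and by Lemma~\ref{lem:Fpoly} the underlying reductions are polynomial as well; hence \textsc{IsProper}, \textsc{IsStrong} and \textsc{IsDecisive} lie in \textsc{P} for this family.

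The step I expect to be the main obstacle is exactly the passage from the connected characterizations to the disconnected ones, because the teams that witness failure of strongness (or of properness) need not respect component boundaries: a losing complement may be forced by an uncovered edge sitting in one component while the losing team itself is forced by an uncovered edge in a \emph{different} component. I would therefore verify with care that the per-component conditions are genuinely equivalent to the global property --- in particular determining whether the correct quantifier is ``in every component'' or whether a single well-placed component already forces or breaks the property --- since the interaction of complements across distinct components is precisely where a naive componentwise argument, and the lifting observations taken in isolation, can mislead.
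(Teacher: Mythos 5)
Your plan retraces the paper's own proof: the paper derives this corollary from Theorem~\ref{the7} via exactly the two lifting observations you cite, and then concludes that each property extends ``by requesting the conditions to hold in all the connected components.'' But your proposal stops precisely at the step you yourself identify as the main obstacle --- determining whether the componentwise condition should be quantified ``for every component'' or ``for some component'' --- and so, as written, it establishes none of the three bullets: each one hinges on that unresolved quantifier.

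The twist is that your suspicion is justified, and following your own program to the end refutes the statement rather than proving it. For properness: $X$ and $N\setminus X$ are both successful iff both are vertex covers of $G$; restricting to a component, a vertex cover whose complement is also a vertex cover makes both sets independent, forcing that component to be bipartite, and conversely bipartitions of \emph{all} components assemble into complementary covers. Hence $\Gamma$ fails to be proper iff \emph{every} component is bipartite, i.e.\ $\Gamma$ is proper iff \emph{some} component is non-bipartite --- a triangle together with a disjoint single edge is a proper game with a bipartite component, contradicting the first bullet. For strongness: $\Gamma$ fails to be strong iff $G$ contains two vertex-disjoint edges, and edges lying in distinct components are automatically vertex-disjoint; thus two disjoint triangles give a non-strong game even though every component is a triangle, contradicting the second bullet, while a triangle plus one isolated vertex is proper and strong, hence decisive, contradicting the third. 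The correct disconnected extensions are: proper iff at least one component is non-bipartite; strong iff at most one component contains an edge and that component is a star or a triangle; decisive iff exactly one component contains edges and it is a triangle, the remaining components being isolated vertices. (All of these are still checkable in polynomial time, so the final algorithmic claim survives in corrected form.) The paper's lifting paragraph is exactly the naive componentwise argument you warned against: the cross-component interaction of complements is not a technical wrinkle to be handled with care, it genuinely changes the answer, and your instinct to distrust the ``in every component'' reading was the right one --- you only needed to follow it through.
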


In regard to the complexity of the two main game measures we have the following result.

\begin{theorem}\label{the8}
For unweighted influence games with maximum influence and maximum spread on a connected graph $G$, computing 
{\sc Length} is {\sc NP}-hard. Computing {\sc Width} of $\Gamma(G)$ can be done in polynomial time even when $G$ is disconnected. 
\end{theorem}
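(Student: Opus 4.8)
The plan is to exploit the characterization from Lemma~\ref{lem:gammaG}, by which the successful teams of $\Gamma(G)$ are exactly the vertex covers of $G$. Consequently $\textsc{Length}(\Gamma(G))$ equals the minimum vertex cover number $\tau(G)$, while a team is unsuccessful precisely when it is a set $X\seb V$ that is not a vertex cover, i.e. when $V\bac X$ contains at least one edge. Both parts of the statement reduce to reasoning about these two combinatorial notions.

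For the hardness of \textsc{Length} I would reduce from \textsc{Vertex Cover}. Since the general problem admits disconnected inputs, the first step is a connectivity gadget: given $(G,k)$ with $G=(V,E)$ and $n=\card{V}$, build $G'$ by adding one new vertex $u$ adjacent to every vertex of $V$, so that $G'$ is connected and is produced in polynomial time. The key step is to compute $\tau(G')$. If $u$ lies in a cover, the edges incident to $u$ are covered and the remaining vertices must cover $G$, giving cost $1+\tau(G)$; if $u$ is omitted, then all of $V$ must be taken, giving cost $n$. Using the elementary bound $\tau(G)\le n-1$ (for any $w$, the set $V\bac\{w\}$ covers every edge, as $G$ has no loops), we get $1+\tau(G)\le n$, hence $\tau(G')=1+\tau(G)$. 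Therefore $\textsc{Length}(\Gamma(G'))=\tau(G')=\tau(G)+1$, so any algorithm computing \textsc{Length} on the connected game $\Gamma(G')$ would compute $\tau(G)$; this establishes \textsc{NP}-hardness.

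For \textsc{Width} I would argue directly and globally, which is exactly what makes the disconnected case no harder. To maximize $\card{X}$ over unsuccessful teams I minimize $\card{V\bac X}$ subject to $V\bac X$ containing an edge; the smallest such set is a single edge, of size $2$, witnessed by $X=V\bac\{u,v\}$ for any $(u,v)\in E$. Conversely, any $X$ with $\card{X}=n-1$ has $V\bac X$ a single vertex, which is independent, so $X$ is a vertex cover and hence successful. Thus $\textsc{Width}(\Gamma(G))=n-2$ whenever $E\ne\emptyset$ (and there are no unsuccessful teams otherwise). These statements refer only to the global notions ``vertex cover'' and ``contains an edge'', so they are insensitive to connectivity; computing \textsc{Width} therefore amounts to testing whether $G$ has an edge, which is polynomial.

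The main obstacle is the connectivity requirement in the \textsc{Length} reduction: the raw identity $\textsc{Length}(\Gamma(G))=\tau(G)$ only delivers hardness for arbitrary graphs, so the universal-vertex gadget together with the bound $\tau(G)\le n-1$ is precisely what transports the hardness to connected graphs while keeping the additive shift $\tau(G')=\tau(G)+1$ exact.
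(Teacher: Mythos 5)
Your proof is correct, and on the \textsc{Width} half it takes a genuinely different (and cleaner) route than the paper. For \textsc{Length} the core is the same in both: by Lemma~\ref{lem:gammaG}, $\textsc{Length}(\Gamma(G))=\tau(G)$, the minimum vertex cover size, so the problem inherits \textsc{NP}-hardness; the paper stops there, implicitly relying on \textsc{Vertex Cover} being hard on connected graphs, while your universal-vertex gadget with the bound $\tau(G)\le n-1$ and the exact shift $\tau(G')=\tau(G)+1$ makes the connectivity restriction explicit --- a small but worthwhile addition given that the theorem is stated for connected $G$. For \textsc{Width} the paper does not argue globally as you do: it runs a case analysis (single vertex or single edge; edgeless graph; graph with an isolated vertex; all components containing an edge) and reports $\textsc{Width}\in\{0,n-1,n-2\}$, in particular $\textsc{Width}=n-1$ whenever $G$ has an isolated vertex. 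That analysis implicitly assumes that a degree-$0$ vertex can never be activated by influence. However, under the spread rule of Section~\ref{sec:defs}, a vertex $v$ with $f(v)=d_G(v)=0$ is activated vacuously (the empty sum is $0\ge f(v)$), and Lemma~\ref{lem:gammaG}, taken at face value, says exactly what you use: a team is unsuccessful iff its complement contains an edge. Under the paper's own formal definitions your uniform formula --- $\textsc{Width}(\Gamma(G))=n-2$ when $E\neq\emptyset$, and no unsuccessful teams at all when $E=\emptyset$ --- is the faithful one, and the paper's isolated-vertex cases contradict its own lemma: for a single edge plus an isolated vertex $u$, the team $V\setminus\{u\}$ activates $u$ vacuously and is successful, so the width is $n-2=1$, not $n-1=2$ as the paper's proof asserts. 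The disagreement is thus an internal inconsistency of the paper (between Lemma~\ref{lem:gammaG} and the proof of Theorem~\ref{the8}), not a gap in your argument, and both routes yield the claimed polynomial-time computability. What each buys: yours is shorter, uniform over connected and disconnected inputs, and consistent with the stated definitions; the paper's case analysis is what one would need under the alternative convention that influence only travels along edges (threshold-$0$ vertices never self-activate), so if you wanted robustness to both readings you would graft its isolated-vertex cases onto your two-line argument.
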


\begin{proof}
As before we use the fact that $\Gamma(G)$ can be computed in polynomial time. Furthermore, from Lemma~\ref{lem:gammaG},  $\textsc{Length}(\Gamma(G))$ is  the minimum size of a vertex cover of $G$. Therefore {\sc Length} is  {\sc NP}-hard.

We prove that \textsc{Width} can be computed in polynomial time by a case analysis. 
If $G$ is just an isolated vertex or just one edge, the empty set is the unique unsuccessful team, thus  $\textsc{Width}(\Gamma)=0$. 
Otherwise, either $G$ has no edges or has at least one edge and an additional  vertex. 
In the first case, the graph is an independent set with at least two vertices. Assume that $u\in V$, then $V\setminus \{u\}$ is unsuccessful and we conclude that  $\textsc{Width}(\Gamma)=n-1$.  

In the second case $G$ has at least one edge $e=(u,v)$ and  $V\setminus\{u,v\}$ is non-empty. We have again two cases, either $G$  has an isolated vertex $u$ or all the connected components of $G$ have at least one edge. 
When $u$ is an isolated vertex the team $V\setminus\{u\}$ is unsuccessful, therefore  $\textsc{Width}(\Gamma)=n-1$.
When  all the connected components of $G$ have at least one edge, any team with $n-1$ nodes is a vertex cover, thus $\textsc{Width}(\Gamma)<n-1$.  Observe that the set $V\setminus \{u,v\}$ is not empty and, furthermore it does not cover the edge $e$, thus we have an unssuccesful team with $n-2$ vertices. Thus, in this case $\textsc{Width}(\Gamma)=n-2$.  

As the classification can be
checked trivially in polynomial time we get the claimed result. 
\end{proof}

For the case of maximum influence but not maximum spread, that is influence games of the form $(G,f,q,V)$ where $f(v)=d_G(v)$ and $q< n$, the game cannot be directly analyzed from the games on the connected components, as the total quota can be fulfilled in different ways by the agents influenced in each component. Nevertheless, as the influence is maximum, any set of vertices $X$ can influence another vertex $u$ only when all the neighbors of $u$ are included in $X$, alternatively when $u$ becomes an isolated vertex after removing $X$. This leads to the following characterization of  the successful teams.

\begin{lemma}\label{lem1}
In an unweighted influence game with maximum influence $\Gamma=(G,d_G,q,V)$ where $G$ has no isolated vertices,
$X\seb V$ is a successful team if and only if removing $X$ from $G$ leaves at least $q-|X|$ isolated vertices.
\end{lemma}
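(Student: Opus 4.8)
The plan is to show that, in the maximum-influence model with thresholds $f(v)=d_G(v)$ on an unweighted graph, the activation process started from any team $X$ stabilizes after a single step and activates precisely $X$ together with those vertices that become isolated once $X$ is deleted. Granting this, the set $F(X)$ is the disjoint union of $X$ and the set of isolated vertices of $G\setminus X$, so $|F(X)|=|X|+r$, where $r$ is the number of isolated vertices of $G\setminus X$; the stated equivalence $|F(X)|\ge q$ $\Leftrightarrow$ $r\ge q-|X|$ then follows immediately.

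First I would analyze the first activation step. Since all edge weights equal $1$ and $f(u)=d_G(u)$, a vertex $u\notin X$ belongs to $F_1(X)$ exactly when the number of its neighbors lying in $X$ reaches $d_G(u)$; as this number can never exceed $d_G(u)$, the condition says precisely that \emph{every} neighbor of $u$ belongs to $X$. Because $G$ has no isolated vertices, $d_G(u)\ge 1$, and the condition ``all neighbors of $u$ lie in $X$'' is equivalent to ``$u$ loses all of its neighbors when $X$ is removed'', that is, $u$ is isolated in $G\setminus X$. Hence $F_1(X)=X\cup I(X)$, where $I(X)$ denotes the set of isolated vertices of $G\setminus X$, a set disjoint from $X$ by construction.

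The crucial step is to argue that no further activations occur, so that $F(X)=F_1(X)$. The key observation is that every vertex $u\in I(X)$ has all of its neighbors inside $X$, and these are already active at step $0$. Consequently, activating such a $u$ cannot increase the incoming weight of any still-inactive vertex: the only vertices $u$ could influence are its own neighbors, all of which already lie in $X\subseteq F_1(X)$. Therefore the weighted sums governing the activation of vertices outside $F_1(X)$ are identical at step $1$ and at step $2$, no new vertex crosses its threshold, and the process halts with $F(X)=F_1(X)=X\cup I(X)$.

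I expect the delicate point to be exactly this termination argument, since one must rule out any cascade beyond the first step; the difficulty dissolves once one notices that the newly activated vertices are precisely those whose entire neighborhood sits in $X$, so they are influence ``sinks'' that cannot propagate activation any further. With the disjoint decomposition $F(X)=X\cup I(X)$ established, we have $|F(X)|=|X|+|I(X)|$, and $X$ is a successful team, i.e.\ $|F(X)|\ge q$, if and only if $|I(X)|\ge q-|X|$, which is exactly the condition that removing $X$ from $G$ leaves at least $q-|X|$ isolated vertices.
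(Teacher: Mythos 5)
Your proof is correct and takes essentially the same approach as the paper: the paper states this lemma without a formal proof, justifying it only by the preceding observation that under maximum influence a set $X$ can influence a vertex $u$ only when all of $u$'s neighbors lie in $X$, i.e., when $u$ becomes isolated in $G\setminus X$. Your write-up makes this precise, correctly adding the termination argument that the cascade stops after one step because each newly activated vertex has all of its neighbors already inside $X$ and so cannot propagate influence further.
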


This characterization gives rise to the following problem:

\begin{tabbing}
 xxxxxxxxxxxxxxxx    \= xxxxxxxxxxxxxxxxxxxxxxxxxxxxxxxxxxxxxxxxxxxxxxxx \kill
 {\sc AreIsolated}:  \> Given a graph $G=(V,E)$ and $q,k\in\matN$.\\
		     \> Is there $S\seb V$ such that $|S|\leq k$ and removing $S$ from $G$\\
		     \> there are at least $q-k$ isolated vertices?
\end{tabbing}

Observe that for $q=n$ we have that the solution $S$ in the previous problem must be a vertex cover, and thus the {\sc AreIsolated} problem  is {\sc NP}-hard.

\begin{theorem}\label{the9}
For influence games $\Gamma$ with maximum influence, {\sc Length} is {\sc NP}-hard and {\sc Width} belongs to {\sc P}.
\end{theorem}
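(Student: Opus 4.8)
The plan is to handle the two claims separately, since the hardness of {\sc Length} is essentially inherited while the tractability of {\sc Width} requires a new combinatorial argument.

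For {\sc Length}, I would observe that the games $\Gamma(G)=(G,d_G,|V|,V)$ studied in Theorem~\ref{the8} are exactly the maximum-influence games with quota $q=|V|$, hence a special case of the class considered here. Since Theorem~\ref{the8} (equivalently, Lemma~\ref{lem:gammaG} together with the reduction from minimum vertex cover) shows that the minimum size of a successful team of $\Gamma(G)$ equals the size of a minimum vertex cover of $G$, and is therefore {\sc NP}-hard, the same lower bound transfers verbatim to the larger class. So no new work is needed for this half.

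For {\sc Width}, the starting point is the characterization behind Lemma~\ref{lem1}: in a maximum-influence game every isolated vertex of $G$ is activated immediately and, more generally, $|F(X)|=|X|+\mathrm{iso}(G-X)$, where $\mathrm{iso}(H)$ denotes the number of isolated vertices of $H$. Thus $X$ is unsuccessful iff $|X|+\mathrm{iso}(G-X)\le q-1$. The key step I would prove is an exchange lemma: if $X$ is unsuccessful and $v$ is isolated in $G-X$, then $v\in F(X)$ already, so $F(X\cup\{v\})=F(X)$ and $X\cup\{v\}$ is still unsuccessful but strictly larger. Iterating eliminates all isolated vertices from the complement, so {\sc Width} is attained by some unsuccessful $X$ with $\mathrm{iso}(G-X)=0$. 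Consequently
$$\textsc{Width}(\Gamma)=\max\{\,|X| : |X|\le q-1,\ G[V\setminus X]\text{ has no isolated vertex}\,\},$$
and, writing $R=V\setminus X$, this equals $n-\rho$, where $\rho$ is the minimum size of a vertex set $R$ with $|R|\ge n-q+1$ such that $G[R]$ has no isolated vertex.

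It then remains to compute $\rho$ in polynomial time, which is where the combinatorial heart lies. I would show that the set of sizes realizable by an induced subgraph with no isolated vertex decomposes over connected components: a component with $t\ge 2$ vertices contributes any value in $\{0\}\cup\{2,3,\dots,t\}$ (peel leaves of a spanning tree, noting the induced subgraph can only gain edges), while a one-vertex component contributes only $0$. The realizable sizes are the sumset of these per-component sets, which a standard knapsack-style dynamic program over the at most $n$ components computes in time polynomial in $n$; choosing the smallest realizable value that is at least $n-q+1$ yields $\rho$. The degenerate case $\mathrm{iso}(G)\ge q$ (where $\emptyset$ is already successful and no unsuccessful team exists) is detected directly. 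I expect the exchange lemma and this per-component size analysis to be the main obstacle: one must argue carefully both that enlarging $X$ never increases $|F(X)|$ and that every intermediate size is genuinely achievable, since, for example, a perfect matching realizes only even sizes, and it is exactly this parity effect that the dynamic program must capture.
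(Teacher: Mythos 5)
Your proof is correct. For \textsc{Length} it is exactly the paper's argument: hardness is inherited from the maximum-spread special case $q=|V|$ (Lemma~\ref{lem:gammaG}), i.e., from minimum vertex cover, so nothing new is needed. For \textsc{Width} your key characterization coincides with the paper's --- restrict attention to unsuccessful teams closed under influence, so that \textsc{Width} becomes the problem of deleting as many vertices as possible (at most $q-1$) while leaving no isolated vertex in the induced remainder --- but the algorithmic core is genuinely different. The paper first peels off the $n_0$ isolated vertices of $G$ via $\textsc{Width}(\Gamma)=\min\{\textsc{Width}(\Gamma')+n_0,\,q-1\}$, and then decides feasibility of each removal size $\alpha$ by a greedy over the connected components sorted by size, with an explicit case analysis: if every component has size $2$, only even $\alpha$ are feasible (exactly the parity effect your matching example isolates), and otherwise every $\alpha\le n$ except $\alpha=n-1$ is feasible, realized by deleting whole small components plus a leaf-peeled portion of the next one. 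You instead compute, for each component of size $t\ge 2$, the realizable residual sizes $\{0\}\cup\{2,\dots,t\}$ (singletons contributing only $\{0\}$) and run a subset-sum dynamic program over the sumset. Your route buys a uniform treatment of isolated vertices (including the degenerate case $\mathrm{iso}(G)\ge q$, which the paper's recursion does not really address), an actual proof --- via the exchange lemma and $F(F(X))=F(X)$ --- of the restriction to teams with $F(X)=X$, which the paper merely asserts, and it avoids the delicate greedy cases; the paper's greedy, in exchange, yields a closed-form description of the feasible sizes. Two small points to tighten: your identity $|F(X)|=|X|+\mathrm{iso}(G-X)$ (the extension of Lemma~\ref{lem1} to graphs with isolated vertices) should be justified by noting that on \emph{undirected} graphs with maximum influence the cascade stops after one round, since a step-two activation of $u$ would require a step-one-activated neighbor $v$ all of whose neighbors, including $u$ itself, already lie in $X$; and your closing phrase ``enlarging $X$ never increases $|F(X)|$'' is literally false (monotonicity gives the opposite inequality) --- what your exchange lemma correctly establishes, and what is actually needed, is that adding vertices already in $F(X)$ leaves $F(X)$ unchanged.
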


\begin{proof}
The hardness result follows from the previous observation. Observe that computing the minimum size of a solution to the  {\sc AreIsolated} problem is equivalent to compute the {\sc Length} of the game   $\Gamma=(G,d_G,q,V)$ and thus the later problem is {\sc NP}-hard.

When  computing the {\sc Width} of $\Gamma=(G,d_G,q,V)$  we want to maximize the size of the unsuccessful teams. Therefore, we can restrict ourselves to analyze only  unsuccessful teams $X$ for which $F(X)=X$. 
We have that $X$ is an unsuccessful team with $F(X)=X$ if and only if  $|X|< q$ and every non isolated vertex in $V\setminus X$ remains non isolated in $G[V\setminus X]$ (the subgraph induced by $N\setminus X$).

We consider first the case in which  $G$ has no isolated vertices.  We first solve the problem of deciding whether, for a given $\alpha$,  it is possible to discard $\alpha$ nodes from $G$ without leaving isolated vertices. For doing so we sort the sizes of the connected components of $G$  in increasing order of size. As $G$ has no isolated vertices all the connected components have at least two vertices.  Assume that $G$ has $k$ connected components $C_1,\dots, C_k$ with sizes  $2\leq w_1\leq w_2 \leq \dots \leq w_k$. 

When  $w_k=2$, all the connected components have exactly two vertices. Therefore, if $\alpha$ is even and at most $n$, we can discard the $\alpha$ vertices in the first $\alpha/2$ components, without leaving isolated vertices. Otherwise, the removal of any set of size $\alpha$ will leave at least one isolated vertex.

When  $w_k> 2$. We compute the first value $j$ for which  $\sum_{i=1}^j w_i \leq \alpha$ but $\sum_{i=1}^{j+1} w_i > \alpha$. Let $\beta=\sum_{i=1}^j w_i$. Let $S_j$ be the set of vertices in the first $j$-components.
If $\beta = \alpha$, $S_j$  can be removed without leaving isolated vertices. 
When $\beta< \alpha$  we have two cases: 
\begin{itemize}
\item[(1)]  $w_{j+1}> \alpha -\beta+1$.  Let $C\subset C_{j+1}$ be a set with  $w_{j+1}-(\alpha-\beta)$ vertices such that $G[C]$ is connected. The vertices in   $S_j$  together with the $\alpha-\beta$ vertices of $C_{j+1}$ not in $C$ can be removed without leaving any isolated vertex. 
\item[(2)] $w_{j+1} \leq \alpha -\beta+1$. By construction, $\alpha < \beta + w_{j+1}$, thus  $w_{j+1}=\alpha -\beta+1$. If $j+1<k$, removing the vertices in $S_j$  together with  $\alpha -\beta-1$ vertices from the $j+1$-th component (as in case (1)) and one additional vertex from the $k$-th component leaves no isolated vertices. If $j+1=k$, the removal of any set of size $\alpha$ will leave at least one isolated vertex. 
\end{itemize}

The previous characterization can be decided in polynomial time for any value of $\alpha$.  
By performing the test for $\alpha= q-1, q-2, \dots, 1$ we can compute in polynomial time the maximum value of $\alpha$ ($\alpha_{max}$) for which $\alpha$ nodes can be discarded without leaving isolated vertices. As the {\sc Width} of the game is just $\alpha_{max}$ we get the desired result for graphs without isolated vertices.

When $G$ has $n_0$ isolated vertices, we consider the graph $G'$ obtained from $G$ by removing all the isolated vertices.  Note that for a team $X$ with $X=F(X)$ and any set $Y$ of isolated vertices we have that $F(X\cup Y)=X\cup Y$, thus $\textsc{Width} (\Gamma) = \min\{\textsc{Width} (\Gamma')+n_0, q-1\}$. Therefore \textsc{Width} can be computed in polynomial time. 

\end{proof}

\subsection{Minimum Influence}\label{sec:MinI}

Let be $\Gamma=(G,1_V,q,N)$ where $1_V(v)=1$ for any $v\in{}V$.
Observe that if $G$ is connected, the game has a trivial structure as any non-empty vertex subset of $N$ is a successful team. For the disconnected case we can analyze the game considering an instance of the \textsc{Knapsack} problem. Assume that $G$ has $k$ connected components, $C_1,\dots,C_k$. Without loss of generality, we assume that all the connected components of $G$ have  non-empty intersection with $N$. For $1\le{}i\le{}k$, let $w_i=|V(C_i)|$ and $n_i=|V(C_i)\cap N|$. 

\begin{lemma}\label{lem2}
If a successful team $X$ is minimal then it has  at most one  node in each connected component. Minimal successful teams are in a many-to-one correspondence with the minimal winning coalitions of the weighted game $[q;w_1,\dots,w_k]$.
\end{lemma}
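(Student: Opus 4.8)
The plan is to first pin down the activation dynamics in this minimum-influence, undirected setting, and then translate the combinatorics of minimal successful teams into the language of the weighted game $[q;w_1,\dots,w_k]$. First I would establish the key spreading fact: since $G$ is undirected and every label equals $1$, activating any single vertex of a connected component $C_i$ activates all of $C_i$. Indeed, from an active vertex $v_0\in V(C_i)$ every neighbour receives influence at least $1$ (the incident edge has weight $1$) and becomes active, so by connectivity an induction on the distance from $v_0$ shows that all of $V(C_i)$ is eventually activated; conversely a component disjoint from $X$ is never activated because distinct components share no edges. Hence, for any team $X\seb N$, $F(X)=\bigcup_{\{i\mid X\cap V(C_i)\neq\emptyset\}}V(C_i)$, so $|F(X)|=\sum_{\{i\mid X\cap V(C_i)\neq\emptyset\}}w_i$. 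In particular $X$ is successful if and only if $\sum_{\{i\mid X\cap V(C_i)\neq\emptyset\}}w_i\ge q$, which says exactly that the set $S(X)=\{i\mid X\cap V(C_i)\neq\emptyset\}$ of components touched by $X$ is a winning coalition of $[q;w_1,\dots,w_k]$.

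For the first assertion of the lemma I would exploit the redundancy this formula exposes. Suppose, for contradiction, that a minimal successful team $X$ contained two players $u,v$ lying in the same component $C_i$. Then $X\setminus\{u\}$ still meets $C_i$ through $v$, so it touches exactly the same components as $X$; consequently $|F(X\setminus\{u\})|=|F(X)|\ge q$ and $X\setminus\{u\}$ is again successful, contradicting the minimality of $X$. Therefore every minimal successful team has at most one node in each connected component.

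For the correspondence I would analyse the map $X\mapsto S(X)$ on minimal successful teams. By the previous paragraph such an $X$ has exactly one node in each component of $S(X)$ and none elsewhere, so $|X|=|S(X)|$; moreover $S(X)$ is winning, and were it not minimal we could drop some $j\in S(X)$ keeping it winning, in which case deleting the unique node of $X$ in $C_j$ would produce a smaller successful team, again contradicting minimality. Thus the map lands in the minimal winning coalitions $\cW^m$ of $[q;w_1,\dots,w_k]$. It is surjective: given a minimal winning coalition $S$, the standing hypothesis that every component meets $N$ lets us choose one player from $C_i$ for each $i\in S$, yielding a team $X$ with $S(X)=S$ that is successful, and such that removing any of its nodes drops a component from $S$ and makes it losing, so $X$ is minimal. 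The map fails to be injective precisely because each component $C_i$ with $i\in S$ admits $n_i$ independent choices of representative, so the fibre over $S$ consists of $\prod_{i\in S}n_i$ minimal successful teams, establishing the claimed many-to-one correspondence. The argument is essentially bookkeeping once the spreading fact is available; the only point demanding care is surjectivity, where the assumption that every connected component has non-empty intersection with $N$ is exactly what guarantees that every minimal winning coalition is realised by an actual team of players.
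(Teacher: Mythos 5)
Your proof is correct, and it takes the approach the paper intends: the paper states this lemma without any proof, treating it as an immediate consequence of the minimum-influence dynamics, and your argument supplies exactly the details left implicit --- that touching a component activates it entirely (hence $|F(X)|=\sum_{\{i\mid X\cap V(C_i)\neq\emptyset\}}w_i$), the removal argument for both minimality claims, and the fibre count $\prod_{i\in S}n_i$ that makes ``many-to-one'' precise. Your observation that surjectivity is exactly where the standing assumption that every component meets $N$ gets used is a worthwhile clarification, but there is no divergence in method.
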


Moreover, we have the following result.

\begin{theorem}\label{the3:Min-Inf}
For unweighted influence games with minimum influence, the problems
{\sc Length}, {\sc Width}, {\sc IsProper}, {\sc IsStrong} and {\sc IsDecisive} belong to {\sc P}.
\end{theorem}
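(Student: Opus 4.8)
The plan is to exploit Lemma~\ref{lem2}, which reduces the combinatorial structure of the minimum-influence game $\Gamma=(G,1_V,q,N)$ to the weighted game $[q;w_1,\dots,w_k]$, where the $w_i=|V(C_i)|$ are the sizes of the connected components of $G$. The key observation is that a team $X\seb N$ is successful if and only if the collection of components it touches has total vertex count at least $q$: since $1_V\equiv 1$, activating even a single node in a connected component $C_i$ spreads to all of $C_i$, so $|F(X)|=\sum_{i:\,X\cap V(C_i)\neq\emptyset} w_i$. Thus the whole analysis transfers to a \emph{weighted} game on $k$ items with weights $w_i$ and quota $q$, for which all five problems are known to be in \textsc{P} (\textsc{Length}, \textsc{Width} by \cite{Azi09}; \textsc{IsProper}, \textsc{IsStrong}, \textsc{IsDecisive} by the standard polynomial algorithms for weighted games that I would invoke). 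The subtlety is that this is a \emph{many-to-one} correspondence: each component $C_i$ contributes $n_i=|V(C_i)\cap N|$ distinct players, all behaving identically, so one winning configuration of the weighted game corresponds to many player-coalitions in $\Gamma$.

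First I would compute the connected components of $G$ in polynomial time and record the pairs $(w_i,n_i)$. For \textsc{Length}, the minimum size of a successful team picks, for each chosen component, just one player (the cheapest choice costs one player per touched component), so $\mbox{\sc Length}(\Gamma)$ equals the minimum number of items whose weights sum to at least $q$; sorting the $w_i$ in decreasing order and greedily accumulating gives this value in polynomial time. For \textsc{Width}, the maximum unsuccessful team takes \emph{all} players of the chosen components (to maximize cardinality) while keeping the touched weight below $q$; so $\mbox{\sc Width}(\Gamma)=\max\{\sum_{i\in S} n_i \mid S\seb\{1,\dots,k\},\ \sum_{i\in S} w_i<q\}$. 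This is a single knapsack-style maximization, but because $q\le n+1$ is bounded by $|V|$ and the weights $w_i$ are polynomially bounded, a standard dynamic program over the attainable weight-sums $0,1,\dots,q-1$ runs in polynomial time; I would fill a table indexed by weight-budget and components processed, storing the best achievable player-count.

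For the three property problems I would give characterizations in terms of the weighted game and the multiplicities $n_i$. A team and its complement partition the players, but at the level of components the relevant quantity is whether a component is touched on the $X$ side, the $N\setminus X$ side, or split; here the multiplicity matters, because a component with $n_i\ge 2$ players can be touched by \emph{both} $X$ and $N\setminus X$ simultaneously. Thus $\Gamma$ is \emph{proper} (no two disjoint successful teams) iff one cannot partition-or-split the components into two touched-collections each of weight $\ge q$; \emph{strong} iff one cannot split them into two collections each of weight $<q$; and \emph{decisive} iff both hold. Each of these is a feasibility question about partitioning the items $\{1,\dots,k\}$ under the weight function subject to the $\sum w_i=W$ total, and I would decide them via the same dynamic programming over attainable weight-sums up to $q$, being careful that components with $n_i\ge2$ may be assigned to both sides. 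The main obstacle I anticipate is precisely this multiplicity bookkeeping: translating properness/strongness/decisiveness, which are defined over player-coalitions of $\Gamma$, into correct partition conditions on the reduced weighted game while accounting for shared (split) components, and verifying that the resulting feasibility tests are genuinely the right ones rather than the naive conditions one would write for a simple weighted game with unit multiplicities.
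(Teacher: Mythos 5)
Your proposal is correct and follows essentially the same route as the paper: reduce to connected components via Lemma~\ref{lem2}, solve \textsc{Length} by greedy accumulation of the largest component weights, solve \textsc{Width} as a knapsack with weights and values bounded by $n$, and decide \textsc{IsProper}/\textsc{IsStrong}/\textsc{IsDecisive} through subset-sum feasibility over component weights, with the same key observation that only components containing at least two players can be touched simultaneously by a team and its complement. The paper simply makes your ``multiplicity bookkeeping'' concrete: for properness it assigns every multi-player component to both sides (its sets $A$ and $B$) and runs the knapsack only over the single-player components, and for strongness it notes the adversary never splits a component, reducing to a pure partition test via $\alpha_{max}$ --- exactly the feasibility tests your plan describes.
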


\begin{proof}
Let $\Gamma=(G,1_V,q,N)$ be an  unweighted influence game with minimum influence.

First we prove that {\sc Length} can be computed in polynomial time. 
Assume that the connected components of $G$ are sorted in such a way that 
 $w_1\geq \dots\geq w_k$. To minimize the size of a winning coalition we consider only those coalitions with at most one player in a connected component.
Observe that, the $\textsc{Length}(\Gamma)$ is the minimum $j$ for which $\sum_{i=1}^j w_i \geq q$ but $\sum_{i=1}^{j-1} w_i < q$.  Of course this value can be computed in polynomial time.

For computing \textsc{Width} observe that an unsuccessful team of maximum size  can be obtained by computing a selection  $S\subseteq \{1, \dots,k\}$ of connected components in such a way that $\sum_{i\in S} w_i < q$ and  $\sum_{i\in S} n_i$ is maximized.  Computing such selection is equivalent to solving a \textsc{Knapsack} problem on a set of $k$ items, item $i$ having weight $w_i$ and value $n_i$, and setting the knapsack capacity to $q$.   As the \textsc{Knapsack} problem can be solved in pseudo polynomial time and, in our case, all the weights and values are at most $n$, we conclude that  {\sc Width} can be computed in polynomial time.

Now we prove that the {\sc IsStrong} problem belongs to {\sc P}. Observe that in order to minimize the influence of the complementary of a team $X$ it is enough to consider only those  teams $X$ that contain all or none of the players in a connected component. 
Let $w_N=\sum_{i=1}^k w_i$, and let $\alpha_{max}$ be the maximum $\alpha\in \{0,\dots,q-1\}$ for which there is a set $S\seb\{1,\ldots,k\}$ with $\sum_{i\in S} w_i = \alpha$.
Note that $\alpha$ can be zero and thus $S$ can be the empty set. Observe that $\Gamma$  is strong if and only if $w_N-\alpha_{max} \geq q$.
The value $\alpha_{max}$ can be computed by solving several instances of the \textsc{Knapsack} problem.  As the weights are at most $n$, the value can be obtained in polynomial time.

Now we prove that the {\sc IsProper} problem can be solved in polynomial time.
To check whether the game is not proper it is enough to show that  there is a winning coalition whose complement is also winning. 
 For doing so we separate the connected components in two sets: those containing one player and those containing more that one player. Let $A=\{i\mid n_i=1\}$ and $B=  \{i\mid n_i>1\}$. Let $N_A=\cup_{i\in A} (N \cap V(C_i))$ and $N_B=N\setminus N_A$.  Let $w_A=\sum_{i\in A} w_i$ and $w_B=w_N-w_A$.
As all the components in $B$ have at least two vertices, we can find a set $X\subseteq N_B$ such that $|F(X)|=|F(N_B\setminus X)|=w_B$. Thus if $w_B\geq q$ the game is not proper. When $w_B<q$ the game is proper if and only if the influence game $\Gamma'$  played on the graph formed by the connected components belonging to  $A$ and quota $q'=q-w_B$ is proper.  Observe that $\Gamma'$ is  equivalent to the weighted game with a player for each component in $i\in A$ with associated weight $w_i$ and quota $q'$. 

Let $\alpha_{min}$ be the minimum $\alpha\in\{q',\dots,w_A\}$ for which there is a set $S\seb A$ with $\sum_{i\in S} w_i = \alpha$.
  Observe that $\Gamma'$ is proper if and only if $w_A-\alpha_{min} < q'$. 
The value $\alpha_{min}$ can be computed by solving several instances of the \textsc{Knapsack} problem having item  weights polynomial in $n$.  Therefore, $\alpha_{min}$ can be computed in  polynomial time and the claim follows.
\end{proof}

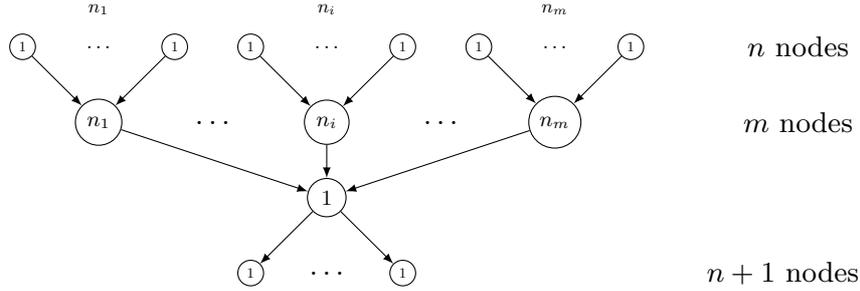
\begin{figure}[t]
\centering
\begin{tikzpicture}[every node/.style={circle,scale=0.6}, >=latex]
\node[draw](d)   at (0,3)  {1};
\node[scale=1.3] at (1,3.5){{$n_1$}};
\node[scale=1.4] at (1,3){{$\ldots$}};
\node[draw](f)   at (2,3)  {1};
\node[draw](g)   at (3,3)  {1};
\node[scale=1.3] at (4,3.5){{$n_i$}};
\node[scale=1.4] at (4,3){{$\ldots$}};
\node[draw](i)   at (5,3)  {1};
\node[draw](j)   at (6,3)  {1};
\node[scale=1.3] at (7,3.5){{$n_m$}};
\node[scale=1.4] at (7,3){{$\ldots$}};
\node[draw](l)   at (8,3)  {1};

\node[draw,scale=1.5](a) at (1,2)  {$n_1$};
\node[scale=2]  	 at (2.5,2){$\ldots$};
\node[draw,scale=1.5](b) at (4,2)  {$n_i$};
\node[scale=2]   	 at (5.5,2){$\ldots$};
\node[draw,scale=1.5](c) at (7,2)  {$n_m$};

\node[draw,scale=1.5](m) at (4,1) {$1$};

\node[draw](q)   at (3,0)  {1};
\node[draw](s)   at (5,0)  {1};
\node[scale=2]   at (4,0)  {$\ldots$};

\draw[<-] (a) to node {}(d);
\draw[<-] (a) to node {}(f);
\draw[<-] (b) to node {}(g);
\draw[<-] (b) to node {}(i);
\draw[<-] (c) to node {}(j);
\draw[<-] (c) to node {}(l);
\draw[->] (a) to node {}(m);
\draw[->] (b) to node {}(m);
\draw[->] (c) to node {}(m);
\draw[->] (m) to node {}(q);
\draw[->] (m) to node {}(s);

\node[scale=2] at (10.2,3){\small{$n$ nodes}};
\node[scale=2] at (10.2,2){\small{$m$ nodes}};
\node[scale=2] at (10,0){\small{$n+1$ nodes}};
\end{tikzpicture}
\vspace{-5ex}\caption{The simple game whose set of players $N=\{1,\ldots,n\}$ admits a partition $N_1,\ldots,N_m$ in such a way that $\cW=\{S\subseteq{}N\,;\,\exists{}N_i\mbox{\ with\ }N_i\subseteq{}S\}$ has exponential dimension, $n_1\cdot\ldots\cdot{}n_{m-1}$~\cite{FP01}, but this game admits a polynomial unweighted influence graph $(G,f)$ with respect to $n$ for the corresponding unweighted influence game $(G,f,n+1,N)$.\label{fig:SGExpDim}}
\end{figure}

\section{Conclusions}

In this paper we have considered an influence spread model in social networks as a mechanism for emergence of cooperation. Our model is based on the linear threshold model for influence spread. In such a context we use parameters and properties derived from voting systems (simple games) to analyze the properties of the system.
This point of view has given rise to the introduction of a new family of simple games, called {\em influence games}. We have shown that influence games are enough expressive to represent the complete family of simple games. Interestingly enough we have shown that representations, as influence games, of the union or intersection of two influence games can be computed in polynomial time. We have also shown that influence games in which the number of agents is polynomial in the number of players is a proper subset of simple games.

The remaining contribution of this paper  concerns the computational complexity of problems related to several parameters and properties. Our results are summarized in Table~\ref{tab:results}.
As a side result we have been able to show that computing the width of a game given by the set of winning or minimal winning coalitions can be computed in polynomial time, which was posted as an open problem in~\cite{Azi09}.
Finally, we have analyzed two extreme cases for the required level of influence, maximum and minimum, showing that the computational complexity of some of the considered problems changes, becoming in  general more tractable. In particular we have shown a graph characterization of the proper and strong properties in terms of graph properties for influence graphs with maximum influence and maximum spread.

We have studied the isomorphism and equivalence problems for influence games. It remains open to show whether the {\sc Iso} problem is or not $\Sigma_2^p$-complete. The equivalence problem has been studied for weighted and multiple weighted game~\cite{EGGW08} (see Table~\ref{tab:results}). To the best of our knowledge, the complexity of the \textsc{Iso} problem remains open  for weighted games as well as  for any of the families of simple games defined by  boolean combinations of  weighted games considered in this paper.

There are many open lines for future research, here we mention a few. On the general topic of spreading influence, it will be interesting to analyze the properties of the influence games defined through other influence spread mechanisms, in particular for randomized models like the independent cascade model ~\cite{KKT03}.
Besides the extreme situations previously mentioned, there are many other natural rules to study, such as the majority rule, when individuals are convinced when a majority of their neighbors are. Also, analyzing the properties of particular types of graphs arising in social networks or other type of organizations in which influence spread can take part in the decision process.

Simple games are defined by monotonic families that can be defined succinctly by monotonic functions. Therefore, there are several  circuit or graph based formalisms that can be used to  represent simple games. Among them the \emph{binary decision diagrams} (BDDs)  has been used as an alternative representation of  simple games~\cite{Bol11}. This succinct form of representation is most commonly used to represent Boolean functions~\cite{Ake78}. BDDs have been used to study several properties of simple games, regular games and weighted games~\cite{BB12}. However, except for some results regarding the computation of power indices, for subfamilies of simple games~\cite{Bol11}, the complexity of most of the problems studied in this paper remain as an open problem.

Lastly, even though we have shown that all games with polynomial dimension or polynomial codimension can be represented as weighted influence games in polynomial time (i.e., they admit weighted influence graphs with polynomial number of agents),
a fundamental open question is determining which simple games can be represented as an (unweighted) influence game with polynomial number of agents. In particular it remains open to know whether there are games with exponential dimension that also require an exponential number of players in any representation as influence games. 
In this line, we know that the simple game with exponential dimension with respect to the players of Section 2 in~\cite{FP01} can be represented by a unweighted influence game in polynomial time with respect to the number of players (see Figure~\ref{fig:SGExpDim}). 
Another candidate is the simple game with exponential dimension of Theorem 8 in~\cite{EGGW08} for which  we have been unable to show whether it can be represented by a (unweighted) influence game with polynomial number of agents or not.

\section*{Acknowledgments} We thank S.~Kurz for pointing out the proof of Theorem~\ref{teo:polyIG}.

\bibliographystyle{plain}
\bibliography{Bib}

\end{document}